\newif\ifdescdefetc
\newif\iftospringer
\newif\ifnoappendix
\newif\ifforprint
\definecolor{darkgreen}{rgb}{0.0,0.33,0.0}
\def\svnversion{1316}
\def\ourtitle{%
Extracting Unsatisfiable Cores for LTL via Temporal Resolution%
}
\newcommand{\mydefinition}[1]{\ifdescdefetc\begin{definition}[{#1}]\else\begin{definition}\fi}
\newcommand{\mytheorem}[4]{%
\newcommand{#1}[1]{%
\ifthenelse{\equal{##1}{true}}{\newcounter{cnt:#2}\setcounter{cnt:#2}{\value{theorem}}}{\newcounter{cnt:#2:backup}\setcounter{cnt:#2:backup}{\value{theorem}}\setcounter{theorem}{\value{cnt:#2}}}%
\ifdescdefetc\begin{theorem}[#3]\else\begin{theorem}\fi%
\ifthenelse{\equal{##1}{true}}{\label{#2}}{}%
{#4}%
\end{theorem}%
\ifthenelse{\equal{##1}{true}}{}{\setcounter{theorem}{\value{cnt:#2:backup}}}%
}%
}
\newcommand{\myproposition}[4]{%
\newcommand{#1}[1]{%
\ifthenelse{\equal{##1}{true}}{\newcounter{cnt:#2}\setcounter{cnt:#2}{\value{proposition}}}{\newcounter{cnt:#2:backup}\setcounter{cnt:#2:backup}{\value{proposition}}\setcounter{proposition}{\value{cnt:#2}}}%
\ifdescdefetc\begin{proposition}[#3]\else\begin{proposition}\fi%
\ifthenelse{\equal{##1}{true}}{\label{#2}}{}%
{#4}%
\end{proposition}%
\ifthenelse{\equal{##1}{true}}{}{\setcounter{proposition}{\value{cnt:#2:backup}}}%
}%
}
\newcommand{\mylemma}[4]{%
\newcommand{#1}[1]{%
\ifthenelse{\equal{##1}{true}}{\newcounter{cnt:#2}\setcounter{cnt:#2}{\value{lemma}}}{\newcounter{cnt:#2:backup}\setcounter{cnt:#2:backup}{\value{lemma}}\setcounter{lemma}{\value{cnt:#2}}}%
\ifdescdefetc\begin{lemma}[#3]\else\begin{lemma}\fi%
\ifthenelse{\equal{##1}{true}}{\label{#2}}{}%
{#4}%
\end{lemma}%
\ifthenelse{\equal{##1}{true}}{}{\setcounter{lemma}{\value{cnt:#2:backup}}}%
}%
}
\newcommand{\myremark}[4]{%
\newcommand{#1}[1]{%
\ifthenelse{\equal{##1}{true}}{\newcounter{cnt:#2}\setcounter{cnt:#2}{\value{remark}}}{\newcounter{cnt:#2:backup}\setcounter{cnt:#2:backup}{\value{remark}}\setcounter{remark}{\value{cnt:#2}}}%
\ifdescdefetc\begin{remark}[#3]\else\begin{remark}\fi%
\ifthenelse{\equal{##1}{true}}{\label{#2}}{}%
{#4}%
\end{remark}%
\ifthenelse{\equal{##1}{true}}{}{\setcounter{remark}{\value{cnt:#2:backup}}}%
}%
}
\newcommand{\mycorollary}[4]{%
\newcommand{#1}[1]{%
\ifthenelse{\equal{##1}{true}}{\newcounter{cnt:#2}\setcounter{cnt:#2}{\value{corollary}}}{\newcounter{cnt:#2:backup}\setcounter{cnt:#2:backup}{\value{corollary}}\setcounter{corollary}{\value{cnt:#2}}}%
\ifdescdefetc\begin{corollary}[#3]\else\begin{corollary}\fi%
\ifthenelse{\equal{##1}{true}}{\label{#2}}{}%
{#4}%
\end{corollary}%
\ifthenelse{\equal{##1}{true}}{}{\setcounter{corollary}{\value{cnt:#2:backup}}}%
}%
}
\newcommand{\vstodo}[1]%
{{\color{red} \bf \tt<VS todo>}{#1}{\color{red} \bf \tt</VS todo>}}
\newcommand{\uc}{UC\xspace}
\newcommand{\ucs}{UCs\xspace}
\newcommand{\UC}{UC\xspace}
\newcommand{\UCs}{UCs\xspace}
\newcommand{\tr}{TR\xspace}
\newcommand{\TR}{TR\xspace}
\newcommand{\result}[1]{{\emph{#1}}\xspace}
\newcommand{\sat}{\result{sat}}
\newcommand{\unsat}{\result{unsat}}
\newcommand{\Unsat}{\result{Unsat}}
\newcommand{\proofimplies}{\ensuremath{\Rightarrow}}
\newcommand{\proofbiimplies}{\ensuremath{\Leftrightarrow}}
\newcommand{\definedas}{\ensuremath{\equiv}}
\newcommand{\mkpair}[2]{\ensuremath{({#1},{#2})}}
\newcommand{\mkpowerset}[1]{\ensuremath{2^{#1}}}
\newcommand{\fcardinality}[1]{\ensuremath{|{#1}|}}
\newcommand{\bigo}{\ensuremath{\mathcal{O}}}
\newcommand{\fbigo}[1]{\ensuremath{\bigo({#1})}}
\newcommand{\allaps}{\ensuremath{\mathit{AP}}}
\newcommand{\ap}{\ensuremath{\mathit{p}}}
\newcommand{\app}{\ensuremath{\mathit{q}}}
\newcommand{\appp}{\ensuremath{\mathit{r}}}
\newcommand{\apppp}{\ensuremath{\mathit{s}}}
\newcommand{\apres}{\ensuremath{\mathit{l}}}
\newcommand{\apresp}{\ensuremath{\mathit{l'}}}
\newcommand{\apreszero}{\ensuremath{\mathit{l_0}}}
\newcommand{\apresone}{\ensuremath{\mathit{l_1}}}
\newcommand{\aprestwo}{\ensuremath{\mathit{l_2}}}
\newcommand{\apresthree}{\ensuremath{\mathit{l_3}}}
\newcommand{\apresfour}{\ensuremath{\mathit{l_4}}}
\newcommand{\apresfive}{\ensuremath{\mathit{l_5}}}
\newcommand{\dap}{\ensuremath{\mathit{P}}}
\newcommand{\dapp}{\ensuremath{\mathit{Q}}}
\newcommand{\dappp}{\ensuremath{\mathit{R}}}
\newcommand{\dapppp}{\ensuremath{\mathit{S}}}
\newcommand{\allbools}{\ensuremath{\mathds{B}}}
\newcommand{\false}{\ensuremath{\textsc{false}}}
\newcommand{\true}{\ensuremath{\textsc{true}}}
\newcommand{\iword}{\ensuremath{\pi}}
\newcommand{\fiwordgetpos}[2]{\ensuremath{{#1}[{#2}]}}
\newcommand{\allnats}{\ensuremath{\mathds{N}}}
\newcommand{\pos}{\ensuremath{{i}}}
\newcommand{\posp}{\ensuremath{{i'}}}
\newcommand{\pospp}{\ensuremath{{i''}}}
\newcommand{\posppp}{\ensuremath{{i'''}}}
\newcommand{\maxind}{\ensuremath{{n}}}
\newcommand{\maxindp}{\ensuremath{{n'}}}
\newcommand{\maxindpp}{\ensuremath{{n''}}}
\newcommand{\maxindppp}{\ensuremath{{n'''}}}
\newcommand{\inp}{\ensuremath{\mathit{\phi}}}
\newcommand{\inpp}{\ensuremath{\mathit{\phi'}}}
\newcommand{\inppp}{\ensuremath{\mathit{\phi''}}}
\newcommand{\inpuc}{\ensuremath{\mathit{\phi^{uc}}}}
\newcommand{\prt}{\ensuremath{\mathit{\psi}}}
\newcommand{\prtp}{\ensuremath{\mathit{\psi'}}}
\newcommand{\positivepolarity}{\ensuremath{+}}
\newcommand{\negativepolarity}{\ensuremath{-}}
\newcommand{\fbnotname}{\ensuremath{\neg}}
\newcommand{\fbnot}[1]{\ensuremath{\fbnotname{#1}}}
\newcommand{\fborname}{\ensuremath{\vee}}
\newcommand{\fbor}[2]{\ensuremath{{#1}\fborname{#2}}}
\newcommand{\fbandname}{\ensuremath{\wedge}}
\newcommand{\fband}[2]{\ensuremath{{#1}\fbandname{#2}}}
\newcommand{\fbimpliesname}{\ensuremath{\rightarrow}}
\newcommand{\fbimplies}[2]{\ensuremath{{#1}\fbimpliesname{#2}}}
\newcommand{\fnextname}{\ensuremath{{\bf X}}}
\newcommand{\fnext}[1]{\ensuremath{\fnextname{#1}}}
\newcommand{\ffinallyname}{\ensuremath{{\bf F}}}
\newcommand{\ffinally}[1]{\ensuremath{\ffinallyname{#1}}}
\newcommand{\fgloballyname}{\ensuremath{{\bf G}}}
\newcommand{\fglobally}[1]{\ensuremath{\fgloballyname{#1}}}
\newcommand{\funtilname}{\ensuremath{{\bf U}}}
\newcommand{\funtil}[2]{\ensuremath{{#1}{\funtilname}{#2}}}
\newcommand{\freleasesname}{\ensuremath{{\bf R}}}
\newcommand{\freleases}[2]{\ensuremath{{#1}{\freleasesname}{#2}}}
\newcommand{\allclauses}{\ensuremath{\mathds{C}}}
\newcommand{\ficlause}[1]{\ensuremath{({#1})}}
\newcommand{\fgnxclause}[2]{\ensuremath{(\fglobally{(\fbor{{#1}}{\fnext{({#2})}})})}}
\newcommand{\fgnclause}[1]{\ensuremath{(\fglobally{({#1})})}}
\newcommand{\fgxclause}[1]{\ensuremath{(\fglobally{\fnext{({#1})}})}}
\newcommand{\fgneclause}[2]{\ensuremath{(\fglobally{(\fbor{{#1}}{\ffinally{{#2}}})})}}
\newcommand{\fgeclause}[1]{\ensuremath{(\fglobally{\ffinally{({#1})}})}}
\newcommand{\emptyclause}{\ensuremath{\Box}}
\newcommand{\clause}{\ensuremath{{c}}}
\newcommand{\clausep}{\ensuremath{{c'}}}
\newcommand{\clausepp}{\ensuremath{{c''}}}
\newcommand{\clausezero}{\ensuremath{{c_0}}}
\newcommand{\clauseone}{\ensuremath{{c_1}}}
\newcommand{\clausetwo}{\ensuremath{{c_2}}}
\newcommand{\clausethree}{\ensuremath{{c_3}}}
\newcommand{\clausefour}{\ensuremath{{c_4}}}
\newcommand{\clausefive}{\ensuremath{{c_5}}}
\newcommand{\clausesix}{\ensuremath{{c_6}}}
\newcommand{\setclauses}{\ensuremath{{C}}}
\newcommand{\setclausesp}{\ensuremath{{C'}}}
\newcommand{\setclausespp}{\ensuremath{{C''}}}
\newcommand{\setclausesuc}{\ensuremath{{C^{uc}}}}
\newcommand{\setclausesucp}{\ensuremath{{C^{uc'}}}}
\newcommand{\mainpartition}{\ensuremath{M}}
\newcommand{\mainpartitionp}{\ensuremath{M'}}
\newcommand{\looppartition}{\ensuremath{L}}
\newcommand{\looppartitionp}{\ensuremath{L'}}
\newcommand{\mainorlooppartition}{\ensuremath{M\!L}}
\newcommand{\refinferencerule}[1]{{\scriptsize\setlength\fboxsep{2pt}\fbox{{#1}}}\xspace}
\newcommand{\refrule}{\refinferencerule{rule}}
\newcommand{\refinitii}{\refinferencerule{init-ii}}
\newcommand{\refinitin}{\refinferencerule{init-in}}
\newcommand{\refstepnn}{\refinferencerule{step-nn}}
\newcommand{\refstepnx}{\refinferencerule{step-nx}}
\newcommand{\refstepxx}{\refinferencerule{step-xx}}
\newcommand{\refloopitinitx}{\refinferencerule{BFS-loop-it-init-x}}
\newcommand{\refloopitinitn}{\refinferencerule{BFS-loop-it-init-n}}
\newcommand{\refloopitinitc}{\refinferencerule{BFS-loop-it-init-c}}
\newcommand{\refloopitsub}{\refinferencerule{BFS-loop-it-sub}}
\newcommand{\refloopconclusionone}{\refinferencerule{BFS-loop-conclusion1}}
\newcommand{\refloopconclusiontwo}{\refinferencerule{BFS-loop-conclusion2}}
\newcommand{\refaugone}{\refinferencerule{aug1}}
\newcommand{\refaugtwo}{\refinferencerule{aug2}}
\newcommand{\premise}{\ensuremath{p}\xspace}
\newcommand{\conclusion}{\ensuremath{c}\xspace}
\newcommand{\fapwaitfor}[1]{\ensuremath{w{#1}}}
\newcommand{\tool}[1]{{\tt #1}\xspace}
\newcommand{\lwb}{\tool{LWB}}
\newcommand{\nusmv}{\tool{NuSMV}}
\newcommand{\pltl}{\tool{pltl}}
\newcommand{\trp}{\tool{TRP++}}
\newcommand{\tspass}{\tool{TSPASS}}
\newcommand{\pltlmup}{\tool{PLTL-MUP}}
\newcommand{\procmine}{\tool{procmine}}
\newcommand{\benchmark}[1]{{\bf #1}}
\newcommand{\algassignname}{\ensuremath{\leftarrow}}
\newcommand{\algassign}[2]{\ensuremath{{#1} \algassignname {#2}}}
\newcommand{\fdisjunion}[2]{\ensuremath{{{#1}}\uplus{{#2}}}}
\newcommand{\graph}{\ensuremath{{G}}}
\newcommand{\graphp}{\ensuremath{{G'}}}
\newcommand{\setvertices}{\ensuremath{V}}
\newcommand{\vertex}{\ensuremath{v}}
\newcommand{\vertexp}{\ensuremath{v'}}
\newcommand{\vertexpp}{\ensuremath{v''}}
\newcommand{\vertexzero}{\ensuremath{v_0}}
\newcommand{\vertexone}{\ensuremath{v_1}}
\newcommand{\vertextwo}{\ensuremath{v_2}}
\newcommand{\vertexthree}{\ensuremath{v_3}}
\newcommand{\vertexfour}{\ensuremath{v_4}}
\newcommand{\vertexfive}{\ensuremath{v_5}}
\newcommand{\vertexsix}{\ensuremath{v_6}}
\newcommand{\vertexseven}{\ensuremath{v_7}}
\newcommand{\setedges}{\ensuremath{E}}
\newcommand{\vertexlabelingname}{\ensuremath{L_V}}
\newcommand{\fvertexlabeling}[1]{\ensuremath{\vertexlabelingname({#1})}}
\newcommand{\partitioningv}{\ensuremath{\mathcal{P}^\setvertices}}
\newcommand{\mainpartitionv}{\ensuremath{M^\setvertices}}
\newcommand{\looppartitionv}{\ensuremath{L^\setvertices}}
\newcommand{\edge}{\ensuremath{e}}
\newcommand{\ppath}{\ensuremath{\pi}}
\newcommand{\ppathp}{\ensuremath{\pi'}}
\newcommand{\fdCNF}[1]{\ensuremath{\mathit{SNF}({#1})}}
\newcommand{\fdCNFaux}[1]{\ensuremath{\mathit{SNF}_\mathit{aux}({#1})}}
\newcommand{\alldCNFvars}{\ensuremath{X}}
\newcommand{\fdCNFvar}[1]{\ensuremath{{x}_{#1}}}
\newcommand{\fdCNFvarm}[1]{\ensuremath{\color{blue}\setlength\fboxsep{1pt}\fbox{${x}_{#1}$}}}
\newcommand{\dCNFvar}{\ensuremath{x}}
\newcommand{\dCNFvarp}{\ensuremath{x'}}
\newcommand{\dCNFconj}{\ensuremath{c}}
\newcommand{\fgroupname}{\ensuremath{group}}
\newcommand{\fgroup}[2]{\ensuremath{\fgroupname(\mkpair{{#1}}{{#2}})}}
\newcommand{\fgroupltlname}{\ensuremath{group_{LTL}}}
\newcommand{\fgroupltl}[2]{\ensuremath{\fgroupltlname(\mkpair{{#1}}{{#2}})}}
\newcommand{\alloccurrences}{\ensuremath{\mathcal{O}}}
\newcommand{\occurrence}{\ensuremath{o}}
\newcommand{\occurrencep}{\ensuremath{o'}}
\newcommand{\occurrencepp}{\ensuremath{o''}}
\newcommand{\occurrencezero}{\ensuremath{o_0}}
\newcommand{\occurrenceone}{\ensuremath{o_1}}
\newcommand{\occurrencetwo}{\ensuremath{o_2}}
\newcommand{\occurrencethree}{\ensuremath{o_3}}
\newcommand{\occurrencefour}{\ensuremath{o_4}}
\newcommand{\occurrencefive}{\ensuremath{o_5}}
\newcommand{\occurrencesix}{\ensuremath{o_6}}
\newcommand{\occurrenceseven}{\ensuremath{o_7}}
\newcommand{\graphgrouped}{\ensuremath{\graph_{\fgroupname}}}
\newcommand{\fmapname}{\ensuremath{f}}
\newcommand{\fmap}[1]{\ensuremath{\fmapname({#1})}}
\newcommand{\fmapltlname}{\ensuremath{f_{LTL}}}
\newcommand{\fmapltl}[1]{\ensuremath{\fmapltlname({#1})}}
\newcommand{\fsubstitutionname}{\ensuremath{subst}}
\newcommand{\fsubstitution}[1]{\ensuremath{\fsubstitutionname({#1})}}
\newcommand{\fsubstitutionltlname}{\ensuremath{subst_{LTL}}}
\newcommand{\fsubstitutionltl}[1]{\ensuremath{\fsubstitutionltlname({#1})}}
\newcommand{\fmapltlocctosnfoccsname}{\ensuremath{g}}
\newcommand{\setclausesucgrouped}{\ensuremath{{C^{uc}_{\fgroupname}}}}
\newcommand{\setclausesucgroupedltl}{\ensuremath{{C^{uc}_{\fgroupltlname}}}}
\newcommand{\inpucgrouped}{\ensuremath{\mathit{\phi^{uc}_{\fgroupname}}}}
\newcommand{\system}{\ensuremath{\zeta}}
\newcommand{\spec}{\ensuremath{\inp}}
\newcommand{\specp}{\ensuremath{\inpp}}
\newcommand{\setoccurrences}{\ensuremath{O}}
\newcommand{\setoccurrencesp}{\ensuremath{O'}}
\newcommand{\sysandnotspec}{\ensuremath{\mu}}
\newcommand{\sysandnotspecp}{\ensuremath{\mu'}}
\newclass{\FPSPACE}{FPSPACE}
\newcommand{\refformalcoreextraction}{App.~A of \cite{fullversion}\xspace}
\newcommand{\refmoreplots}{App.~B of \cite{fullversion}\xspace}
\newcommand{\refformalcoreextraction}{App.~\ref{full-sec:formal-coreextraction} of \cite{fullversion}\xspace}
\newcommand{\refmoreplots}{App.~\ref{full-sec:moreplots} of \cite{fullversion}\xspace}
\newcommand{\refformalcoreextraction}{App.~\ref{sec:formal-coreextraction}\xspace}
\newcommand{\refmoreplots}{App.~\ref{sec:moreplots}\xspace}
\newcommand{\refalgltlsattrp}{the algorithm in Fig.~\ref{fig:ltlsattrp}\xspace}
\journalname{Acta Informatica}
\begin{document}
\allowdisplaybreaks[4]

\title{\ourtitle\thanks{A preliminary version of this paper appeared in \cite{VSchuppan-TIME-2013}.}}

\ifnoappendix
\else
\subtitle{(full version; r\svnversion, May 23, 2015)}
\fi

\author{Viktor Schuppan}
\institute{\email{Viktor.Schuppan@gmx.de}\\
           URL: \url{http://www.schuppan.de/viktor/}}

\date{Received: / Accepted: }

\maketitle

\begin{abstract}
Unsatisfiable cores (\ucs) are a well established means for debugging in a declarative setting.
Still, there are few tools that perform automated extraction of \ucs for LTL.
Existing tools compute a \uc as an unsatisfiable subset of the set of top-level conjuncts of an LTL formula.
Using resolution graphs to extract \ucs is common in other domains such as SAT.
In this article we construct and optimize resolution graphs for temporal resolution as implemented in the temporal resolution-based solver \trp, and we use them to extract \ucs for propositional LTL.
The resulting \ucs are more fine-grained than the \ucs obtained from existing tools because \uc extraction also simplifies top-level conjuncts instead of treating them as atomic entities.
For example, given an unsatisfiable LTL formula of the form $\inp \definedas \fband{(\fglobally{\prt})}{\ffinally{\prtp}}$ existing tools return $\inp$ as a \uc irrespective of the complexity of $\prt$ and $\prtp$, whereas the approach presented in this article continues to remove parts not required for unsatisfiability inside $\prt$ and $\prtp$.
Our approach also identifies groups of occurrences of a proposition that do not interact in a proof of unsatisfiability.
We implement our approach in \trp.
Our experimental evaluation demonstrates that our approach
\begin{inparaenum}[(i)]
\item extracts \ucs that are often significantly smaller than the input formula with an acceptable overhead and
\item produces more fine-grained \ucs than competing tools while remaining at least competitive in terms of run time and memory usage.
\end{inparaenum}
The source code of our tool is publicly available.

\keywords{LTL \and unsatisfiable cores \and resolution graphs \and vacuity \and temporal resolution}
\end{abstract}

\section{Introduction}

\subsection{Motivation}

Debugging is an activity that many hardware and software developers spend a fair amount of time on.
When faced with some input that induces an undesired behavior it is typically suggested to minimize that failure-inducing input in order to simplify identification of the problem (e.g., \cite{AZellerRHildebrandt-TrSE-2002}).
Corresponding research has been performed, e.g., in linear programming (e.g., \cite{JChinneckEDravnieks-ORSAJournalOnComputing-1991}), constraint satisfaction (e.g., \cite{RBakkerFDikkerFTempelmanPWognum-IJCAI-1993}), compilers (e.g., \cite{DWhalley-TOPLAS-1994}), SAT (e.g., \cite{RBruniASassano-SAT-2001}), description logics (e.g., \cite{SSchlobachRCornet-IJCAI-2003}), declarative specifications (e.g., \cite{IShlyakhterRSeaterDJacksonMSridharanMTaghdiri-ASE-2003}), and LTL satisfiability (e.g., \cite{VSchuppan-SCP-2012}) and realizability (e.g., \cite{ACimattiMRoveriVSchuppanATchaltsev-VMCAI-2008}).

LTL \ifnoappendix\cite{EEmerson-HandbookOfTheoreticalComputerScience-1990} \else\cite{APnueli-FOCS-1977,EEmerson-HandbookOfTheoreticalComputerScience-1990} \fi and its relatives (e.g., \cite{CEisnerDFisman-2006,HKressGazitGFainekosGPappas-AdvancedRobotics-2008,MPesicWVanDerAalst-BPM-2006}) are important specification languages for reactive systems (e.g., \cite{CEisnerDFisman-2006,HKressGazitGFainekosGPappas-AdvancedRobotics-2008}) and for business processes (e.g., \cite{MPesicWVanDerAalst-BPM-2006}).
Experience in verification as well as in synthesis has lead to specifications themselves becoming objects of analysis.
Beer et al.~report \cite{IBeerSBenDavidCEisnerYRodeh-FMSD-2001} that in their experience ``[...] during the first formal verification runs of a new hardware design, typically 20 \% of formulas are found to be trivially valid, and that trivial validity always points to a real problem in either the design or its specification or environment.''.
In a work on LTL synthesis \cite{RBloemSGallerBJobstmannNPitermanAPnueliMWeiglhofer-COCV-2007} Bloem et al.~state that ``[...] writing a complete formal specification [...] was not trivial.'' and ``Although this approach removes the need for verification [...] the specification itself still needs to be validated.''.

Typically, a specification is expected to be satisfiable.
If it turns out to be unsatisfiable, finding a reason for unsatisfiability can help with the ensuing debugging.
Frequently, such a reason for unsatisfiability is taken to be a part of the unsatisfiable specification that is by itself unsatisfiable (e.g., \cite{VSchuppan-SCP-2012,RBakkerFDikkerFTempelmanPWognum-IJCAI-1993,JChinneckEDravnieks-ORSAJournalOnComputing-1991}); this is called an unsatisfiable core (\uc) (e.g., \cite{VSchuppan-SCP-2012,EGoldbergYNovikov-DATE-2003,LZhangSMalik-DATE-2003,HHoos-UBritishColumbiaTR-1999}).

Less simplistic ways to examine an LTL specification $\inp$ exist \cite{IPillSSempriniRCavadaMRoveriRBloemACimatti-DAC-2006}, and understanding their results also benefits from availability of \ucs.
First, one can ask whether a certain scenario $\inpp$, given as an LTL formula, is permitted by $\inp$, i.e., whether in a situation, in which the specification $\inp$ holds, the scenario $\inpp$ can occur.
That is the case iff $\fband{\inp}{\inpp}$ is satisfiable.
Second, one can check whether $\inp$ ensures a certain LTL property $\inppp$. $\inppp$ holds in $\inp$ iff $\fband{\inp}{\fbnot{\inppp}}$ is unsatisfiable.
In the first case, if the scenario turns out not to be permitted by the specification, a \uc can help to understand which parts of the specification and the scenario are responsible for that.
In the second case a \uc can show which parts of the specification imply the property.
Moreover, if there are parts of the property that are not part of the \uc, then those parts of the property could be strengthened without falsifying the property in the specification; i.e., the property is vacuously satisfied (e.g., \cite{IBeerSBenDavidCEisnerYRodeh-FMSD-2001,OKupfermanMVardi-STTT-2003,RArmoniLFixAFlaisherOGrumbergNPitermanATiemeyerMVardi-CAV-2003,AGurfinkelMChechik-TACAS-2004,JSimmondsJDaviesAGurfinkelMChechik-STTT-2010,DFismanOKupfermanSSheinvaldFaragyMVardi-HVC-2008,OKupferman-CONCUR-2006}).

\UCs are therefore an important part of design methods for embedded systems (e.g., \cite{IPillSSempriniRCavadaMRoveriRBloemACimatti-DAC-2006}) as well as for business processes (e.g., \cite{AAwadRGoreZHouJThomsonMWeidlich-InformationSystems-2012}).
Note that specifications of real world systems may be 100s of pages long (e.g., \cite{AChiappiniACimattiLMacchiORebolloMRoveriASusiSTonettaBVittorini-ICSE-2010}).
Hence, providing automated support for obtaining a \uc in case such a specification turns out to be unsatisfiable is crucial.

\subsection{Contributions}

\begin{itemize}
\item[{\bf Fine-Grained \UCs for LTL}]

\ifnoappendix
\else
\begin{sloppypar}
\fi
Despite the relevance of \ucs for LTL as outlined above interest in them has been somewhat limited (e.g., \cite{ACimattiMRoveriVSchuppanSTonetta-CAV-2007,VSchuppan-SCP-2012,AAwadRGoreZHouJThomsonMWeidlich-InformationSystems-2012,VSchuppan-QAPL-2013,RGoreJHuangTSergeantJThomson-Report-2013,FHantryMHacid-FLACOS-2011,FHantryLSaisMHacid-ElectronicColloquiumOnComputationalComplexity-2012}).
In particular, publicly available tools that automatically extract \ucs for propositional LTL are scarce.
We are aware of two such tools: \pltlmup\footnote{\url{http://www.timsergeant.com/pltl-mup/}} \cite{RGoreJHuangTSergeantJThomson-Report-2013} and \procmine\footnote{\url{http://users.cecs.anu.edu.au/~rpg/BusinessProcessModelling/procmine.zip}} \cite{AAwadRGoreZHouJThomsonMWeidlich-InformationSystems-2012}.
\ifnoappendix
\else
\end{sloppypar}
\fi

The \ucs produced by \pltlmup and \procmine are somewhat coarse-grained in the following sense.
Both tools take as input a set of LTL formulas $\inp$.\footnote{A set of LTL formulas $\inp$ is interpreted as the conjunction of all formulas in $\inp$.} 
If that set of LTL formulas is unsatisfiable, then they produce a subset $\inpuc \subseteq \inp$ that is still unsatisfiable.
However, they treat the LTL formulas that are the elements of $\inp$ as atomic entities; i.e., they do not analyze whether all subformulas of the LTL formulas that make up $\inpuc$ are required for unsatisfiability.

In this article we propose an approach that takes an LTL formula as input and determines for each node in the syntax tree whether the subformula rooted at that node is necessary for unsatisfiability.
Hence, if $\inp \equiv \{\fglobally{(\fband{\ap}{\prt})}, \ffinally{(\fband{(\fbnot{\ap})}{\prtp})}\}$, then \pltlmup and \procmine will return $\inp$ as \uc irrespective of the complexity of $\prt$ and $\prtp$.
Our approach takes $\fband{(\fglobally{(\fband{\ap}{\prt})})}{\ffinally{(\fband{(\fbnot{\ap})}{\prtp})}}$ as input and returns $\fband{(\fglobally{(\fband{\ap}{\true})})}{\ffinally{(\fband{(\fbnot{\ap})}{\true})}}$ as \uc.

\item[{\bf \UCs for LTL via Temporal Resolution}]

Extracting \ucs is often possible using any solver for the logic under consideration by weakening subformulas one by one and using the solver to test whether the weakened formula is still unsatisfiable (e.g., \cite{JMarquesSilva-JournalOnMultipleValuedLogicAndSoftComputing-2012}).
Although that is simple to implement, repeated testing for preservation of unsatisfiability may impose a significant run time burden.
A potential alternative are methods that extract \ucs by analyzing a single run of a solver.
It is interesting to investigate such methods because they might not only reduce the run time burden but could also reveal additional information on why a formula is unsatisfiable (see, e.g., Sec.~\ref{sec:groupedpropositionsucs} and \cite{VSchuppan-QAPL-2013}).
Extracting \ucs from resolution graphs is common in SAT (e.g., \ifnoappendix\cite{LZhang-PhDThesis-2003}\else\cite{LZhangSMalik-SAT-2003,LZhang-PhDThesis-2003}\fi).
A resolution method (e.g., \ifnoappendix\cite{LBachmairHGanzinger-HandbookOfAutomatedReasoning-2001}\else\cite{JRobinson-JACM-1965,LBachmairHGanzinger-HandbookOfAutomatedReasoning-2001}\fi) for LTL, temporal resolution (\tr), was suggested by Fisher \cite{MFisher-IJCAI-1991,MFisherCDixonMPeim-ACMTrComputationalLogic-2001} and implemented in \trp \cite{UHustadtBKonev-CollegiumLogicum-2004,UHustadtBKonev-CADE-2003}.
\trp is available as source code\footnote{\url{http://www.csc.liv.ac.uk/~konev/software/trp++/}}.

\TR lends itself as a basis for extracting \ucs for LTL for two reasons.
First, the \tr-based solver \trp proved to be competitive in a recent evaluation of solvers for LTL satisfiability, in particular on unsatisfiable instances (see pp.~51--55 of the full version of \cite{VSchuppanLDarmawan-ATVA-2011}).
Second, a \tr proof naturally induces a resolution graph, which provides a clean framework for extracting a \uc.
Among the other solvers evaluated in \cite{VSchuppanLDarmawan-ATVA-2011} we mention the BDD-based solver \nusmv \cite{ACimattiEClarkeEGiunchigliaFGiunchigliaMPistoreMRoveriRSebastianiATacchella-CAV-2002} and the tableau-based solvers \lwb \cite{AHeuerdingGJaegerSSchwendimannMSeyfried-TABLEAUX-1995} and \pltl\footnote{\url{http://users.cecs.anu.edu.au/~rpg/PLTLProvers/}}.
Although \nusmv also performed well on unsatisfiable instances in \cite{VSchuppanLDarmawan-ATVA-2011}, the BDD layer makes extraction of a \uc more involved than a \tr proof.
On the other hand, \lwb and \pltl provide access to a proof of unsatisfiability comparable to \tr, yet tended to perform worse than \trp on unsatisfiable instances in \cite{VSchuppanLDarmawan-ATVA-2011}.

In this article we show how to obtain a \uc from an execution of the \tr algorithm as implemented in \trp.
At the heart of our method is the construction of a resolution graph for \tr for propositional LTL; note that \tr is significantly more complex than propositional resolution.
We also show how to use the specifics of \tr in \trp to optimize the construction of the resolution graph.

\item[{\bf Interaction of Occurrences of Propositions in a \UC}]

\begin{sloppypar}
A resolution graph contains not only information on which parts of the input formula were used to derive unsatisfiability but also how these parts were used.
We therefore suggest to exploit the resolution graph to provide more detailed information on unsatisfiability.
In this article we use the resolution graph to point out which occurrences of atomic propositions interact in a \uc.
In a companion paper \cite{VSchuppan-QAPL-2013}, which this article provides the basis for, we use it to show which subformulas are relevant for unsatisfiability at which points in time.
\end{sloppypar}

\item[{\bf Mapping a \UC in Separated Normal Form to a \UC in LTL}]

A potential disadvantage of using \tr for extracting \ucs for LTL is the fact that \tr does not work directly on LTL but on a clausal normal form called Separated Normal Form (SNF) \cite{MFisher-IJCAI-1991,MFisherPNoel-UManchesterTR-1992,MFisherCDixonMPeim-ACMTrComputationalLogic-2001}.
Translations from an LTL formula into an equisatisfiable formula in a clausal normal form are well known both in temporal resolution (e.g., \cite{MFisher-IJCAI-1991,MFisherPNoel-UManchesterTR-1992,MFisherCDixonMPeim-ACMTrComputationalLogic-2001}) and in (symbolic) model checking (e.g., \cite{OLichtensteinAPnueli-POPL-1985,JBurchEClarkeKMcMillanDDillLHwang-IaC-1992,EClarkeOGrumbergKHamaguchi-FMSD-1997,YKestenAPnueliLRaviv-ICALP-1998,ABiereKHeljankoTJunttilaTLatvalaVSchuppan-LMCS-2006}).
However, for optimal support of a user who tries to track down the source of the unsatisfiability of a formula it is likely to be helpful if the \uc that is presented to her is ``syntactically close'' to the formula that she provided as an input to the solver.
Hence, it is necessary to map a \uc obtained in SNF back to LTL.
We show how to translate a \uc from SNF back to the LTL formula that the user provided as an input.

\item[{\bf Reductions Between \UCs for LTL and Mutual Vacuity}]

We discuss the relation between \ucs for LTL and mutual vacuity \cite{AGurfinkelMChechik-TACAS-2004}.
Specifically, we prove that the problem of mutual vacuity for a system and a specification described by an LTL formula is reducible to the problem of finding a \uc for an LTL formula and vice versa.

\item[{\bf Publicly Available Implementation}]

We implement our method in \trp.
We make the source code of our solver publicly available.

\item[{\bf Experimental Evaluation}]

Our experimental evaluation demonstrates that our approach
\begin{inparaenum}[(i)]
\item is viable in terms of the run time and memory overhead that it induces,
\item can significantly reduce the size of an unsatisfiable input formula, and
\item is competitive to alternative approaches, while it produces more fine-grained results.
\end{inparaenum}
\end{itemize}

\UCs also have applications in avoiding the exploration of parts of a search space that can be known not to contain a solution for reasons ``equivalent'' to the reasons for previous failures (e.g., \cite{EClarkeMTalupurHVeithDWang-SAT-2003,ACimattiMRoveriVSchuppanSTonetta-CAV-2007}) and in certifying the correctness of a result of unsatisfiability (e.g., \cite{AVanGelder-AMAI-2002,EGoldbergYNovikov-DATE-2003,LZhangSMalik-DATE-2003}).
These applications also benefit from our results.
Nevertheless, we only focus on debugging.

\subsection{Related Work}

As discussed above we are aware of two tools that compute \ucs for LTL: \pltlmup and \procmine; both produce \ucs that are less fine-grained than the ones obtained with our approach.
They also rely on different algorithms to determine the satisfiability of an LTL formula and to subsequently extract a \uc from an unsatisfiable formula.
\pltlmup, which appeared after an early version of this work \cite{VSchuppan-NFM-2013-full-arXiv} was made public, applies an approach to determine minimal \ucs by Huang for SAT \cite{JHuang-ASPDAC-2005} to a BDD-based solver for LTL.
\procmine extracts \ucs as part of a tool set for synthesizing business process templates.
It uses a tableau-based solver to obtain an initial subset of an unsatisfiable set of LTL formulas and then applies deletion-based minimization to that subset.
In principle also tools to determine mutual vacuity for LTL can be used to obtain \ucs for LTL.
However, none of the tools we considered turned out to be suitable for that task in practice.
For details see Sec.~\ref{sec:vacuity}.

In \cite{ACimattiMRoveriVSchuppanSTonetta-CAV-2007} Cimatti et al.~perform extraction of \ucs for PSL to accelerate a PSL satisfiability solver by performing Boolean abstraction. Their notion of \ucs is coarser than ours and their solver is based on BDDs and on SAT.
An investigation of notions of \ucs for LTL including the relation between \ucs and vacuity is performed by the author in \cite{VSchuppan-SCP-2012}. Various notions based on syntax trees, on conjunctive normal forms, on tableaux, and on SAT-based bounded model checking (e.g., \ifnoappendix\cite{ABiere-HandbookOfSatisfiability-2009}\else\cite{ABiereACimattiEClarkeYZhu-TACAS-1999,ABiere-HandbookOfSatisfiability-2009}\fi) are discussed. One of the translations from LTL into a conjunctive normal form is very similar to the one used in this article. It is accompanied by a translation back from the conjunctive normal form to an LTL formula, but it assumes a minimal \uc as its input. In some cases it provides more information than the translation from SNF back to LTL used in this article. For example, it points out that a positive polarity occurrence of an until formula can be replaced with a weak until formula. That feature is left as future work in this article. No implementation or experimental results are reported, and \tr is not considered.
Hantry et al.~suggest a method to extract \ucs for LTL in a tableau-based solver \cite{FHantryMHacid-FLACOS-2011}. No implementation or experiments are reported.
In \cite{FHantryLSaisMHacid-ElectronicColloquiumOnComputationalComplexity-2012} the decision and search problems for minimal \ucs for LTL are shown to be \PSPACE- and \FPSPACE-complete, respectively.
In \cite{ACimattiSMoverSTonetta-FMCAD-2011} Cimatti et al.~show how to prove and explain unfeasibility of message sequence charts for networks of hybrid automata. They consider a different specification language and use an SMT-based (e.g., \cite{CBarrettRSebastianiSSeshiaCTinelli-HandbookOfSatisfiability-2009}) algorithm.

Some work deals with unrealizable rather than unsatisfiable cores. \cite{ACimattiMRoveriVSchuppanATchaltsev-VMCAI-2008} handles specifications in GR(1), which is a proper subset of LTL. K\"{o}nighofer et al.~present methods to help debugging unrealizable specifications by extracting unrealizable cores and simulating counterstrategies \cite{RKoenighoferGHofferekRBloem-FMCAD-2009} as well as performing error localization using model-based diagnosis \cite{RKoenighoferGHofferekRBloem-HVC-2010}. Raman and Kress-Gazit \cite{VRamanHKressGazit-CAV-2011} present a tool that points out unrealizable cores in the context of robot control. \cite{VSchuppan-SCP-2012} explores more fine-grained notions of unrealizable cores than \cite{ACimattiMRoveriVSchuppanATchaltsev-VMCAI-2008,RKoenighoferGHofferekRBloem-FMCAD-2009}.

In vacuity Simmonds et al.~\cite{JSimmondsJDaviesAGurfinkelMChechik-STTT-2010} use SAT-based bounded model checking for vacuity detection.
They only consider $k$-step vacuity, i.e., taking into account bounded model checking runs up to a bound $k$, and leave the problem of removing the bound $k$ open.
They use a similar scheme as described in Sec.~\ref{sec:groupedpropositionsucs} on proofs of unsatisfiability of propositional formulas to determine whether some proposition is vacuous.
They only distinguish whether in a proof of unsatisfiability of a bounded model checking instance any occurrence of a given proposition in the specification interacts with any occurrence of that proposition in the system or not.
Armoni et al.~\cite{RArmoniLFixAFlaisherOGrumbergNPitermanATiemeyerMVardi-CAV-2003} discuss vacuity of an LTL specification in different polarity occurrences of subformulas.
For a more extensive discussion on the relation between vacuity and \ucs for LTL we refer to Sec.~\ref{sec:vacuity} and to \cite{VSchuppan-SCP-2012}.

Many algorithmic considerations are shared between \ucs for LTL and \ucs for other logics.
Here we mention some work that can serve as starting point for studying \ucs for other logics as well as work that contains ideas that might be applicable also to \ucs for LTL.
Extensive research has been performed on \ucs for SAT.
For a brief overview of early algorithms see, e.g., \cite{ANadel-PhDThesis-2009}, pp.~68--70.
Surveys of different aspects are, e.g., \cite{JMarquesSilva-JournalOnMultipleValuedLogicAndSoftComputing-2012, JMarquesSilvaMJanota-arXiv_1402_3011-2014, HKleineBueningOKullmann-HandbookOfSatisfiability-2009}.
Early work on \ucs for SMT was performed by Cimatti et al.~in \cite{ACimattiAGriggioRSebastiani-JournalOfArtificialIntelligenceResearch-2011}.
Pointers to work on modifying proofs in order to obtain smaller proofs and/or different interpolants (for some references see, e.g., \cite{VDSilvaDKroeningMPurandareGWeissenbacher-VMCAI-2010}) in the propositional or SMT cases can be found in \cite{SRolliniRBruttomessoNSharyginaATsitovich-FMSD-2014}.
Some of this work traces which occurrences of literals were resolved with each other (e.g., \cite{HAmjad-AVoCS-2006}).
Belov and Marques-Silva \cite{ABelovJMarquesSilva-SAT-2011} and Shlyakhter \cite{IShlyakhter-PhDThesis-2005} investigate \ucs for circuits, i.e., formulas with potential sharing of subformulas rather than sets of clauses.
\UCs obtained from a SAT solver are also used to support debugging in the declarative modeling language Alloy \cite{IShlyakhterRSeaterDJacksonMSridharanMTaghdiri-ASE-2003,ETorlakFShengHoChangDJackson-FM-2008}.
This requires translation from Alloy into an input for a SAT solver and back \cite{IShlyakhterRSeaterDJacksonMSridharanMTaghdiri-ASE-2003}; while the description in \cite{IShlyakhterRSeaterDJacksonMSridharanMTaghdiri-ASE-2003} is somewhat abstract, some basic ideas regarding the translation from Alloy into the input for the SAT solver and back are similar to ideas used in our translation.
Also in their case a minimal \uc obtained from the SAT solver does not guarantee a minimal \uc in Alloy.
\cite{ETorlakFShengHoChangDJackson-FM-2008} evaluates schemes to increase the efficiency of the \uc extraction method.
In description logics (e.g., \cite{FBaaderDCalvaneseDMcGuinnessDNardiPPatelSchneider-2007}) \ucs are frequently used to support knowledge engineers in performing their tasks.
In particular, understanding, debugging, and possibly repairing entailments in an ontology can be aided by providing the user with subsets of axioms of an ontology that justify a given entailment.
Some work also considers parts of axioms, leading to more fine-grained results.
For an overview see, e.g., \cite{MHorridge-PhDThesis-2011}.

\subsection{Structure of the Article}

We start in Sec.~\ref{sec:examples} by providing examples that illustrate how \ucs are useful for debugging.
In Sec.~\ref{sec:preliminaries} the more formal exposition begins with preliminaries.
In Sec.~\ref{sec:ucextraction} we describe the construction and optimization of a resolution graph and its use to obtain a \uc.
In Sec.~\ref{sec:postprocessingucs} we adapt two methods to post-process the \ucs obtained to make them more useful.
The relation between \ucs for LTL and mutual vacuity is discussed in Sec.~\ref{sec:vacuity}.
We present our implementation and experimental evaluation in Sec.~\ref{sec:experimentalevaluation}.
In Sec.~\ref{sec:conclusions} we draw conclusions.

\ifnoappendix
Due to space constraints a few more involved parts of a proof as well as some more detailed data from our experimental evaluation are omitted. Both are included in the full version \cite{fullversion} of this article, which can be obtained from \url{http://www.schuppan.de/viktor/actainformatica15/} along with implementation, examples, and log files.
\else
Due to space constraints a few more involved parts of a proof as well as some more detailed data from our experimental evaluation are omitted in the main part; these can be found in the appendices. Our implementation, examples, and log files can be obtained from \url{http://www.schuppan.de/viktor/actainformatica15/}.
\fi

\section{Motivating Examples}
\label{sec:examples}

In this section we present examples of using \ucs for LTL to help understand why a specification given in LTL is unsatisfiable.
As we formally introduce LTL only in Sec.~\ref{sec:ltl}, those unfamiliar with LTL are asked to jump ahead to Sec.~\ref{sec:ltl} first.
We start with a toy example and then proceed to a more realistic one.
Except for minor rewriting, all \ucs in this section were obtained with our implementation.

\subsection{Toy Example}

The first example \eqref{ex:demov1:1}--\eqref{ex:demov1:3} is based on \cite{BJobstmannRBloem-FMCAD-2006}.
\eqref{ex:demov1:1} requires a $req$ (request) to be followed by three $gnt$s (grant).
In contrast, \eqref{ex:demov1:2} forbids two subsequent $gnt$s.
\eqref{ex:demov1:3} states that from the time point after a $cancel$ no $gnt$ may be issued until a $go$ is received.
We would like to see whether a $req$ can eventually be issued \eqref{ex:demov1:4}.
\begin{subequations}\label{ex:demov1}
\begin{align}
&{(\fglobally{(\fbimplies{req}{(\fband{(\fnext{gnt})}{\fband{(\fnext{\fnext{gnt}})}{\fnext{\fnext{\fnext{gnt}}}}})})})} \label{ex:demov1:1} \\
\fbandname\; &{(\fglobally{(\fbimplies{gnt}{\fnext{\fbnot{gnt}}})})} \label{ex:demov1:2} \\
\fbandname\; &{(\fglobally{(\fbimplies{cancel}{\fnext{(\funtil{(\fbnot{gnt})}{go})}})})} \label{ex:demov1:3} \\
\fbandname\; &{\ffinally{req}} \label{ex:demov1:4}
\end{align}

Clearly, \eqref{ex:demov1} is unsatisfiable.
If a $req$ were issued in \eqref{ex:demov1:4}, \eqref{ex:demov1:1} would trigger three subsequent $gnt$s.
However, already the second of those $gnt$s would be forbidden by \eqref{ex:demov1:2}.
Note that in this reasoning neither the third $gnt$ in \eqref{ex:demov1:1} nor \eqref{ex:demov1:3} play a role.
Hence, \eqref{ex:demov1} would be unsatisfiable even if $\fnext{\fnext{\fnext{gnt}}}$ and $(\fglobally{(\fbimplies{cancel}{\fnext{(\funtil{(\fbnot{gnt})}{go})}})})$ were ``removed''.
The \uc in \eqref{ex:demov1:c} does just that by replacing these two subformulas with $\true$ and simplifying.
Note that in \eqref{ex:demov1:c} not only a whole top-level conjunct has been removed from \eqref{ex:demov1} but also a proper subformula inside a top-level conjunct.
\end{subequations}
\begin{equation}\label{ex:demov1:c}
\fband{{(\fglobally{(\fbimplies{req}{(\fband{(\fnext{gnt})}{\fnext{\fnext{gnt}}})})})}}{\fband{{(\fglobally{(\fbimplies{gnt}{\fnext{\fbnot{gnt}}})})}}{{\ffinally{req}}}}
\end{equation}

\subsection{Lift Specification}

The second example \eqref{ex:lift} in Fig.~\ref{fig:ex:lift} is adapted from a lift specification in \cite{AHarding-PhDThesis-2005}.
The lift has two floors, indicated by $f_0$ and $f_1$.
On each floor there is a button to call the lift ($b_0$, $b_1$). $sb$ is $\true$ if some button is pressed.
If the lift moves up, then $up$ must be $\true$; if it moves down, then $up$ must be $\false$.
$u$ switches turns between actions by users of the lift ($u$ is $\true$) and actions by the lift ($u$ is $\false$).
For more details we refer to \cite{AHarding-PhDThesis-2005}.
\begin{figure}
\centering
\begin{subequations}\label{ex:lift}
\begin{align}
&\fband{(\fbnot{u})}{\fband{f_0}{\fband{(\fbnot{b_0})}{\fband{(\fbnot{b_1})}{(\fbnot{up})}}}} \\
\fbandname\; &(\fglobally{(\fband{(\fbimplies{u}{\fbnot{\fnext{u}}})}{(\fbimplies{(\fbnot{\fnext{u}})}{u})})}) \\
\fbandname\; &(\fglobally{(\fbimplies{f_0}{\fbnot{f_1}})}) \\
\fbandname\; &(\fglobally{(\fband{(\fbimplies{f_0}{\fnext{(\fbor{f_0}{f_1})}})}{(\fbimplies{f_1}{\fnext{(\fbor{f_0}{f_1})}})})}) \\
\fbandname\; &(\fglobally{(\fbimplies{u}{(\fband{\fband{(\fbimplies{f_0}{\fnext{f_0}})}{(\fbimplies{(\fnext{f_0})}{f_0})}}{\fband{(\fbimplies{f_1}{\fnext{f_1}})}{(\fbimplies{(\fnext{f_1})}{f_1})}})})}) \\
\fbandname\; &(\fglobally{(\fbimplies{(\fbnot{u})}{(\fband{\fband{(\fbimplies{b_0}{\fnext{b_0}})}{(\fbimplies{(\fnext{b_0})}{b_0})}}{\fband{(\fbimplies{b_1}{\fnext{b_1}})}{(\fbimplies{(\fnext{b_1})}{b_1})}})})})\label{ex:lift:6} \\
\fbandname\; &(\fglobally{(\fband{(\fbimplies{(\fband{b_0}{\fbnot{f_0}})}{\fnext{b_0}})}{(\fbimplies{(\fband{b_1}{\fbnot{f_1}})}{\fnext{b_1}})})}) \\
\fbandname\; &(\fglobally{(\fbimplies{(\fband{f_0}{\fnext{f_0}})}{(\fband{(\fbimplies{up}{\fnext{up}})}{(\fbimplies{(\fnext{up})}{up})})})}) \\
\fbandname\; &(\fglobally{(\fbimplies{(\fband{f_1}{\fnext{f_1}})}{(\fband{(\fbimplies{up}{\fnext{up}})}{(\fbimplies{(\fnext{up})}{up})})})}) \\
\fbandname\; &(\fglobally{(\fband{(\fbimplies{(\fband{f_0}{\fnext{f_1}})}{up})}{(\fbimplies{(\fband{f_1}{\fnext{f_0}})}{\fbnot{up}})})}) \\
\fbandname\; &(\fglobally{(\fband{(\fbimplies{sb}{(\fbor{b_0}{b_1})})}{(\fbimplies{(\fbor{b_0}{b_1})}{sb})})}) \\
\fbandname\; &(\fglobally{(\fbimplies{(\fband{f_0}{\fbnot{sb}})}{(\funtil{f_0}{(\freleases{sb}{(\fband{(\ffinally{f_0})}{\fbnot{up}})})})})}) \\
\fbandname\; &(\fglobally{(\fbimplies{(\fband{f_1}{\fbnot{sb}})}{(\funtil{f_1}{(\freleases{sb}{(\fband{(\ffinally{f_0})}{\fbnot{up}})})})})}) \\
\fbandname\; &(\fglobally{(\fband{(\fbimplies{b_0}{\ffinally{f_0}})}{(\fbimplies{b_1}{\ffinally{f_1}})})})
\end{align}
\end{subequations}
\begin{minipage}{0.24\linewidth}\caption{\label{fig:ex:lift} A lift specification}\end{minipage}
\end{figure}

We first assume that an engineer is interested in seeing whether it is possible that $b_1$ is pressed at time point 0 \eqref{ex:lift:s1}.
As the \uc \eqref{ex:lift:c1} shows, this is impossible because $b_1$ must be $\false$ at the beginning.
\eqref{ex:lift:c1} was obtained by conjoining \eqref{ex:lift} with \eqref{ex:lift:s1}, determining unsatisfiability, replacing all top-level conjuncts except for $\fbnot{b_1}$ and $b_1$ with $\true$, and simplifying.
\begin{equation}\label{ex:lift:s1}
b_1
\end{equation}
\begin{equation}\label{ex:lift:c1}
\fband{(\fbnot{b_1})}{b_1}
\end{equation}

Now the engineer modifies her query such that $b_1$ is pressed at time point 1 \eqref{ex:lift:s2}.
As shown by the \uc in \eqref{ex:lift:c2} that turns out to be impossible, too.
Note that while in the previous scenario \eqref{ex:lift:s1} unsatisfiability was not too hard to see even without \uc extraction, in the current scenario \eqref{ex:lift:s2} \uc extraction already is quite helpful in localizing which parts of \eqref{ex:lift} and \eqref{ex:lift:s2} are responsible for unsatisfiability.
The \uc in \eqref{ex:lift:c2} was obtained by conjoining \eqref{ex:lift} with \eqref{ex:lift:s2}, determining unsatisfiability, replacing all top-level conjuncts except for $\fbnot{u}$, $\fbnot{b_1}$, \eqref{ex:lift:6}, and $\fnext{b_1}$ with $\true$, replacing $\fbimplies{b_0}{\fnext{b_0}}$, $\fbimplies{(\fnext{b_0})}{b_0}$, $\fbimplies{b_1}{\fnext{b_1}}$ in \eqref{ex:lift:6} with $\true$, and simplifying.
As in the toy example in the previous subsection not only top-level conjuncts but also proper subformulas inside a top-level conjunct have been simplified.
\begin{equation}\label{ex:lift:s2}
\fnext{b_1}
\end{equation}
\begin{equation}\label{ex:lift:c2}
\fband{(\fbnot{u})}{\fband{(\fbnot{b_1})}{\fband{(\fglobally{(\fbimplies{(\fbnot{u})}{(\fbimplies{(\fnext{b_1})}{b_1})})})}{\fnext{b_1}}}}
\end{equation}

The engineer now tries to have $b_1$ pressed at time point 2 and, again, obtains a \uc.
She becomes suspicious and checks whether $b_1$ can be pressed at all \eqref{ex:lift:s3}.
The unsatisfiability of the conjunction of \eqref{ex:lift} and \eqref{ex:lift:s3} tell her that $b_1$ cannot be pressed at all and, therefore, this specification of a lift must contain a bug.
She can now use the \uc in \eqref{ex:lift:c3} to track down the problem.
This example clearly illustrates the use of \ucs for debugging as \eqref{ex:lift:c3:1}--\eqref{ex:lift:c3:6} is significantly smaller than \eqref{ex:lift}.
\begin{equation}\label{ex:lift:s3}
\ffinally{b_1}
\end{equation}
\begin{samepage}
\begin{subequations}\label{ex:lift:c3}
\begin{align}
&\fband{f_0}{\fband{(\fbnot{b_1})}{(\fbnot{up})}}\label{ex:lift:c3:1} \\
\fbandname\; &(\fglobally{(\fbimplies{f_0}{\fbnot{f_1}})}) \\
\fbandname\; &(\fglobally{(\fbimplies{f_0}{\fnext{(\fbor{f_0}{f_1})}})}) \\
\fbandname\; &(\fglobally{(\fbimplies{(\fband{f_0}{\fnext{f_0}})}{(\fbimplies{(\fnext{up})}{up})})}) \\
\fbandname\; &(\fglobally{(\fbimplies{(\fband{f_0}{\fnext{f_1}})}{up})})\\
\fbandname\; &(\fglobally{(\fbimplies{b_1}{\ffinally{f_1}})})\label{ex:lift:c3:6} \\
\fbandname\; &\ffinally{b_1}
\end{align}
\end{subequations}
\end{samepage}

\section{Preliminaries}
\label{sec:preliminaries}

In this section we first present LTL (Sec.~\ref{sec:ltl}).
We continue with SNF, the clausal normal form for LTL that the temporal resolution algorithm works on, (Sec.~\ref{sec:separatednormalform}) and with a translation from LTL into SNF (Sec.~\ref{sec:translatingltlintosnf}).
Finally, in Sec.~\ref{sec:tr} we discuss the temporal resolution algorithm that our method for \uc extraction in Sec.~\ref{sec:ucextraction} is based on.

\subsection{LTL}
\label{sec:ltl}

We use a standard version of LTL, see, e.g., \cite{EEmerson-HandbookOfTheoreticalComputerScience-1990}. Let $\allbools$ be the set of Booleans, and let $\allaps$ be a finite set of atomic propositions. The set of \emph{LTL formulas} is constructed inductively as follows. The Boolean constants $\false$, $\true$ $\in \allbools$ and any atomic proposition $\ap \in \allaps$ are LTL formulas. If $\prt$, $\prtp$ are LTL formulas, so are $\fbnot{\prt}$ (not), $\fbor{\prt}{\prtp}$ (or), $\fband{\prt}{\prtp}$ (and), $\fnext{\prt}$ (next time), $\funtil{\prt}{\prtp}$ (until), $\freleases{\prt}{\prtp}$ (releases), $\ffinally{\prt}$ (finally), and $\fglobally{\prt}$ (globally). We use $\fbimplies{\prt}{\prtp}$ (implies) as an abbreviation for $\fbor{(\fbnot{\prt})}{\prtp}$.
An occurrence of a subformula $\prt$ of an LTL formula $\inp$ has \emph{positive polarity} ($\positivepolarity$) if it appears under an even number of negations in $\inp$ and \emph{negative polarity} ($\negativepolarity$) otherwise.
The \emph{size} of an LTL formula $\inp$ is measured as the sum of the numbers of occurrences of atomic propositions, Boolean operators, and temporal operators in $\inp$.

LTL is interpreted over words in $(\mkpowerset{\allaps})^\omega$.
For the semantics of LTL see Fig.~\ref{fig:ltlsemantics}. 
A word $\iword \in (\mkpowerset{\allaps})^\omega$ \emph{satisfies} an LTL formula $\inp$ iff $\mkpair{\iword}{0} \models \inp$.
A word $\iword$ that satisfies $\inp$ is also called a \emph{satisfying assignment} for $\inp$.
An LTL formula $\inp$ is \emph{satisfiable} if there exists a word $\iword \in (\mkpowerset{\allaps})^\omega$ that satisfies $\inp$; otherwise, it is \emph{unsatisfiable}.
The problem of determining the satisfiability of an LTL formula is \PSPACE-complete \cite{ASistlaEClarke-JACM-1985,JHalpernJReif-TheoreticalComputerScience-1983}, even if the set of atomic propositions $\allaps$ only contains one element \cite{SDemriPSchnoebelen-InformationAndComputation-2002}.
\begin{figure}
\centering
{
\begin{tabular}{lcl}
$\mkpair{\iword}{\pos} \models \true$ & &
\\
$\mkpair{\iword}{\pos} \not \models \false$ & &
\\
$\mkpair{\iword}{\pos} \models \ap$ & $\proofbiimplies$ & $\ap \in \fiwordgetpos{\iword}{\pos}$
\\
$\mkpair{\iword}{\pos} \models \fbnot{\prt}$ & $\proofbiimplies$ & $\mkpair{\iword}{\pos} \not \models \prt$
\\
$\mkpair{\iword}{\pos} \models \fbor{\prt}{\prtp}$ & $\proofbiimplies$ & $\mkpair{\iword}{\pos} \models \prt \mbox{ or } \mkpair{\iword}{\pos} \models \prtp$
\\
$\mkpair{\iword}{\pos} \models \fband{\prt}{\prtp}$ & $\proofbiimplies$ & $\mkpair{\iword}{\pos} \models \prt \mbox{ and } \mkpair{\iword}{\pos} \models \prtp$
\\
$\mkpair{\iword}{\pos} \models \fnext{\prt}$ & $\proofbiimplies$ & $\mkpair{\iword}{\pos + 1} \models \prt$
\\
$\mkpair{\iword}{\pos} \models \funtil{\prt}{\prtp}$ & $\proofbiimplies$ & $\exists \posp \ge \pos \;.\; (\fband{\mkpair{\iword}{\posp} \models \prtp}{\forall \pos \le \pospp < \posp \;.\; \mkpair{\iword}{\pospp} \models \prt})$
\\
$\mkpair{\iword}{\pos} \models \freleases{\prt}{\prtp}$ & $\proofbiimplies$ & $\forall \posp \ge \pos \;.\; (\fbor{\mkpair{\iword}{\posp} \models \prtp}{\exists \pos \le \pospp < \posp \;.\; \mkpair{\iword}{\pospp} \models \prt})$
\\
$\mkpair{\iword}{\pos} \models \ffinally{\prt}$ & $\proofbiimplies$ & $\exists \posp \ge \pos \;.\; \mkpair{\iword}{\posp} \models \prt$
\\
$\mkpair{\iword}{\pos} \models \fglobally{\prt}$ & $\proofbiimplies$ & $\forall \posp \ge \pos \;.\; \mkpair{\iword}{\posp} \models \prt$
\\
\end{tabular}
}
\begin{minipage}{0.66\linewidth}\caption{\label{fig:ltlsemantics} Semantics of LTL. $\iword$ is a word in $(\mkpowerset{\allaps})^\omega$, $\pos$ is a time point in $\allnats$.}\end{minipage}
\end{figure}

\subsection{Separated Normal Form}
\label{sec:separatednormalform}

Temporal resolution works on formulas in a clausal normal form called Separated Normal Form (SNF) \cite{MFisher-IJCAI-1991,MFisherPNoel-UManchesterTR-1992,MFisherCDixonMPeim-ACMTrComputationalLogic-2001}.
For any atomic proposition $\ap \in \allaps$ $\ap$ and $\fbnot{\ap}$ are \emph{literals}.
Let $\ap_1, \ldots, \ap_\maxind$, $\app_1, \ldots, \app_\maxindp$, $\apres$ with $0 \le \maxind, \maxindp$ be literals such that $\forall 1 \le \pos < \posp \le \maxind \;.\; \ap_\pos \ne \ap_\posp$ and $\forall 1 \le \pos < \posp \le \maxindp \;.\; \app_\pos \ne \app_\posp$. Then
\begin{inparaenum}[(i)]
\item $\ficlause{\fbor{\ap_1}{\fbor{\ldots}{\ap_\maxind}}}$ is an \emph{initial clause};
\item $(\fgloballyname(\fbor{\ap_1}{\fbor{\ldots}{\ap_\maxind}} \fborname \fnext{(\fbor{\app_1}{\fbor{\ldots}{\app_\maxindp}})}))$ is a \emph{global clause}; and
\item $\fgneclause{\fbor{\ap_1}{\fbor{\ldots}{\ap_\maxind}}}{\apres}$ is an \emph{eventuality clause}.
\end{inparaenum}
$\apres$ is called an \emph{eventuality literal}.
As usual an empty disjunction (resp.~conjunction) stands for $\false$ (resp.~\true).
${\ficlause{}}$ or ${\fgnclause{}}$, denoted $\emptyclause$, stand for $\false$ or $\fglobally{(\false)}$ and are called \emph{empty clause}.
The set of all SNF clauses is denoted $\allclauses$.
Let $\clause_1, \ldots, \clause_\maxind$ with $0 \le \maxind$ be SNF clauses. Then $\bigwedge_{1 \le \pos \le \maxind} \clause_\pos$ is an LTL formula in \emph{SNF}.
Every LTL formula $\inp$ can be transformed into an equisatisfiable formula $\inpp$ in SNF \cite{MFisherCDixonMPeim-ACMTrComputationalLogic-2001}.

\subsection{Translating LTL into SNF}
\label{sec:translatingltlintosnf}

We use a structure-preserving translation (e.g., \cite{DPlaistedSGreenbaum-JSymbComput-1986}) to translate an LTL formula into a set of SNF clauses.
Our translation is based on the tableau construction for LTL that is often used in (symbolic) model checking (see, e.g., \cite{OLichtensteinAPnueli-POPL-1985,JBurchEClarkeKMcMillanDDillLHwang-IaC-1992,EClarkeOGrumbergKHamaguchi-FMSD-1997,YKestenAPnueliLRaviv-ICALP-1998,ABiereKHeljankoTJunttilaTLatvalaVSchuppan-LMCS-2006}) rather than on \cite{MFisherCDixonMPeim-ACMTrComputationalLogic-2001} as we find the former to be more straightforward.

\mydefinition{Translation from LTL into SNF}\label{def:ltltosnf}
Let $\inp$ be an LTL formula over atomic propositions $\allaps$, and let $\alldCNFvars = \{\dCNFvar, \dCNFvarp, \ldots\}$ be a set of fresh atomic propositions that don't occur in $\inp$. Assign to each occurrence of a subformula $\prt$ in $\inp$ a Boolean value or a proposition according to column 2 of Tab.~\ref{tab:ltltosnf}, which is used to reference $\prt$ in the SNF clauses for its superformula. Moreover, assign to each occurrence of $\prt$ a set of SNF clauses according to columns 3 and 4 of Tab.~\ref{tab:ltltosnf}. Let $\fdCNFaux{\inp}$ be the set of all SNF clauses obtained from $\inp$ that way. Then the \emph{SNF of $\inp$} is defined as $\fdCNF{\inp} \definedas \fband{\fdCNFvar{\inp}}{\bigwedge_{\dCNFconj \in \fdCNFaux{\inp}} \dCNFconj}$.

\begin{table}
\centering
\begin{minipage}{0.37\linewidth}\caption{\label{tab:ltltosnf}Translation from LTL into SNF}\end{minipage}
{
\begin{tabular}{|l|l|l|l|}
\hline
\hline
Subformula & Proposition & Polarity & SNF Clauses \\
\hline
\hline
$\true$/$\false$/$\ap$ & $\true$/$\false$/$\ap$ & \positivepolarity/\negativepolarity & \mbox{none} \\
\hline
$\fbnot{\prt}$ & $\fdCNFvar{\fbnot{\prt}}$ & \positivepolarity & $\fgnclause{\fbimplies{\fdCNFvar{\fbnot{\prt}}}{\fbnot{\fdCNFvarm{\prt}}}}$ \\
\cline{3-4}
& & \negativepolarity & $\fgnclause{\fbimplies{(\fbnot{\fdCNFvar{\fbnot{\prt}}})}{\fdCNFvarm{\prt}}}$ \\
\hline
$\fbor{\prt}{\prtp}$ & $\fdCNFvar{\fbor{\prt}{\prtp}}$ & \positivepolarity & $\fgnclause{\fbimplies{\fdCNFvar{\fbor{\prt}{\prtp}}}{(\fbor{\fdCNFvarm{\prt}}{\fdCNFvarm{\prtp}})}}$ \\
\cline{3-4}
& & \negativepolarity & $\fgnclause{\fbimplies{(\fbnot{\fdCNFvar{\fbor{\prt}{\prtp}}})}{\fbnot{\fdCNFvarm{\prt}}}}$, \\
& & & $\fgnclause{\fbimplies{(\fbnot{\fdCNFvar{\fbor{\prt}{\prtp}}})}{\fbnot{\fdCNFvarm{\prtp}}}}$ \\
\hline
$\fband{\prt}{\prtp}$  & $\fdCNFvar{\fband{\prt}{\prtp}}$ & \positivepolarity & $\fgnclause{\fbimplies{\fdCNFvar{\fband{\prt}{\prtp}}}{\fdCNFvarm{\prt}}}$, \\
& & & $\fgnclause{\fbimplies{\fdCNFvar{\fband{\prt}{\prtp}}}{\fdCNFvarm{\prtp}}}$ \\
\cline{3-4}
& & \negativepolarity & $(\fglobally{(\fbimplies{(\fbnot{\fdCNFvar{\fband{\prt}{\prtp}}})}{(\fbor{(\fbnot{\fdCNFvarm{\prt}})}{\fbnot{\fdCNFvarm{\prtp}}})})})$ \\
\hline
$\fnext{\prt}$  & $\fdCNFvar{\fnext{\prt}}$ & \positivepolarity & $\fgnclause{\fbimplies{\fdCNFvar{\fnext{\prt}}}{\fnext{\fdCNFvarm{\prt}}}}$ \\
\cline{3-4}
& & \negativepolarity & $\fgnclause{\fbimplies{(\fbnot{\fdCNFvar{\fnext{\prt}}})}{\fnext{\fbnot{\fdCNFvarm{\prt}}}}}$ \\
\hline
$\funtil{\prt}{\prtp}$ & $\fdCNFvar{\funtil{\prt}{\prtp}}$ & \positivepolarity & $\fgnclause{\fbimplies{\fdCNFvar{\funtil{\prt}{\prtp}}}{(\fbor{\fdCNFvarm{\prtp}}{\fdCNFvarm{\prt}})}}$,\\
& & & $(\fglobally{(\fbimplies{\fdCNFvar{\funtil{\prt}{\prtp}}}{(\fbor{\fdCNFvarm{\prtp}}{\fnext{\fdCNFvar{\funtil{\prt}{\prtp}}}})})})$,\\
& & & $\fgnclause{\fbimplies{\fdCNFvar{\funtil{\prt}{\prtp}}}{\ffinally{\fdCNFvarm{\prtp}}}}$\\
\cline{3-4}
& & \negativepolarity & $\fgnclause{\fbimplies{(\fbnot{\fdCNFvar{\funtil{\prt}{\prtp}}})}{\fbnot{\fdCNFvarm{\prtp}}}}$,\\
& & & $(\fglobally{(\fbimplies{(\fbnot{\fdCNFvar{\funtil{\prt}{\prtp}}})}{(\fbor{(\fbnot{\fdCNFvarm{\prt}})}{\fnext{\fbnot{\fdCNFvar{\funtil{\prt}{\prtp}}}}})})})$\\
\hline
$\freleases{\prt}{\prtp}$ & $\fdCNFvar{\freleases{\prt}{\prtp}}$ & \positivepolarity & $\fgnclause{\fbimplies{\fdCNFvar{\freleases{\prt}{\prtp}}}{\fdCNFvarm{\prtp}}}$,\\
& & & $(\fglobally{(\fbimplies{\fdCNFvar{\freleases{\prt}{\prtp}}}{(\fbor{\fdCNFvarm{\prt}}{\fnext{\fdCNFvar{\freleases{\prt}{\prtp}}}})})})$\\
\cline{3-4}
& & \negativepolarity & $(\fglobally{(\fbimplies{(\fbnot{\fdCNFvar{\freleases{\prt}{\prtp}}})}{(\fbor{(\fbnot{\fdCNFvarm{\prtp}})}{\fbnot{\fdCNFvarm{\prt}}})})})$,\\
& & & $(\fglobally{(\fbimplies{(\fbnot{\fdCNFvar{\freleases{\prt}{\prtp}}})}{(\fbor{(\fbnot{\fdCNFvarm{\prtp}})}{\fnext{\fbnot{\fdCNFvar{\freleases{\prt}{\prtp}}}}})})})$,\\
& & & $\fgnclause{\fbimplies{(\fbnot{\fdCNFvar{\freleases{\prt}{\prtp}}})}{\ffinally{\fbnot{\fdCNFvarm{\prtp}}}}}$\\
\hline
$\ffinally{\prt}$ & $\fdCNFvar{\ffinally{\prt}}$ & \positivepolarity & $\fgnclause{\fbimplies{\fdCNFvar{\ffinally{\prt}}}{\ffinally{\fdCNFvarm{\prt}}}}$ \\
\cline{3-4}
& & \negativepolarity & $\fgnclause{\fbimplies{(\fbnot{\fdCNFvar{\ffinally{\prt}}})}{\fnext{\fbnot{\fdCNFvar{\ffinally{\prt}}}}}}$,\\
& & & $\fgnclause{\fbimplies{(\fbnot{\fdCNFvar{\ffinally{\prt}}})}{\fbnot{\fdCNFvarm{\prt}}}}$ \\
\hline
$\fglobally{\prt}$  & $\fdCNFvar{\fglobally{\prt}}$ & \positivepolarity & $\fgnclause{\fbimplies{\fdCNFvar{\fglobally{\prt}}}{\fnext{\fdCNFvar{\fglobally{\prt}}}}}$,\\
& & & $\fgnclause{\fbimplies{\fdCNFvar{\fglobally{\prt}}}{\fdCNFvarm{\prt}}}$\\
\cline{3-4}
& & \negativepolarity & $\fgnclause{\fbimplies{(\fbnot{\fdCNFvar{\fglobally{\prt}}})}{\ffinally{\fbnot{\fdCNFvarm{\prt}}}}}$ \\
\hline
\hline
\end{tabular}
}
\end{table}
\end{definition}

Note that to make the SNF clauses in column 4 of Tab.~\ref{tab:ltltosnf} and elsewhere in this article easier to understand we often use implication to formulate them.
However, in \trp SNF clauses cannot contain implications and, therefore, in our implementation of Def.~\ref{def:ltltosnf} implication is expanded using its definition.
The fact that some propositions are marked {\color{blue}\setlength\fboxsep{1pt}\fbox{blue boxed}} in Tab.~\ref{tab:ltltosnf} will be used later in Sec.~\ref{sec:extractingaucinltl} when translating a \uc back from SNF to LTL.
It is well known that $\inp$ and $\fdCNF{\inp}$ are equisatisfiable and that a satisfying assignment for $\inp$ (resp.~$\fdCNF{\inp}$) can be extended (resp.~restricted) to a satisfying assignment for $\fdCNF{\inp}$ (resp.~$\inp$).
Below we sometimes identify the SNF of $\inp$, $\fdCNF{\inp}$, with the set of SNF clauses $\{\fdCNFvar{\inp}\} \cup \fdCNFaux{\inp}$ that $\fdCNF{\inp}$ is constructed from.

\myremark%
{\thmltltosnfcomplexity}%
{thm:ltltosnfcomplexity}%
{Complexity Considerations Regarding the Translation from LTL into SNF}%
{Let $\inp$ be an LTL formula over atomic propositions $\allaps$, and let $\fdCNF{\inp}$ be the SNF of $\inp$. It is easy to see that
\begin{inparaenum}[(i)]
\item the number of clauses in $\fdCNF{\inp}$ is linear in the size of $\inp$,
\item for each occurrence of a Boolean or temporal operator in $\inp$ one fresh atomic proposition $\dCNFvar, \dCNFvarp, \ldots$ is introduced in $\fdCNF{\inp}$ by the translation, and
\item the size of $\fdCNF{\inp}$ is linear in the size of $\inp$.
\end{inparaenum}
}%
\thmltltosnfcomplexity{true}

As an example we translate the formula $\inp$ shown in \eqref{ex:ltltosnf} into SNF.
\begin{equation}\label{ex:ltltosnf}
\inp \definedas \fband{(\fglobally{(\fband{p}{q})})}{\ffinally{\fbnot{p}}}
\end{equation}
The SNF of $\inp$, $\fdCNF{\inp}$, is given in \eqref{ex:ltltosnf:snf}.
$\fdCNFvar{\fband{(\fglobally{(\fband{p}{q})})}{\ffinally{\fbnot{p}}}}$ represents $\fband{(\fglobally{(\fband{p}{q})})}{\ffinally{\fbnot{p}}}$ in the sense that if in a satisfying assignment for $\fdCNF{\inp}$ $\fdCNFvar{\fband{(\fglobally{(\fband{p}{q})})}{\ffinally{\fbnot{p}}}}$ is $\true$ at time point $\pos$, then $\fband{(\fglobally{(\fband{p}{q})})}{\ffinally{\fbnot{p}}}$ is $\true$ at time point $\pos$ on that assignment.
The same holds for the other fresh atomic propositions $\fdCNFvar{\fglobally{(\fband{p}{q})}}$, $\fdCNFvar{\fband{p}{q}}$, $\fdCNFvar{\ffinally{\fbnot{p}}}$, and $\fdCNFvar{\fbnot{p}}$ in $\fdCNF{\inp}$.\footnote{Note that in our example fresh atomic propositions $\dCNFvar, \dCNFvarp, \ldots$ in $\fdCNF{\inp}$ only represent positive polarity occurrences of subformulas. If there were a negative polarity occurrence of some subformula $\prt$ represented by some fresh atomic proposition $\fdCNFvar{\prt}$, then $\fdCNFvar{\prt}$ being $\false$ at time point $\pos$ in a satisfying assignment for $\inp$ would imply $\prt$ being $\false$ at time point $\pos$ on that assignment.}
Together $(\fdCNFvar{\fband{(\fglobally{(\fband{p}{q})})}{\ffinally{\fbnot{p}}}})$ and $(\fglobally{(\fbimplies{\fdCNFvar{\fband{(\fglobally{(\fband{p}{q})})}{\ffinally{\fbnot{p}}}}}{\fdCNFvar{\fglobally{(\fband{p}{q})}}})})$ force $\fdCNFvar{\fglobally{(\fband{p}{q})}}$ to be $\true$ at time point 0.
Similarly, $\fdCNFvar{\ffinally{\fbnot{p}}}$ is forced to be $\true$ at time point 0 via $(\fglobally{(\fbimplies{\fdCNFvar{\fband{(\fglobally{(\fband{p}{q})})}{\ffinally{\fbnot{p}}}}}{\fdCNFvar{\ffinally{\fbnot{p}}}})})$.
$\fglobally{(\fband{p}{q})}$ is translated into four clauses $(\fglobally{(\fbimplies{\fdCNFvar{\fglobally{(\fband{p}{q})}}}{\fnext{\fdCNFvar{\fglobally{(\fband{p}{q})}}}})})$, $(\fglobally{(\fbimplies{\fdCNFvar{\fglobally{(\fband{p}{q})}}}{\fdCNFvar{\fband{p}{q}}})})$, $(\fglobally{(\fbimplies{\fdCNFvar{\fband{p}{q}}}{p})})$, and $(\fglobally{(\fbimplies{\fdCNFvar{\fband{p}{q}}}{q})})$.
With $\fdCNFvar{\fglobally{(\fband{p}{q})}}$ being $\true$ at time point 0 the first of these four clauses makes $\fdCNFvar{\fglobally{(\fband{p}{q})}}$ $\true$ at all time points.
The remaining three clauses then make $p$ and $q$ $\true$ continuously.
Using the truth of $\fdCNFvar{\ffinally{\fbnot{p}}}$ at time point 0 the two last clauses $(\fglobally{(\fbimplies{\fdCNFvar{\ffinally{\fbnot{p}}}}{\ffinally{\fdCNFvar{\fbnot{p}}}})})$ and $(\fglobally{(\fbimplies{\fdCNFvar{\fbnot{p}}}{\fbnot{p}})})$ ensure that $\fbnot{p}$ becomes $\true$ eventually.
Last but not least notice that --- as \eqref{ex:ltltosnf} is obviously unsatisfiable --- there can be no such satisfying assignment for $\fdCNF{\inp}$.
\begin{equation}\label{ex:ltltosnf:snf}
\fdCNF{\inp} =
\begin{array}{l}
\hspace{-0.5em}\{(\fdCNFvar{\fband{(\fglobally{(\fband{p}{q})})}{\ffinally{\fbnot{p}}}}),\\
(\fglobally{(\fbimplies{\fdCNFvar{\fband{(\fglobally{(\fband{p}{q})})}{\ffinally{\fbnot{p}}}}}{\fdCNFvar{\fglobally{(\fband{p}{q})}}})}),\\
(\fglobally{(\fbimplies{\fdCNFvar{\fglobally{(\fband{p}{q})}}}{\fnext{\fdCNFvar{\fglobally{(\fband{p}{q})}}}})}),\\
(\fglobally{(\fbimplies{\fdCNFvar{\fglobally{(\fband{p}{q})}}}{\fdCNFvar{\fband{p}{q}}})}),\\
(\fglobally{(\fbimplies{\fdCNFvar{\fband{p}{q}}}{p})}),\\
(\fglobally{(\fbimplies{\fdCNFvar{\fband{p}{q}}}{q})}),\\
(\fglobally{(\fbimplies{\fdCNFvar{\fband{(\fglobally{(\fband{p}{q})})}{\ffinally{\fbnot{p}}}}}{\fdCNFvar{\ffinally{\fbnot{p}}}})}),\\
(\fglobally{(\fbimplies{\fdCNFvar{\ffinally{\fbnot{p}}}}{\ffinally{\fdCNFvar{\fbnot{p}}}})}),\\
(\fglobally{(\fbimplies{\fdCNFvar{\fbnot{p}}}{\fbnot{p}})})\}.
\end{array}
\end{equation}

\subsection{Temporal Resolution in \trp}
\label{sec:tr}

In this subsection we describe temporal resolution (\tr) \cite{MFisherCDixonMPeim-ACMTrComputationalLogic-2001} as implemented in \trp \cite{UHustadtBKonev-CADE-2003,UHustadtBKonev-CollegiumLogicum-2004}.
We provide a concise description of \tr in \trp as required for the purposes of this article (Sec.~\ref{sec:thetemporalresolutionalgorithmintrp}), followed by an example (Sec.~\ref{sec:tralgorithmexample}).
Temporal resolution has been developed since the early 1990s \cite{MFisher-IJCAI-1991}, and an extensive body of literature exists.
It is out of the scope of this article to provide a detailed introduction or a tutorial on the subject.
The following references are among the most suitable as an introduction to \tr as needed for this article: \cite{MFisherCDixonMPeim-ACMTrComputationalLogic-2001} provides a general overview of the method and is a good starting point, \cite{CDixon-ICTL-1997,CDixon-AnnalsOfMathematicsAndArtificialIntelligence-1998} explains BFS loop search as used in \trp, and \cite{UHustadtBKonev-CollegiumLogicum-2004} covers the implementation of \tr in \trp.
In \cite{VSchuppan-NFM-2013-full-arXiv} we provide some intuition on temporal resolution with a slant towards BDD-based symbolic model checking (e.g., \ifnoappendix\cite{EClarkeOGrumbergDPeled-2001}\else\cite{JBurchEClarkeKMcMillanDDillLHwang-IaC-1992,EClarkeOGrumbergDPeled-2001}\fi).

\subsubsection{The Temporal Resolution Algorithm in \trp}
\label{sec:thetemporalresolutionalgorithmintrp}

The production rules of \trp are shown in Tab.~\ref{tab:productionrules}. The first column assigns a name to a production rule. The second and fourth columns list the premises. The sixth column gives the conclusion. Columns 3, 5, and 7 are described below.

\begin{table}
\centering
\caption{\label{tab:productionrules}Production rules used in \trp. Let $\dap \definedas \fbor{\ap_1}{\fbor{\ldots}{\ap_\maxind}}$, $\dapp \definedas \fbor{\app_1}{\fbor{\ldots}{\app_\maxindp}}$, $\dappp \definedas \fbor{\appp_1}{\fbor{\ldots}{\appp_\maxindpp}}$, and $\dapppp \definedas \fbor{\apppp_1}{\fbor{\ldots}{\apppp_\maxindppp}}$.}
{
\begin{tabular}{||@{\hspace{0.29em}}l@{\hspace{0.29em}}||@{\hspace{0.29em}}l@{\hspace{0.29em}}|@{\hspace{0.29em}}c@{\hspace{0.29em}}|@{\hspace{0.29em}}l@{\hspace{0.29em}}|@{\hspace{0.29em}}c@{\hspace{0.29em}}|@{\hspace{0.29em}}l@{\hspace{0.29em}}|@{\hspace{0.29em}}c@{\hspace{0.29em}}||}
\hline
\hline
rule &
premise 1 &
part. &
premise 2 &
part. &
conclusion &
part.
\\
\hline
\hline
\multicolumn{7}{||c||}{saturation}
\\
\hline
init-ii &
\ficlause{\fbor{\dap}{\apres}} &
$\mainpartition$ &
\ficlause{\fbor{(\fbnot{\apres})}{\dapp}} &
$\mainpartition$ &
\ficlause{\fbor{\dap}{\dapp}} &
$\mainpartition$
\\
\hline
init-in &
\ficlause{\fbor{\dap}{\apres}} &
$\mainpartition$ &
\fgnclause{\fbor{(\fbnot{\apres})}{\dapp}} &
$\mainpartition$ &
\ficlause{\fbor{\dap}{\dapp}} &
$\mainpartition$
\\
\hline
step-nn &
\fgnclause{\fbor{\dap}{\apres}} &
$\mainpartition$ &
\fgnclause{\fbor{(\fbnot{\apres})}{\dapp}} &
$\mainpartition$ &
\fgnclause{\fbor{\dap}{\dapp}} &
$\mainpartition$
\\
\hline
step-nx &
\fgnclause{\fbor{\dap}{\apres}} &
$\mainpartition$ &
\fgnxclause{\dapp}{\fbor{(\fbnot{\apres})}{\dappp}} &
$\mainpartition$ &
\fgnxclause{\dapp}{\fbor{\dap}{\dappp}} &
$\mainpartition$
\\
\hline
step-xx &
\fgnxclause{\dap}{\fbor{\dapp}{\apres}} &
$\mainorlooppartition$ &
\fgnxclause{\dappp}{\fbor{(\fbnot{\apres})}{\dapppp}} &
$\mainorlooppartition$ &
\fgnxclause{\fbor{\dap}{\dappp}}{\fbor{\dapp}{\dapppp}} &
$\mainorlooppartition$
\\
\hline
\hline
\multicolumn{7}{||c||}{augmentation}
\\
\hline
aug1 &
\multicolumn{3}{@{\hspace{0.29em}}l@{\hspace{0.29em}}|@{\hspace{0.29em}}}{\fgneclause{\dap}{\apres}} &
$\mainpartition$ &
\fgnclause{\fbor{\dap}{\fbor{\apres}{\fapwaitfor{\apres}}}} &
$\mainpartition$
\\
\hline
aug2 &
\multicolumn{3}{@{\hspace{0.29em}}l@{\hspace{0.29em}}|@{\hspace{0.29em}}}{\fgneclause{\dap}{\apres}} &
$\mainpartition$ &
\fgnxclause{(\fbnot{\fapwaitfor{\apres}})}{\fbor{\apres}{\fapwaitfor{\apres}}} &
$\mainpartition$
\\
\hline
\hline
\multicolumn{7}{||c||}{BFS loop search}
\\
\hline
BFS-loop-it-init-x &
\multicolumn{3}{@{\hspace{0.29em}}l@{\hspace{0.29em}}|@{\hspace{0.29em}}}{$\clause \definedas \fgnxclause{\dap}{\fbor{\app_1}{\fbor{\ldots}{\app_\maxindp}}}$ with $\maxindp > 0$} &
$\mainpartition$ &
\clause &
$\looppartition$
\\
\hline
BFS-loop-it-init-n &
\multicolumn{3}{@{\hspace{0.29em}}l@{\hspace{0.29em}}|@{\hspace{0.29em}}}{(\fglobally{\;\dap})} &
$\mainpartition$ &
(\fglobally{\,\fnext{\,\dap}}) &
$\looppartition$
\\
\hline
BFS-loop-it-init-c &
(\fglobally{\;\dap}) &
$\looppartitionp$ &
\fgneclause{\dapp}{\apres} &
$\mainpartition$ &
\fgxclause{\fbor{\dap}{\apres}} &
$\looppartition$
\\
\hline
BFS-loop-it-sub &
\multicolumn{3}{@{\hspace{0.29em}}l@{\hspace{0.29em}}|@{\hspace{0.29em}}}{\clause \definedas (\fglobally{\;\dap}) \mbox{ with } \fbimplies{\clause}{(\fglobally{\;\dapp})}} &
$\looppartition$ &
\begin{minipage}{0.22\linewidth}
\fgxclause{\fbor{\dapp}{\apres}} generated by BFS-loop-it-init-c
\end{minipage}
&
$\looppartition$
\\
\hline
&&&&&&\\[-2.25ex]
\begin{minipage}{0.17\linewidth}
BFS-loop-conclusion1
\end{minipage}
&
(\fglobally{\;\dap}) &
$\looppartition$ &
\fgneclause{\dapp}{\apres} &
$\mainpartition$ &
\fgnclause{\fbor{\dap}{\fbor{\dapp}{\apres}}} &
$\mainpartition$
\\[1.25ex]
\hline
&&&&&&\\[-2.25ex]
\begin{minipage}{0.17\linewidth}
BFS-loop-conclusion2
\end{minipage}
&
(\fglobally{\;\dap}) &
$\looppartition$ &
\fgneclause{\dapp}{\apres} &
$\mainpartition$ &
\fgnxclause{(\fbnot{\fapwaitfor{\apres}})}{\fbor{\dap}{\apres}} &
$\mainpartition$
\\[1.25ex]
\hline
\hline
\end{tabular}
}
\end{table}

The algorithm in Fig.~\ref{fig:ltlsattrp} provides a high level view of \tr in \trp \cite{UHustadtBKonev-CollegiumLogicum-2004}. The algorithm takes a set of starting clauses $\setclauses$ in SNF as input. It returns $\unsat$ if $\setclauses$ is found to be unsatisfiable (by deriving $\emptyclause$) and $\sat$ otherwise.
Resolution between two initial or two global clauses or between an initial and a global clause is performed by a straightforward extension of propositional resolution (e.g., \ifnoappendix\cite{LBachmairHGanzinger-HandbookOfAutomatedReasoning-2001}\else\cite{JRobinson-JACM-1965,JFrancoJMartin-HandbookOfSatisfiability-2009,LBachmairHGanzinger-HandbookOfAutomatedReasoning-2001}\fi). This is expressed in the five production rules listed under \emph{saturation} in Tab.~\ref{tab:productionrules}. Given a set of SNF clauses $\setclauses$ we say that one \emph{saturates} $\setclauses$, if one applies these production rules to clauses in $\setclauses$ until the empty clause $\emptyclause$ has been derived, or until no new clauses are generated.
Resolution between a set of initial and global clauses and an eventuality clause with eventuality literal $\apres$ requires finding a set of global clauses that allows one to infer conditions under which $\fnext{\fglobally{\fbnot{\apres}}}$ holds. Such a set of clauses is called a \emph{loop} in $\fbnot{\apres}$.
\trp implements the BFS approach to loop search \cite{CDixon-ICTL-1997,CDixon-AnnalsOfMathematicsAndArtificialIntelligence-1998,CDixon-PhDThesis-1995}.
Loop search involves all production rules in Tab.~\ref{tab:productionrules} except \refinitii, \refinitin, \refstepnn, and \refstepnx.

{
\def\checkterm{\lIf{$\emptyclause \in \mainpartition$}{\Return{\unsat}}\;}
\begin{figure}[t]
\centering
\begin{minipage}{0.95\linewidth}
\begin{algorithm}[H]
\SetKw{Or}{or}
\KwIn{A set of SNF clauses $\setclauses$.}
\KwOut{$\Unsat$ if $\setclauses$ is unsatisfiable; $\sat$ otherwise.}
\BlankLine
\algassign{\mainpartition}{\setclauses}; \checkterm\nllabel{line:fig:ltlsattrp:init1}
saturate(\mainpartition); \checkterm\nllabel{line:fig:ltlsattrp:sat1}
augment(\mainpartition)\;\nllabel{line:fig:ltlsattrp:aug}
saturate(\mainpartition); \checkterm\nllabel{line:fig:ltlsattrp:sat2}
\algassign{\mainpartitionp}{\emptyset}\;
\While{$\mainpartitionp \ne \mainpartition$}{
  \algassign{\mainpartitionp}{\mainpartition}\;
  \For{$\clause \in \setclauses \;.\; \clause \mbox{ is an eventuality clause}$}{
    \algassign{\setclausesp}{\{\emptyclause\}}\;\nllabel{line:fig:ltlsattrp:loopsearchinit}
    \Repeat{found \Or $\setclausesp = \emptyset$}{
      initialize-BFS-loop-search-iteration(\mainpartition, \clause, \setclausesp, \looppartition)\;\nllabel{line:fig:ltlsattrp:loopitinit}
      saturate-step-xx(\looppartition)\;\nllabel{line:fig:ltlsattrp:loopitsat}
      \algassign{\setclausesp}{\{\clausep \in \looppartition \mid \clausep \mbox{ has empty $\fnextname$ part}\}}\;\nllabel{line:fig:ltlsattrp:loopitchecksub1}
      $\setclausespp \leftarrow \{(\fglobally{\;\dapp}) \mid \fgxclause{\fbor{\dapp}{\apres}} \in \mbox{  }\looppartition \mbox{ generated by {\tiny{\setlength{\fboxsep}{1pt}\fbox{BFS-loop-it-init-c}}}}\}$\;\nllabel{line:fig:ltlsattrp:loopitchecksub2}
      {\itshape found} $\leftarrow$ subsumes(\setclausesp, \setclausespp)\;\nllabel{line:fig:ltlsattrp:loopitchecksub3}\nllabel{line:fig:ltlsattrp:loopitend}
    }\nllabel{line:fig:ltlsattrp:loopsearchend1}
    \If{found}{
      derive-BFS-loop-search-conclusions(\clause, \setclausesp, \mainpartition)\;\nllabel{line:fig:ltlsattrp:loopsearchconclusions}
      saturate(\mainpartition); \checkterm\nllabel{line:fig:ltlsattrp:sat3}
    }\nllabel{line:fig:ltlsattrp:loopsearchend2}
  }
}
\Return{\sat}\;\nllabel{line:fig:ltlsattrp:returnsat}
\end{algorithm}
\end{minipage}
\begin{minipage}{0.47\linewidth}\caption{\label{fig:ltlsattrp}LTL satisfiability checking via \tr in \trp}\end{minipage}
\end{figure}
}

In line \ref{line:fig:ltlsattrp:init1} \refalgltlsattrp initializes $\mainpartition$ with the set of starting clauses and terminates iff one of these is the empty clause. Then, in line \ref{line:fig:ltlsattrp:sat1}, it saturates $\mainpartition$ (terminating iff the empty clause is generated). In line \ref{line:fig:ltlsattrp:aug} it \emph{augments} $\mainpartition$ by applying production rule \refaugone to each eventuality clause in $\mainpartition$ and \refaugtwo once per eventuality literal in $\mainpartition$, where $\fapwaitfor{\apres}$ is a fresh proposition. This is followed by another round of saturation in line \ref{line:fig:ltlsattrp:sat2}.\footnote{Here we report the algorithm as implemented in the version of \trp that we obtained. There saturation is performed directly before and directly after augmentation.}
From now on \refalgltlsattrp alternates between searching for a loop for some eventuality clause $\clause$ (lines \ref{line:fig:ltlsattrp:loopsearchinit}--\ref{line:fig:ltlsattrp:loopsearchconclusions}) and saturating $\mainpartition$ if loop search has generated new clauses (line \ref{line:fig:ltlsattrp:sat3}).
It terminates if either the empty clause was derived (line \ref{line:fig:ltlsattrp:sat3}) or if no new clauses were generated (line \ref{line:fig:ltlsattrp:returnsat}).

Loop search for some eventuality clause $\clause$ may take several \emph{iterations} (lines \ref{line:fig:ltlsattrp:loopitinit}--\ref{line:fig:ltlsattrp:loopitend}). Each loop search iteration uses saturation restricted to \refstepxx as a subroutine (line \ref{line:fig:ltlsattrp:loopitsat}). Correctness of BFS loop search requires that each BFS loop search iteration has its own set of clauses $\looppartition$ in which it works. We call $\mainpartition$ and $\looppartition$ \emph{partitions}. The set of partitions is a set of sets; i.e., a clause may appear in several partitions but not more than once in any one partition. Columns 3, 5, and 7 in Tab.~\ref{tab:productionrules} indicate whether a premise (resp.~conclusion) of a production rule is taken from (resp.~put into) the main partition ($\mainpartition$), the loop partition of the current loop search iteration ($\looppartition$), the loop partition of the previous loop search iteration ($\looppartitionp$), or either of $\mainpartition$ or $\looppartition$ as long as premises and conclusion are in the same partition ($\mainorlooppartition$). In line \ref{line:fig:ltlsattrp:loopitinit} partition $\looppartition$ of a loop search iteration is initialized by applying production rule \refloopitinitx once to each global clause with non-empty $\fnextname$ part in $\mainpartition$, rule \refloopitinitn once to each global clause with empty $\fnextname$ part in $\mainpartition$, and rule \refloopitinitc once to each global clause with empty $\fnextname$ part in the partition of the previous loop search iteration $\looppartitionp$. Notice that by construction at this point $\looppartition$ contains only global clauses with non-empty $\fnextname$ part. Then $\looppartition$ is saturated using only rule \refstepxx (line \ref{line:fig:ltlsattrp:loopitsat}). A loop has been found iff each global clause with empty $\fnextname$ part that was derived in the previous loop search iteration is subsumed by at least one global clause with empty $\fnextname$ part that was derived in the current loop search iteration (lines \ref{line:fig:ltlsattrp:loopitchecksub1}--\ref{line:fig:ltlsattrp:loopitchecksub3}). Subsumption between a pair of clauses corresponds to an instance of production rule \refloopitsub; note, though, that this rule does not produce a new clause but records a relation between two clauses to be used later for extraction of a \uc. Loop search for $\clause$ terminates if either a loop has been found or no clauses with empty $\fnextname$ part were derived (line \ref{line:fig:ltlsattrp:loopsearchend1}). If a loop has been found, rules \refloopconclusionone and \refloopconclusiontwo are applied once to each global clause with empty $\fnextname$ part that was derived in the current loop search iteration (line \ref{line:fig:ltlsattrp:loopsearchconclusions}) to obtain the loop search conclusions for the main partition.

The \tr method with a BFS algorithm for loop search, which is implemented in \trp, is a sound and complete decision procedure for the satisfiability of a set of SNF clauses \cite{UHustadtBKonev-CADE-2003,UHustadtBKonev-CollegiumLogicum-2004,MFisherCDixonMPeim-ACMTrComputationalLogic-2001,CDixon-ICTL-1997,CDixon-AnnalsOfMathematicsAndArtificialIntelligence-1998,CDixon-PhDThesis-1995}.
We are not aware of a detailed complexity analysis of \tr as implemented in \trp; for complexity analyses of parts of the \tr method relevant to the implementation in \trp see \cite{MFisherCDixonMPeim-ACMTrComputationalLogic-2001,CDixon-AnnalsOfMathematicsAndArtificialIntelligence-1998}.
To understand the complexity of our extension of \refalgltlsattrp for the extraction of \ucs proposed in this article we are mainly interested in the size of the resolution graph that we will construct from an execution of \refalgltlsattrp.
This will be analyzed in Lemma \ref{thm:sizeofresolutiongraph}.

\subsubsection{Example}
\label{sec:tralgorithmexample}


We now continue the example from Sec.~\ref{sec:translatingltlintosnf}.
We would like to execute \refalgltlsattrp on the SNF of $\inp \definedas \fband{(\fglobally{(\fband{p}{q})})}{\ffinally{\fbnot{p}}}$.
For technical reasons we make some minor modifications.
First, when translating $\inp$ into a set of SNF clauses $\setclauses$ our implementation treats top level conjuncts as separate formulas.
We therefore separately translate $\fglobally{(\fband{p}{q})}$ and $\ffinally{\fbnot{p}}$ according to Def.~\ref{def:ltltosnf}.
Second, while the use of implication makes the SNF clauses in column 4 of Tab.~\ref{tab:ltltosnf} easier to understand, implication is not an operator that is available for SNF clauses in \trp.
Hence, we syntactically expand implication using its definition.
Third, due to space constraints in Fig.~\ref{fig:completeexresgraphucex} we replace the formula indices in the fresh atomic propositions $\dCNFvar, \dCNFvarp, \ldots$ with numerical indices.
Fourth, we remove some parentheses and let the negation and next time operators bind stronger than the or operator.
The modified SNF of $\inp$, $\setclauses$, is shown in \eqref{ex:complete:snf}.
$x_{1}$ corresponds to $\fdCNFvar{\fglobally{(\fband{p}{q})}}$, $x_2$ to $\fdCNFvar{\fband{p}{q}}$, $x_3$ to $\fdCNFvar{\ffinally{\fbnot{p}}}$, and $x_4$ to $\fdCNFvar{\fbnot{p}}$.
\begin{equation}\label{ex:complete:snf}
\setclauses \definedas
\begin{array}{l}
\hspace{-0.5em}\{x_{1},\\
\fglobally{(\fbor{\fbnot{x_{1}}}{\fnext{x_{1}}})},\\
\fglobally{(\fbor{\fbnot{x_{1}}}{x_{2}})},\\
\fglobally{(\fbor{\fbnot{x_{2}}}{p})},\\
\fglobally{(\fbor{\fbnot{x_{2}}}{q})},\\
x_{3},\\
\fglobally{(\fbor{\fbnot{x_{3}}}{\ffinally{x_{4}}})},\\
\fglobally{(\fbor{\fbnot{p}}{\fbnot{x_{4}}})}\}.
\end{array}
\end{equation}

In Fig.~\ref{fig:completeexresgraphucex} we show an execution of \refalgltlsattrp on $\setclauses$.
In Fig.~\ref{fig:completeexresgraphucex} \tr generally proceeds from bottom to top.
At the bottom in the rectangle shaded in light red are the clauses in $\setclauses$.
The leftmost clause in the top row is the empty clause $\emptyclause$, indicating unsatisfiability.
Clauses are connected with directed edges from premises to conclusions.
Edges are labeled with production rules, where ``BFS-loop'' is abbreviated to ``loop'', ``init'' to ``i'', and ``conclusion'' to ``conc''.
Please ignore the different colors and styles of the clauses and edges for now.
These will be explained when discussing extraction of a \uc later in Sec.~\ref{sec:extractingaucinsnf}.

Saturation in line \ref{line:fig:ltlsattrp:sat1} of \refalgltlsattrp produces no new clauses.\footnote{While it may seem that some clauses are not considered for saturation, this is due to either subsumption of one clause by another (e.g., $\fglobally{(\fbor{\fbnot{\fapwaitfor{x_{4}}}}{\fbor{\fnext{\fbnot{x_{1}}}}{\fnext{x_{4}}}})}$ obtained from $\fglobally{(\fbor{\fbnot{\fapwaitfor{x_{4}}}}{\fbor{\fnext{x_{4}}}{\fnext{\fapwaitfor{x_{4}}}}})}$ and $\fglobally{(\fbor{\fbnot{x_{1}}}{\fbnot{\fapwaitfor{x_{4}}}})}$ is subsumed by $\fglobally{(\fbor{\fbnot{\fapwaitfor{x_{4}}}}{\fnext{\fbnot{x_{1}}}})}$) or the fact that \trp uses \emph{ordered} resolution (e.g., $x_{1}$ with $\fglobally{(\fbor{\fbnot{x_{1}}}{x_{2}})}$ --- the order here is $x_{1} < x_{2} < p < q < x_{3} < x_{4}$; \cite{UHustadtBKonev-CADE-2003,LBachmairHGanzinger-HandbookOfAutomatedReasoning-2001}). Both are issues of completeness of \tr and, therefore, not discussed in this article.}
The two clauses in row 2 are generated by augmentation (line \ref{line:fig:ltlsattrp:aug}).
The following saturation (line \ref{line:fig:ltlsattrp:sat2}) produces no new clauses.
The dark green shaded rectangle is the loop partition for the first loop search iteration.
Row 3 contains the clauses obtained by initialization of the BFS loop search iteration (line \ref{line:fig:ltlsattrp:loopitinit}).
Row 4 then contains the clauses generated from those in row 3 by saturation restricted to \refstepxx (line \ref{line:fig:ltlsattrp:loopitsat}).
The subsumption test fails in this iteration, as $\fbnot{x_{1}}$ (from $\fglobally{(\fbnot{x_{1}})}$) does not subsume $\emptyclause$ (from $\fglobally{(\fnext{x_{4}})}$) (lines \ref{line:fig:ltlsattrp:loopitchecksub1}--\ref{line:fig:ltlsattrp:loopitend}).
The light green shaded rectangle is the loop partition for the second loop search iteration.
Row 5 contains the clauses obtained by initialization and row 6 those obtained from them by restricted saturation.
This time the subsumption test succeeds, and the loop search conclusions are shown in row 7 (line \ref{line:fig:ltlsattrp:loopsearchconclusions}).
Finally, while row 8 contains a ``blind alley'', row 9 has the derivation of the empty clause $\emptyclause$ via saturation (line \ref{line:fig:ltlsattrp:sat3}).

\newcommand{\refshortinferencerule}[1]{{\large{#1}}\xspace}
\newcommand{\refshortinitii}{\refshortinferencerule{init-ii}}
\newcommand{\refshortinitin}{\refshortinferencerule{init-in}}
\newcommand{\refshortstepnn}{\refshortinferencerule{step-nn}}
\newcommand{\refshortstepnx}{\refshortinferencerule{step-nx}}
\newcommand{\refshortstepxx}{\refshortinferencerule{step-xx}}
\newcommand{\refshortloopitinitx}{\refshortinferencerule{loop-it-i-x}}
\newcommand{\refshortloopitinitn}{\refshortinferencerule{loop-it-i-n}}
\newcommand{\refshortloopitinitc}{\refshortinferencerule{loop-it-i-c}}
\newcommand{\refshortloopitsub}{\refshortinferencerule{loop-it-sub}}
\newcommand{\refshortloopconclusionone}{\refshortinferencerule{loop-conc1}}
\newcommand{\refshortloopconclusiontwo}{\refshortinferencerule{loop-conc2}}
\newcommand{\refshortaugone}{\refshortinferencerule{aug1}}
\newcommand{\refshortaugtwo}{\refshortinferencerule{aug2}}
{
\newlength\mylinewidth
\setlength\mylinewidth{\the\linewidth}

\begin{sidewaysfigure}
\vspace{98ex}
\centering
\resizebox{!}{0.955\mylinewidth}{
\begin{tikzpicture}[auto,
    =<triangle 45,
    every state/.style={draw=none},
    corev/.style={draw=blue,rectangle,rounded corners=1mm,very thick,densely dashed,text=blue},
    coreea/.style={blue,very thick,densely dashed},
    coreel/.style={draw=none,very thick,text=blue},
    corenov/.style={draw=blue,rectangle,rounded corners=1mm,very thick,densely dashed,text=blue},
    corenoea/.style={blue,very thick,densely dashed},
    corenoel/.style={draw=none,very thick,text=blue},
    noncoreea/.style={},
    noncoreel/.style={},
    corenononrgea/.style={blue,very thick,densely dashed},
    corenononrgel/.style={draw=none,very thick,text=blue},
    noncorenonrgea/.style={},
    noncorenonrgel/.style={},
    xscale=0.875,yscale=0.75]

  \input{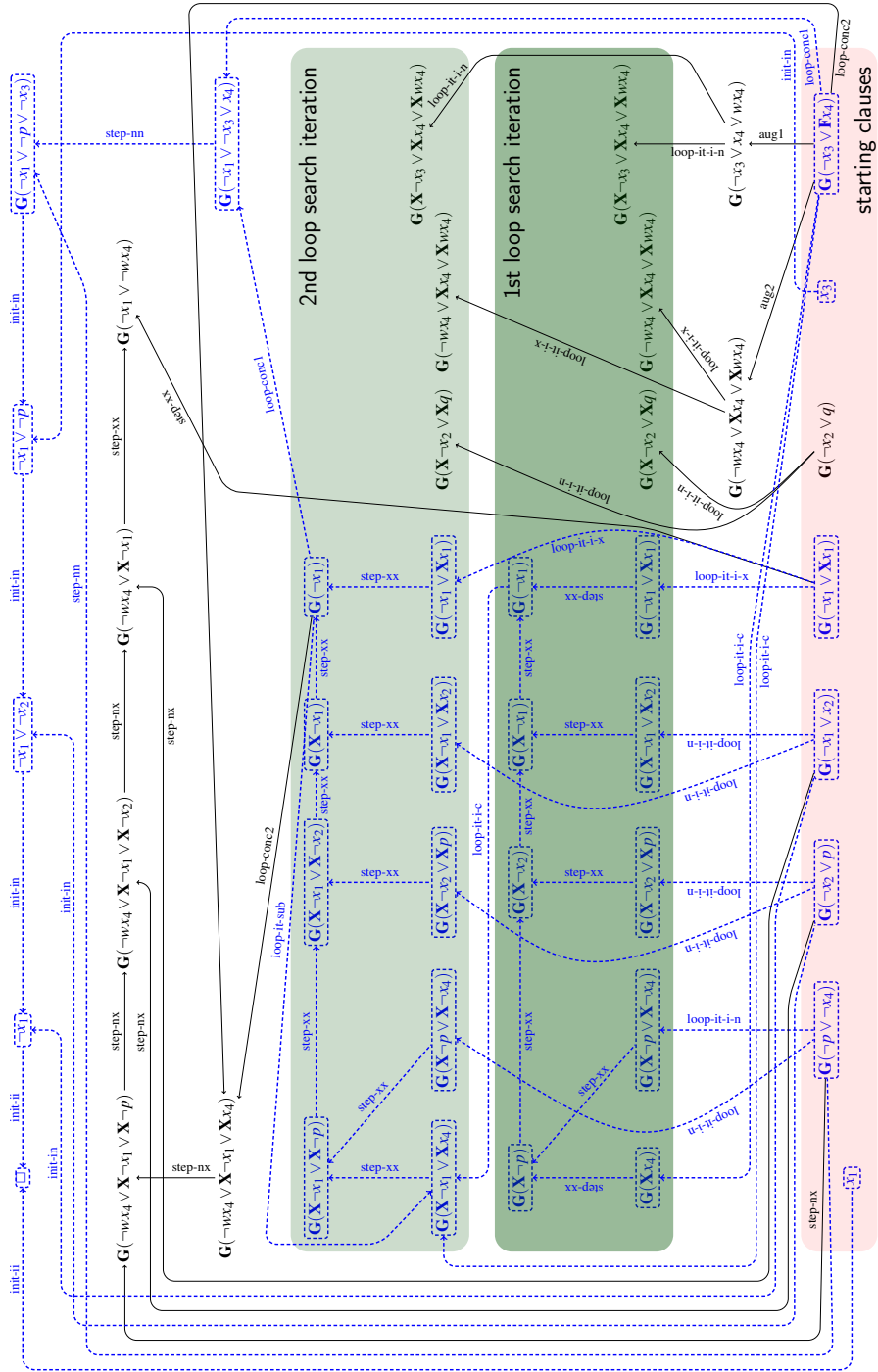}

\end{tikzpicture}
}
\begin{minipage}{0.66\linewidth}\caption{\label{fig:completeexresgraphucex} Example of an execution of the \tr algorithm with corresponding resolution graph and \uc extraction in SNF}\end{minipage}
\end{sidewaysfigure}
}

\section{\UC Extraction}
\label{sec:ucextraction}

In this section we present our method to extract a \uc from an execution of \refalgltlsattrp.
We first show how to extract a \uc in SNF (Sec.~\ref{sec:extractingaucinsnf}).
Then we map that \uc back to LTL (Sec.~\ref{sec:extractingaucinltl}).

\subsection{Extracting a \UC in SNF}
\label{sec:extractingaucinsnf}

In this subsection we describe, given an unsatisfiable set of SNF clauses $\setclauses$, how to obtain a subset of $\setclauses$, $\setclausesuc$, that is by itself unsatisfiable.
During the execution of \refalgltlsattrp a resolution graph is built that records which clauses were used to generate other clauses.
Then the resolution graph is traversed backwards from the empty clause to find the subset of $\setclauses$ that was actually used to prove unsatisfiability (Sec.~\ref{sec:extractingaucinsnffromaresolutiongraph}).
The specifics of the \tr algorithm are then used to construct an optimized version of the resolution graph (Sec.~\ref{sec:extractingaucinsnffromanoptimizedresolutiongraph}).

While we are not aware of corresponding previous work in the domain of temporal logic, the general idea of the construction is unsurprising.
In fact, applying a resolution method to a set of input clauses in a clausal normal form, constructing a resolution graph, and determining an unsatisfiable subset of the set of input clauses by backward traversal of the resolution graph from the empty clause is well known in SAT (e.g., \ifnoappendix\cite{LZhang-PhDThesis-2003}\else\cite{LZhangSMalik-SAT-2003,LZhang-PhDThesis-2003}\fi).
The complexity analysis and the optimization of the resolution graph are original and specific to \refalgltlsattrp.

Note that in the preliminary version of this paper \cite{VSchuppan-TIME-2013} we presented the optimized version of the resolution graph right away.

\subsubsection{Extracting a \UC in SNF from a Resolution Graph}
\label{sec:extractingaucinsnffromaresolutiongraph}

\paragraph{Resolution Graph --- Construction and Added Complexity}

In Def.~\ref{def:resolutiongraphtype}--\ref{def:resolutiongraphupdate} we state the construction of a resolution graph during an execution of \refalgltlsattrp.
Then we discuss the complexity that the construction of a resolution graph adds to an execution of \refalgltlsattrp (Rem.~\ref{thm:vertexatmostthreeincomingedges} and \ref{thm:rgconstructioncomplexity}).
Finally, in Lemma \ref{thm:sizeofresolutiongraph} we establish a bound on the size of a resolution graph.

The definition of a resolution graph is split into three parts.
Definition \ref{def:resolutiongraphtype} gives the ``type'' of a resolution graph.
Definition \ref{def:resolutiongraphinitialization} describes the initialization of a resolution graph at the beginning of an execution of \refalgltlsattrp.
Finally, Def.~\ref{def:resolutiongraphupdate} describes how a resolution graph is updated during an execution of \refalgltlsattrp.
Splitting the definition of a resolution graph allows to limit the scope of the changes that are required for the definition of an optimized resolution graph in Sec.~\ref{sec:extractingaucinsnffromanoptimizedresolutiongraph} to how a resolution graph is updated.

\mydefinition{Resolution Graph: Type}\label{def:resolutiongraphtype}
Let $\setclauses$ be a set of SNF clauses, and assume an execution of \refalgltlsattrp on $\setclauses$. A \emph{resolution graph} $\graph$ is a directed graph consisting of
\begin{inparaenum}[(i)]
\item a set of vertices $\setvertices$,
\item a set of directed edges $\setedges \subseteq \setvertices \times \setvertices$,
\item a labeling of vertices with SNF clauses $\vertexlabelingname : \setvertices \rightarrow \allclauses$, and
\item a partitioning $\partitioningv$ of the set of vertices $\setvertices$ into one main partition $\mainpartitionv$ and one partition $\looppartitionv_\pos$ for each BFS loop search iteration in the execution of \refalgltlsattrp on the set of SNF clauses $\setclauses$: $\;\;\partitioningv\!\! : \setvertices = \fdisjunion{\mainpartitionv}{\fdisjunion{\looppartitionv_0}{\fdisjunion{\ldots}{\looppartitionv_\maxind}}}$.
\end{inparaenum}
\end{definition}

\mydefinition{Resolution Graph: Initialization}\label{def:resolutiongraphinitialization}
Let $\setclauses$ be a set of SNF clauses, and assume an execution of \refalgltlsattrp on $\setclauses$. The \emph{resolution graph} $\graph$ \emph{is initialized} in line \ref{line:fig:ltlsattrp:init1} of \refalgltlsattrp as follows:
\begin{inparaenum}[(i)]
\item $\setvertices$ contains one vertex $\vertex$ per clause $\clause$ in $\setclauses$: $\setvertices = \{\vertex_\clause \mid \clause \in \setclauses\}$,
\item $\setedges$ is empty: $\setedges = \emptyset$,
\item each vertex is labeled with the corresponding clause: $\vertexlabelingname : \setvertices \rightarrow \setclauses, \fvertexlabeling{\vertex_\clause} = \clause$, and
\item the partitioning $\partitioningv$ contains only the main partition $\mainpartitionv$, which contains all vertices: $\partitioningv : \mainpartitionv = \setvertices$.
\end{inparaenum}
\end{definition}

\mydefinition{Resolution Graph: Update}\label{def:resolutiongraphupdate}
Let $\setclauses$ be a set of SNF clauses, and assume an execution of \refalgltlsattrp on $\setclauses$. The \emph{resolution graph} $\graph$ \emph{is updated} as follows.
Whenever a new BFS loop search iteration is entered (line \ref{line:fig:ltlsattrp:loopitinit}), a new partition $\looppartitionv_\pos$ is created and added to $\partitioningv$.
For each application of a production rule from Tab.~\ref{tab:productionrules} that either generates a new clause in partition $\mainpartitionv$ or $\looppartitionv_\pos$ or is the first application of rule \refloopitsub to clause $\clausepp$ in $\setclausespp$ in line \ref{line:fig:ltlsattrp:loopitchecksub3}:
\begin{inparaenum}[(i)]
\item if the applied production rule is not rule \refloopitsub, then a new vertex $\vertex$ is created for the conclusion $\clause$ (which is a new clause), labeled with $\clause$, and put into partition $\mainpartitionv$ or $\looppartitionv_\pos$;
\item an edge is created from the vertex labeled with premise 1 (resp.~premise 2) in partition $\mainpartitionv$, $\looppartitionv_\pos$, or $\looppartitionv_{\pos-1}$ to the vertex labeled with the conclusion in partition $\mainpartitionv$ or $\looppartitionv_\pos$.
\end{inparaenum}
\end{definition}

\myremark%
{\thmresolutiongraphdet}%
{thm:resolutiongraphdet}%
{Determinants of Resolution Graph}%
{Note that a resolution graph constructed according to Def.~\ref{def:resolutiongraphtype}--\ref{def:resolutiongraphupdate} is determined not only by the set of input clauses $\setclauses$ but also by the execution of \refalgltlsattrp on $\setclauses$.
For example, consider $\setclauses \definedas \{\ficlause{\ap}, \ficlause{\fbnot{\ap}}, \ficlause{\app}, \ficlause{\fbnot{\app}}\}$.
Depending on the order in which clauses are considered by \refalgltlsattrp the empty clause $\emptyclause$ will be derived either from $\ficlause{\ap}$ and $\ficlause{\fbnot{\ap}}$ or from $\ficlause{\app}$ and $\ficlause{\fbnot{\app}}$.
As saturation halts as soon as $\emptyclause$ has been derived, two different resolution graphs for $\setclauses$ will be obtained.
Hence, strictly speaking, a resolution graph is parameterized by a set of SNF clauses $\setclauses$ and by an execution of \refalgltlsattrp on $\setclauses$.
In this article both parameters are usually provided by the context.
Therefore, in the remainder of this article when we say ``resolution graph'', then we refer to the object that is obtained from Def.~\ref{def:resolutiongraphtype}--\ref{def:resolutiongraphupdate} after an execution of \refalgltlsattrp on a set of clauses $\setclauses$.
}
\thmresolutiongraphdet{true}

Remember that BFS loop search requires that clauses in different BFS loop search iterations are kept apart from each other.
Hence, in any partition there can be at most one vertex labeled with a given clause, but there may well exist two or more vertices in different partitions labeled with the same clause.
In fact, an application of production rule \refloopitinitx will lead to such a situation as it copies a clause from the main partition to the partition of the current BFS loop search iteration.

In Rem.~\ref{thm:vertexatmostthreeincomingedges} and \ref{thm:rgconstructioncomplexity} we establish the complexity of the construction of a resolution graph in terms of its own size.
In Lemma \ref{thm:sizeofresolutiongraph} we then obtain a limit on the size of the resolution graph by bounding the number of 
\begin{inparaenum}[(i)]
\item different clauses in each partition,
\item iterations in each loop search by the length of the longest monotonically increasing sequence of Boolean formulas over $\allaps$, and
\item loop searches by the number of different loop search conclusions.
\end{inparaenum}

\myremark%
{\thmvertexatmostthreeincomingedges}%
{thm:vertexatmostthreeincomingedges}%
{Vertex in Resolution Graph has At Most Three Incoming Edges}%
{Let $\graph$ be a resolution graph. Inspection of Tab.~\ref{tab:productionrules} shows that each vertex in the resolution graph $\graph$ has at most three incoming edges. Let $\vertex_\clause$ be a vertex in $\graph$ labeled with clause $\clause$. If the clause $\clause$ was generated by an application of any production rule except for \refloopitinitc, then the vertex $\vertex_\clause$ only has incoming edges originating at the premises of $\clause$, and each production rule in Tab.~\ref{tab:productionrules} has at most two premises. Otherwise, if the clause $\clause$ was generated by an application of production rule \refloopitinitc, then the vertex $\vertex_\clause$ has at most two incoming edges originating at the premises of $\clause$ and at most one incoming edge from an application of rule \refloopitsub.}
\thmvertexatmostthreeincomingedges{true}

\myremark%
{\thmrgconstructioncomplexity}%
{thm:rgconstructioncomplexity}%
{Added Complexity of Construction of Resolution Graph}%
{Let $\graph$ be a resolution graph with set of vertices $\setvertices$ and set of edges $\setedges$. Assume that the data structure used to represent clauses has an entry for a pointer to the vertex of $\graph$ that it labels, and similarly the data structure used to represent vertices has an entry for a pointer to the clause that it is labeled with. Moreover, the data structure used to represent vertices has three slots with pointers to its incoming edges (remember that by Rem.~\ref{thm:vertexatmostthreeincomingedges} each vertex in $\graph$ has at most three incoming edges) and a Boolean flag to mark vertices in the main partition that are labeled with an initial clause. It is now easy to see that initializing the resolution graph with respect to the set of SNF clauses $\setclauses$ and updating the resolution graph $\graph$ whenever a new clause is generated during the execution of \refalgltlsattrp can be performed using constant time for each clause. In other words, the construction of the resolution graph takes time $\fbigo{\fcardinality{\setvertices} + \fcardinality{\setedges}}$ overall in addition to the time required to run \refalgltlsattrp.}
\thmrgconstructioncomplexity{true}

\mylemma%
{\thmsizeofresolutiongraph}%
{thm:sizeofresolutiongraph}%
{Size of Resolution Graph}%
{Let $\setclauses$ be a set of SNF clauses, and let $\graph$ be a resolution graph with set of vertices $\setvertices$ and set of edges $\setedges$. Then $\fcardinality{\setvertices}$ and $\fcardinality{\setedges}$ are at most exponential in $\fcardinality{\allaps} + log(\fcardinality{\setclauses})$.}
\thmsizeofresolutiongraph{true}

\begin{sloppypar}
\begin{proof}
The following reasoning shows that $\fcardinality{\setvertices}$ is at most exponential in $\fcardinality{\allaps} + log(\fcardinality{\setclauses})$:
\begin{enumerate}
\item In an initial clause a proposition can be not present, present non-negated, or present negated. Hence, the number of different initial clauses is $\fbigo{3^{\fcardinality{\allaps}}}$.
\item In a global clause a proposition can be one of not present, present non-negated, or present negated; and prefixed by $\fnextname$ not present, present non-negated, or present negated. Hence, the number of different global clauses is $\fbigo{9^{\fcardinality{\allaps}}}$.
\item The number of clauses in the main partition is bounded by $\fcardinality{\setclauses} + \fbigo{3^{\fcardinality{\allaps}}} + \fbigo{9^{\fcardinality{\allaps}}} = \fbigo{\fcardinality{\setclauses} + 9^{\fcardinality{\allaps}}}$.
\item The number of clauses in a partition for a BFS loop search iteration is bounded by $\fbigo{9^{\fcardinality{\allaps}}}$.
\item The number of partitions is bounded by 1 plus the number of BFS loop search iterations.
\item The number of iterations in a BFS loop search is bounded by the length of the longest monotonically increasing sequence of Boolean formulas over $\allaps$, which is $\fbigo{2^{\fcardinality{\allaps}}}$. (In \cite{CDixon-AnnalsOfMathematicsAndArtificialIntelligence-1998} that result is obtained by reasoning on a behavior graph.)
\item The number of BFS loop searches is bounded by the number of different clauses that can be the result of a BFS loop search. The number of different clauses that can be the consequence of BFS loop search conclusion 1 \refloopconclusionone is bounded by the number of different global clauses with empty next part, which is $\fbigo{3^{\fcardinality{\allaps}}}$. The number of different clauses that can be the consequence of BFS loop search conclusion 2 \refloopconclusiontwo is bounded by the number of different eventuality literals times the number of different global clauses with empty next part, which is $\fbigo{\fcardinality{\setclauses} \cdot 3^{\fcardinality{\allaps}}}$. Hence, the number of BFS loop searches is bounded by $\fbigo{\fcardinality{\setclauses} \cdot 3^{\fcardinality{\allaps}}}$.
\item Taking all of the above into account, the number of clauses is bounded by $\fbigo{\fcardinality{\setclauses} + 9^{\fcardinality{\allaps}} + \fcardinality{\setclauses} \cdot 3^{\fcardinality{\allaps}} \cdot 2^{\fcardinality{\allaps}} \cdot 9^{\fcardinality{\allaps}}} = \fbigo{\fcardinality{\setclauses} \cdot 54^{\fcardinality{\allaps}}}$.
\end{enumerate}
To see the result for $\fcardinality{\setedges}$ notice that by Rem.~\ref{thm:vertexatmostthreeincomingedges} each vertex in $\graph$ has at most three incoming edges.
This concludes the proof.
\qed
\end{proof}
\end{sloppypar}

\paragraph{\UC in SNF --- Construction, Correctness, and Added Complexity}

We are now ready to describe the extraction of a \uc in SNF from a resolution graph in Def.~\ref{def:coreextraction}.
Theorem \ref{thm:coreisunsat}, which establishes correctness of the construction, is then straightforward to obtain.

\mydefinition{\UC in SNF}\label{def:ucinsnf}
Let $\setclauses$ be an unsatisfiable set of SNF clauses.
Let $\setclausesuc$ be an unsatisfiable subset of $\setclauses$.
Then $\setclausesuc$ is a \emph{\uc of $\setclauses$ in SNF}.
\end{definition}

\mydefinition{\UC in SNF via \TR}\label{def:coreextraction}
Let $\setclauses$ be an unsatisfiable set of SNF clauses, let $\graph$ be a resolution graph, and let $\vertex_\emptyclause$ be the (unique) vertex in the main partition $\mainpartitionv$ of the resolution graph $\graph$ labeled with the empty clause $\emptyclause$.
Let $\graphp$ be the smallest subgraph of $\graph$ that contains $\vertex_\emptyclause$ and all vertices in $\graph$ (and the corresponding edges) that are backward reachable from $\vertex_\emptyclause$.
The \emph{\uc of $\setclauses$ in SNF via \TR}, $\setclausesuc$, is the subset of $\setclauses$ such that there exists a vertex $\vertex$ in the subgraph $\graphp$, labeled with $\clause \in \setclauses$, and contained in the main partition $\mainpartitionv$ of $\graph$: $\setclausesuc = \{\clause \in \setclauses \mid \exists \vertex \in \setvertices_\graphp \;.\; \fband{\fvertexlabeling{\vertex} = \clause}{\vertex \in \mainpartitionv}\}$.
\end{definition}

\mytheorem%
{\thmcoreisunsat}%
{thm:coreisunsat}%
{Unsatisfiability of \UC in SNF via \TR}%
{Let $\setclauses$ be an unsatisfiable set of SNF clauses, and let $\setclausesuc$ be a \uc of $\setclauses$ in SNF via \tr. Then $\setclausesuc$ is unsatisfiable.}
\thmcoreisunsat{true}

\begin{proof}
Notice that in the resolution graph each conclusion is connected by an edge to all of its premises.
Therefore, the \uc in SNF according to Def.~\ref{def:coreextraction} contains all clauses of the set of starting clauses $\setclauses$ that contributed to deriving the empty clause and, hence, to establishing unsatisfiability of $\setclauses$.
It now follows directly from the correctness of \tr that $\setclausesuc$ is unsatisfiable.
This concludes the proof.
\qed
\end{proof}

As $\setclausesuc$ is a subset of $\setclauses$, we have the following corollary.

\mycorollary%
{\thmcoreisunsatcor}%
{thm:coreisunsatcor}%
{\UC in SNF via \TR is \UC in SNF}%
{Let $\setclauses$ be an unsatisfiable set of SNF clauses, and let $\setclausesuc$ be a \uc of $\setclauses$ in SNF via \tr. Then $\setclausesuc$ is a \uc of $\setclauses$ in SNF.}
\thmcoreisunsatcor{true}

\myremark%
{\thmcoredet}%
{thm:coredet}%
{Determinants of \UC in SNF via \TR}%
{Note that an unsatisfiable set of SNF clauses $\setclauses$ may have several \ucs in SNF.
Moreover, not all of them might be obtained using Def.~\ref{def:coreextraction}.
For example, for $\setclauses \definedas \{\ficlause{\ap}, \ficlause{\fbnot{\ap}}, (\fglobally{\;\app}), (\ffinally{\;\fbnot{\app}})\}$ \refalgltlsattrp will derive the empty clause $\emptyclause$ from $\ficlause{\ap}$ and $\ficlause{\fbnot{\ap}}$ during the first round of saturation, leading to $\{\ficlause{\ap}, \ficlause{\fbnot{\ap}}\}$ as a \uc of $\setclauses$ in SNF via \tr.
The alternative \uc of $\setclauses$ in SNF, $\{(\fglobally{\;\app}), (\ffinally{\;\fbnot{\app}})\}$, requires loop search, which is only performed later in \refalgltlsattrp and, therefore, will not be produced by Def.~\ref{def:coreextraction}.
Finally, as Def.~\ref{def:coreextraction} relies on a resolution graph, the determinants of a resolution graph in Rem.~\ref{thm:resolutiongraphdet} also become determinants of a \uc in SNF via \tr. 
}
\thmcoredet{true}

Next we discuss the complexity that our method for \uc extraction adds to an execution of \refalgltlsattrp.
Using Rem.~\ref{thm:rgconstructioncomplexity} and Lemma \ref{thm:sizeofresolutiongraph} the desired result is easily obtained in Prop.~\ref{thm:corecomplexity}.
It turns out that the effort that is induced by our method for \uc extraction in addition to the effort required by an execution of \refalgltlsattrp is linearly bounded by the effort required by \refalgltlsattrp:
For all vertices present in the resolution graph a corresponding clause must have been previously generated by \refalgltlsattrp, and each vertex in the resolution graph requires constant time for construction and traversal.

\myproposition%
{\thmcorecomplexity}%
{thm:corecomplexity}%
{Added Complexity of \UC Extraction}%
{Let $\setclauses$ be an unsatisfiable set of SNF clauses. Construction of $\setclausesuc$ according to Def.~\ref{def:coreextraction} can be performed in time exponential in $\fcardinality{\allaps} + log(\fcardinality{\setclauses})$ in addition to the time required to run \refalgltlsattrp.}
\thmcorecomplexity{true}

\begin{proof}
Let $\graph$ be the resolution graph with set of vertices $\setvertices$ and set of edges $\setedges$.
Assume data structures for clauses and vertices in $\graph$ are as in Rem.~\ref{thm:rgconstructioncomplexity}.
By Rem.~\ref{thm:rgconstructioncomplexity} construction of $\graph$ takes time $\fbigo{\fcardinality{\setvertices} + \fcardinality{\setedges}}$ overall.
Once the empty clause $\emptyclause$ has been derived in the main partition, backward traversal of $\graph$ from the unique vertex in the main partition labeled with the empty clause, $\vertex_\emptyclause$, can be performed (e.g., using breadth first search) in time $\fbigo{\fcardinality{\setvertices} + \fcardinality{\setedges}}$.
Whenever a vertex labeled with an initial clause $\clause$ is encountered during the backward traversal of $\graph$, then $\clause$ is signaled to be part of $\setclausesuc$, requiring time $\fbigo{\fcardinality{\setclauses}}$ overall.
Using Lemma \ref{thm:sizeofresolutiongraph} the result follows.
This concludes the proof.
\qed
\end{proof}

\myremark%
{\thmrgpruning}%
{thm:rgpruning}%
{Pruning the Resolution Graph at Run Time}%
{The specifics of \tr in \refalgltlsattrp allow to optimize extraction of \ucs by pruning the resolution graph during the execution of \refalgltlsattrp extended with the construction in Def.~\ref{def:coreextraction} as follows.
Notice that after the completion of a (successful or unsuccessful) loop search for some eventuality clause $\clause$ in lines \ref{line:fig:ltlsattrp:loopsearchinit}--\ref{line:fig:ltlsattrp:loopsearchend2} of \refalgltlsattrp no new edges between the main partition and one of the partitions used during the just completed loop search for $\clause$ will be created.
Hence, after completion of an execution of lines \ref{line:fig:ltlsattrp:loopsearchinit}--\ref{line:fig:ltlsattrp:loopsearchend2} of \refalgltlsattrp vertices in the partitions used during the just completed loop search that are not backward reachable from the main partition can be pruned from the resolution graph.}
\thmrgpruning{true}

\paragraph{Example}

We now continue the running example from Sec.~\ref{sec:tralgorithmexample}.
We show how to extract a \uc of $\inp$ (see \eqref{ex:ltltosnf}) in SNF via \tr from the execution of \refalgltlsattrp on $\setclauses$ (see \eqref{ex:complete:snf}) in Fig.~\ref{fig:completeexresgraphucex}, which contains the resolution graph according to Def.~\ref{def:resolutiongraphtype}--\ref{def:resolutiongraphupdate}.
Vertices are given by the clauses they are labeled with and by the partition in which they appear.
We now apply Def.~\ref{def:coreextraction}.
The dashed, blue clauses and edges show the part of the resolution graph that is backward reachable from $\emptyclause$.
Clause $\fglobally{(\fbor{\fbnot{x_{2}}}{q})}$ is the only clause of $\setclauses$ that is not backward reachable from $\emptyclause$.
Hence, the \uc of $\inp$ in SNF via \tr according to Def.~\ref{def:coreextraction} is $\setclausesuc = \setclauses \setminus \{\fglobally{(\fbor{\fbnot{x_{2}}}{q})}\}$ as shown in \eqref{ex:complete:snfcore}.
\begin{equation}\label{ex:complete:snfcore}
\setclausesuc \definedas
\begin{array}{l}
\hspace{-0.5em}\{x_{1},\\
\fglobally{(\fbor{\fbnot{x_{1}}}{\fnext{x_{1}}})},\\
\fglobally{(\fbor{\fbnot{x_{1}}}{x_{2}})},\\
\fglobally{(\fbor{\fbnot{x_{2}}}{p})},\\
x_{3},\\
\fglobally{(\fbor{\fbnot{x_{3}}}{\ffinally{x_{4}}})},\\
\fglobally{(\fbor{\fbnot{p}}{\fbnot{x_{4}}})}\}.
\end{array}
\end{equation}

\subsubsection{Extracting a \UC in SNF from an Optimized Resolution Graph}
\label{sec:extractingaucinsnffromanoptimizedresolutiongraph}

In Def.~\ref{def:optimizedresolutiongraphupdate} we optimize the construction of a resolution graph by not including edges between some premises and conclusions.
Definition \ref{def:coreextractionopt} correspondingly adapts the extraction of a \uc.
The proof of correctness of the optimized construction is then significantly more complex than for the unoptimized variant.
Theorem \ref{thm:coreisunsatopt} is the main theorem and Lemmas \ref{thm:nobacklinkaugtwo}--\ref{thm:noloopitbacklinkloopinitc} contain the details.
Proposition \ref{thm:minimalitypremises} complements the definition and proof of correctness of the optimized construction by showing for the remaining edges that they are indeed required to obtain a \uc.

The ``type'' and initialization of an optimized resolution graph are as for a resolution graph (Def.~\ref{def:resolutiongraphtype}, \ref{def:resolutiongraphinitialization}); only the update changes (Def.~\ref{def:optimizedresolutiongraphupdate}).
In the remainder of this article when we say ``optimized resolution graph'', then we refer to the object that is obtained from Def.~\ref{def:resolutiongraphtype}, \ref{def:resolutiongraphinitialization}, and \ref{def:optimizedresolutiongraphupdate} after an execution of \refalgltlsattrp on a set of clauses $\setclauses$.

\mydefinition{Optimized Resolution Graph: Update}\label{def:optimizedresolutiongraphupdate}
An \emph{optimized resolution graph is updated} in the same way as a resolution graph, except that contrary to Def.~\ref{def:resolutiongraphupdate} no edge is added between a pair of vertices labeled with a premise and a conclusion in the following four cases:
\begin{inparaenum}[(i)]
\item between vertices labeled with premise 1 and the conclusion of rule \refaugtwo,
\item between vertices labeled with premise 1 and the conclusion of rule \refloopitinitc,
\item between vertices labeled with premise 2 and the conclusion of rule \refloopitinitc, and
\item between vertices labeled with premise 2 and the conclusion of rule \refloopconclusiontwo.
\end{inparaenum}
\end{definition}

\mydefinition{\UC in SNF via \TR from an Optimized Resolution Graph}\label{def:coreextractionopt}
Let $\setclauses$ be an unsatisfiable set of SNF clauses.
The \emph{\uc of $\setclauses$ in SNF via \tr from an optimized resolution graph} is obtained in the same way as the \uc of $\setclauses$ in SNF via \tr, except that an optimized resolution graph is used.
\end{definition}

\mytheorem%
{\thmcoreisunsatopt}%
{thm:coreisunsatopt}%
{Unsatisfiability of \UC in SNF via \tr from an Optimized Resolution Graph}%
{Let $\setclauses$ be an unsatisfiable set of SNF clauses, and let $\setclausesuc$ be a \uc of $\setclauses$ in SNF via \tr from an optimized resolution graph. Then $\setclausesuc$ is unsatisfiable.}
\begin{sloppypar}
\thmcoreisunsatopt{true}
\end{sloppypar}

\begin{proof}
Assume for a moment that in Def.~\ref{def:optimizedresolutiongraphupdate} no edges are excluded in the update of the optimized resolution graph.
In that case unsatisfiability of $\setclausesuc$ has been established in Thm.~\ref{thm:coreisunsat}.
In the remainder of the proof we show that for constructing a \uc of $\setclauses$ in SNF via \tr the optimized resolution graph can be used in place of the resolution graph, i.e., that none of the four exclusions of edges in Def.~\ref{def:optimizedresolutiongraphupdate} render the resulting \uc of $\setclauses$ in SNF satisfiable.

We have to show that
\begin{inparaenum}[(i)]
\item not including an edge from the vertex labeled with premise 1 to the vertex labeled with the conclusion for an application of rule \refaugtwo, \label{enum:thmunsatisfiabilityofcore:1}
\item not including an edge from the vertex labeled with premise 2 to the vertex labeled with the conclusion for an application of rule \refloopconclusiontwo, \label{enum:thmunsatisfiabilityofcore:2}
\item not including an edge from the vertex labeled with premise 2 to the vertex labeled with the conclusion for an application of rule \refloopitinitc, and \label{enum:thmunsatisfiabilityofcore:3}
\item not including an edge from the vertex labeled with premise 1 to the vertex labeled with the conclusion for an application of rule \refloopitinitc \label{enum:thmunsatisfiabilityofcore:4}
\end{inparaenum}
in the resolution graph $\graph$ maintains the fact that the resulting $\setclausesuc$ is unsatisfiable.

To see the intuition behind (\ref{enum:thmunsatisfiabilityofcore:1}) note that for a vertex $\vertex_\clause$ labeled with the conclusion $\clause$ of an application of rule \refaugtwo in the main partition $\mainpartitionv$ to be backward reachable from the (unique) vertex in the main partition $\mainpartitionv$ of the resolution graph $\graph$ labeled with the empty clause $\emptyclause$, $\vertex_\emptyclause$, the occurrence of $\fbnot{\fapwaitfor{\apres}}$ in $\clause$ must be ``resolved away'' at some point on the path from $\vertex_\clause$ to $\vertex_\emptyclause$. It turns out that this can only happen by resolution with a clause that is derived from the conclusion of rule \refaugone applied to an eventuality clause $\clausep$ with eventuality literal $\apres$. By the construction of the resolution graph $\graph$ $\vertex_\clausep$ must be backward reachable from $\vertex_\emptyclause$ and, therefore, $\clausep$ must be included in the \uc in SNF. Hence, an execution of \refalgltlsattrp with input $\setclausesuc$ will produce $\clause$ from $\clausep$. For a formal proof see Lemma \ref{thm:nobacklinkaugtwo}.

A similar reasoning as for (\ref{enum:thmunsatisfiabilityofcore:1}) applies to (\ref{enum:thmunsatisfiabilityofcore:2}), formalized in Lemma \ref{thm:noevbacklinkloopconclusiontwo}.

\begin{sloppypar}
For (\ref{enum:thmunsatisfiabilityofcore:3}) notice that a vertex labeled with the conclusion of an application of rule \refloopitinitc can only be backward reachable from $\vertex_\emptyclause$ if the corresponding BFS loop search iteration is successful and a vertex labeled with one of the resulting conclusions of rules \refloopconclusionone or \refloopconclusiontwo is backward reachable from $\vertex_\emptyclause$. The latter fact implies that an eventuality clause with the same eventuality literal as in premise 2 of rule \refloopitinitc is present in the \uc in SNF. Hence, an execution of \refalgltlsattrp with input $\setclausesuc$ will produce premise 2 of \refloopitinitc as required. This is formally proven in Lemma \ref{thm:noevbacklinkloopinitc}.
\end{sloppypar}

Finally, (\ref{enum:thmunsatisfiabilityofcore:4}) corresponds to considering only the last iteration of a successful loop search to obtain the \uc $\setclausesuc$. (\ref{enum:thmunsatisfiabilityofcore:4}) is obtained by understanding that in a BFS loop search iteration the premises 1 of rule \refloopitinitc essentially constitute a hypothetical fixed point; if the BFS loop search iteration is successful, then the hypothetical fixed point is proven to be an actual fixed point. For the correctness of a proof of the unsatisfiability of $\setclauses$ it is only relevant that this hypothetical fixed point is shown to be an actual fixed point but not how the hypothesis is obtained. This is formalized in Lemma \ref{thm:noloopitbacklinkloopinitc}. This concludes the proof.
\qed
\end{proof}

\begin{lemma}\label{thm:nobacklinkaugtwo}
Let $\setclauses$ be an unsatisfiable set of SNF clauses, let $\graph$ be an optimized resolution graph, and let $\graphp$ be the subgraph according to Def.~\ref{def:coreextractionopt}. Let $\vertexzero$ be a vertex in $\graphp$ labeled with a clause $\clausezero = \fgnxclause{(\fbnot{\fapwaitfor{\apres}})}{\fbor{\apres}{\fapwaitfor{\apres}}}$ created by augmentation \refaugtwo from some eventuality clause $\fgneclause{\fbor{\ap_1}{\fbor{\ldots}{\ap_\maxind}}}{\apres} \in \setclauses$ with eventuality literal $\apres$. Then there is a vertex $\vertexone$ in $\graphp$ labeled with an eventuality clause $\clauseone = \fgneclause{\fbor{\app_1}{\fbor{\ldots}{\app_\maxindp}}}{\apres}$ $\in \setclauses$ with eventuality literal $\apres$.
\end{lemma}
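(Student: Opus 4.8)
Write $\fapwaitfor{\apres}$ for the fresh proposition that augmentation introduces for the eventuality literal $\apres$, and call a clause \emph{tainted} if it contains the proposition $\fapwaitfor{\apres}$. The plan is to prove the stronger statement that \emph{if any vertex of $\graphp$ is labeled with a tainted clause, then $\graphp$ contains a vertex labeled with an eventuality clause with eventuality literal $\apres$}, and then apply it to $\vertex$, whose label $\clause = \fgnxclause{\fbnot{\fapwaitfor{\apres}}}{\fbor{\apres}{\fapwaitfor{\apres}}}$ is tainted. The argument rests on one bookkeeping inspection of Tab.~\ref{tab:productionrules}: among all production rules, the only ones whose conclusion can be tainted while no premise is are \refaugone, \refaugtwo, and \refloopconclusiontwo with resolved eventuality literal $\apres$; each of these has, among its premises, an eventuality clause with eventuality literal $\apres$. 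For every other rule the conclusion is tainted only if a premise is (for the resolution rules a resolvent's literals lie in the union of the premises' literals; for \refloopitinitx, \refloopitinitn, \refloopconclusionone the proposition $\fapwaitfor{\apres}$ is inherited verbatim from a premise, since $\dap$-heads, eventuality literals and $\false$ never equal $\fapwaitfor{\apres}$), and --- \emph{except for} \refloopitinitc --- that tainted premise retains its edge to the conclusion (columns~8/9 show \yes there).

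\textbf{Main line.} Since $\fapwaitfor{\apres}$ is fresh, no clause of $\setclauses$ is tainted, so a tainted vertex is always a rule conclusion. Pick a tainted vertex $\vertexpp$ of $\graphp$ generated earliest. It cannot be produced by a resolution rule or by \refloopitinitx, \refloopitinitn, \refloopconclusionone: each of these would supply a tainted premise $u$ with a retained edge $u \rightarrow \vertexpp$, so $u$ would itself be in $\graphp$ (backward reachability is transitive) and generated earlier, contradicting minimality. If $\vertexpp$'s label is a \refaugone-conclusion, then taintedness forces the associated eventuality literal to be $\apres$ (the naming $\apres \mapsto \fapwaitfor{\apres}$ is injective), the eventuality-clause premise has eventuality literal $\apres$, and --- \refaugone keeping the edge from premise~1 --- lies in $\graphp$; this is exactly the conclusion we want. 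The cases \refloopconclusiontwo and \refloopitinitc are precisely points (ii)--(iv) of the "remains to show" list and are discharged by the companion lemmas (for \refloopitinitc those lemmas put the tainted premise --- the loop clause of the previous iteration --- into $\graphp$, permitting a further descent to an earlier tainted vertex).

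\textbf{The obstacle: the \refaugtwo case.} What remains is the situation where $\vertexpp$ is the main-partition vertex $\vertex$ labeled $\clause$, which is the crux of the lemma exactly because the would-be edge from the eventuality-clause premise into $\vertex$ is the one we chose to drop. I would handle it by a forward argument along a path $\vertex = \vertex_0 \rightarrow \cdots \rightarrow \vertex_t = \vertex_\emptyclause$ witnessing $\vertex \in \graphp$. The literal $\fbnot{\fapwaitfor{\apres}}$ sits in the \emph{now} part of $\clause$, and every rule that can consume a global clause with non-empty next part (\refstepnx with it as premise~2, \refstepxx, \refloopitinitx) copies the now part of that premise into the now part of its conclusion; hence $\fbnot{\fapwaitfor{\apres}}$ persists in the now parts of $\vertex_1, \vertex_2, \dots$ until it is the literal resolved away at some step. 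At that step the partner premise contains $\fapwaitfor{\apres}$ --- positively in a now part (\refinitin, \refstepnn) or positively in a next part (\refstepnx with our clause as premise~1) --- and, since all resolution rules retain both premise edges, that partner premise lies in $\graphp$. Finally, a positive occurrence of $\fapwaitfor{\apres}$ in a now part is produced only by \refaugone (\refaugtwo and \refloopconclusiontwo put $\fbnot{\fapwaitfor{\apres}}$ there), and a positive occurrence in a next part only by \refaugtwo or by \refstepnx moving a now-part occurrence forward; so tracing the partner premise back along retained edges reaches either an \refaugone-conclusion (done, via its eventuality-clause premise) or passes through an \refstepnx that moved a positive now-part $\fapwaitfor{\apres}$ forward, which again exhibits an \refaugone-conclusion in $\graphp$.

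The genuinely delicate point --- and the one I expect to be the main obstacle --- is making this last tracking rigorous: one must rule out that the now-literal $\fbnot{\fapwaitfor{\apres}}$ and the next-literal $\fapwaitfor{\apres}$ of $\clause$ are consumed only against each other or against descendants of $\clause$ without ever invoking an \refaugone-conclusion, and the bookkeeping is awkward because \refstepnx resolves a now-part literal against a next-part literal and because \refloopitinitx copies clauses between partitions; everything else is a routine inspection of Tab.~\ref{tab:productionrules}.
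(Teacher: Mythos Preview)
Your forward-tracking argument in the ``obstacle'' section is essentially the paper's proof: follow a path from $\vertex$ toward $\vertex_\emptyclause$, locate the step at which the literal $\fbnot{\fapwaitfor{\apres}}$ (or its $\fnextname$-shifted version) disappears, observe that this must be a resolution step whose partner premise carries a positive $\fapwaitfor{\apres}$, and then trace that positive occurrence backward. The paper carries out exactly this two-phase trace with an exhaustive case analysis on the production rules.

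Your ``delicate point'' is real, and the paper does resolve it: when the backward trace of the positive occurrence reaches another \refaugtwo conclusion (your worry that $\clause$'s literals are consumed only against descendants of $\clause$), the paper observes that the main partition of $\graphp$ is a finite DAG, so this case (its Case~1.4) can recur only finitely often and must eventually yield an \refaugone conclusion. You flagged the issue correctly but stopped short of this termination argument.

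Your ``Main line'' framing, however, has two problems. First, appealing to the ``companion lemmas'' for the \refloopconclusiontwo and \refloopitinitc cases is circular: in the paper Lemma~\ref{thm:noevbacklinkloopconclusiontwo} is proved \emph{analogously} to the present lemma, and Lemma~\ref{thm:noevbacklinkloopinitc} in turn relies on it, so you cannot invoke them here as black boxes. Second, your parenthetical claim that for \refloopitinitc ``those lemmas put the tainted premise --- the loop clause of the previous iteration --- into $\graphp$'' is false: both premise edges of \refloopitinitc are dropped (columns~8 and~9 are \no), and Lemma~\ref{thm:noevbacklinkloopinitc} supplies an eventuality clause, not the loop-partition premise. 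The paper sidesteps this entirely by never needing an ``earliest tainted vertex'' and instead handling each rule directly within the single two-phase trace; I would recommend dropping the Main-line wrapper and carrying the DAG-finiteness argument through to completion.
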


\begin{proof}
There exists a path $\ppath$ of non-zero length in $\graphp$ from $\vertexzero$ to the unique vertex $\vertex_\emptyclause$ in the main partition $\mainpartition$ labeled with the empty clause $\emptyclause$.
On the path $\ppath$ there exist two vertices $\vertextwo, \vertexthree$ such that $\vertextwo$ is labeled with a clause $\clausetwo$ that contains $\fbnot{\fapwaitfor{\apres}}$ or $\fnext{\fbnot{\fapwaitfor{\apres}}}$, while $\vertexthree$ and all of its successors on $\ppath$ are labeled with clauses that contain neither $\fbnot{\fapwaitfor{\apres}}$ nor $\fnext{\fbnot{\fapwaitfor{\apres}}}$.
Let $\clausethree$ be the clause labeling $\vertexthree$.
\begin{itemize}
\item \emph{Case 1.} $\clausethree$ is generated by initial or step resolution \refinitii, \refinitin, \refstepnn, \refstepnx, or \refstepxx from $\clausetwo$ and some other clause $\clausefour$. $\clausefour$ must contain $\fapwaitfor{\apres}$ or $\fnext{\fapwaitfor{\apres}}$. Moreover, there must be a path $\ppathp$ (possibly of zero length) that starts from a vertex $\vertexfive$ labeled with a clause $\clausefive$ and that ends in the vertex $\vertexfour$ labeled with $\clausefour$, such that each vertex on the path $\ppathp$ is labeled with a clause that contains $\fapwaitfor{\apres}$ or $\fnext{\fapwaitfor{\apres}}$. Finally, $\fapwaitfor{\apres}$ or $\fnext{\fapwaitfor{\apres}}$ must be present in $\clausefive$ either because $\clausefive$ is contained in the set of input clauses in SNF, $\setclauses$, or because $\clausefive$ is generated by some production rule that introduces $\fapwaitfor{\apres}$ or $\fnext{\fapwaitfor{\apres}}$ in the conclusion.
\begin{itemize}
\item \emph{Case 1.1.} $\clausefive$ is contained in the set of input clauses in SNF, $\setclauses$. Impossible: $\fapwaitfor{\apres}$ is a fresh proposition in \refaugone and \refaugtwo.
\item \emph{Case 1.2.} $\clausefive$ is generated by initial or step resolution \refinitii, \refinitin, \refstepnn, \refstepnx, or \refstepxx. Impossible: initial and step resolution do not generate literals that are not contained (modulo time-shifting) in at least one of the premises.
\item \emph{Case 1.3.} $\clausefive$ is generated by augmentation 1 \refaugone. By the construction of the resolution graph $\graph$ and the subgraph $\graphp$ there is an edge in $\graphp$ from a vertex $\vertexone$ in $\graphp$ labeled with an eventuality clause $\clauseone = \fgneclause{\fbor{\app_1}{\fbor{\ldots}{\app_\maxindp}}}{\apres} \in \setclauses$ with eventuality literal $\apres$ to $\vertexfive$.
\item \emph{Case 1.4.} $\clausefive$ is generated by augmentation 2 \refaugtwo, i.e., $\clausefive = \clausezero$. This introduces another occurrence of $\fbnot{\fapwaitfor{\apres}}$ to be ``resolved away''. Note that in the main partition only new clauses are generated from existing ones with edges leading from existing vertices labeled with existing clauses to new vertices labeled with new clauses. Therefore, the main partition of $\graphp$ is a finite directed acyclic graph, and this case cannot happen infinitely often.
\item \emph{Case 1.5.} $\clausefive$ is generated by BFS loop search initialization \refloopitinitx. Impossible: the production rule \refloopitinitx copies a clause verbatim. I.e., it cannot be the case that $\clausefive$ contains $\fapwaitfor{\apres}$ or $\fnext{\fapwaitfor{\apres}}$, while the premise does not.
\item \emph{Case 1.6.} $\clausefive$ is generated by BFS loop search initialization \refloopitinitn. Impossible: the production rule \refloopitinitn copies and time-shifts a clause. I.e., it cannot be the case that $\clausefive$ contains $\fnext{\fapwaitfor{\apres}}$, while the premise does not contain $\fapwaitfor{\apres}$.
\item \emph{Case 1.7.} $\clausefive$ is generated by BFS loop search initialization \refloopitinitc. Impossible: the production rule \refloopitinitc copies and time-shifts a clause from a previous BFS loop search iteration (or initializes with the empty clause $\emptyclause$) and disjoins with an eventuality literal $\fnext{\apresp}$. I.e., it cannot be the case that $\clausefive$ contains $\fnext{\fapwaitfor{\apres}}$, while the premise does not contain $\fapwaitfor{\apres}$.
\item \emph{Case 1.8.} $\vertexfive$ is linked to via BFS loop search subsumption \refloopitsub. This case can be ignored as BFS loop search subsumption \refloopitsub does not actually generate a clause but merely links existing ones.
\item \emph{Case 1.9.} $\clausefive$ is generated by BFS loop search conclusion 1 \refloopconclusionone. Impossible: production rule \refloopconclusionone copies all literals verbatim from a clause derived in loop search, copies all literals verbatim from an eventuality clause except for the eventuality literal $\apresp$ prefixed by $\ffinallyname$, and disjoins with the eventuality literal $\apresp$. I.e., it cannot be the case that $\clausefive$ contains $\fapwaitfor{\apres}$, while the premises do not.
\item \emph{Case 1.10.} $\clausefive$ is generated by BFS loop search conclusion 2 \refloopconclusiontwo. Impossible: production rule \refloopconclusiontwo copies and time-shifts all literals from a clause $\clausesix$ derived in loop search and disjoins with $\fbnot{\fapwaitfor{\apresp}}$ and $\fnext{\apresp}$ for some eventuality literal $\apresp$. I.e., it cannot be the case that $\clausefive$ contains $\fnext{\fapwaitfor{\apres}}$, while the  premise $\clausesix$ does not contain \fapwaitfor{\apres}.
\end{itemize}
\item \emph{Case 2.} $\clausethree$ is generated by augmentation \refaugone or \refaugtwo. Impossible: the premise of the production rules \refaugone and \refaugtwo cannot contain either $\fbnot{\fapwaitfor{\apres}}$ or $\fnext{\fbnot{\fapwaitfor{\apres}}}$, as $\fapwaitfor{\apres}$ is assumed to be a fresh proposition in \refaugone and \refaugtwo.
\item \emph{Case 3.} $\clausethree$ is generated by BFS loop search initialization \refloopitinitx. Impossible: the production rule \refloopitinitx copies a clause verbatim. I.e., it cannot be the case that $\clausetwo$ contains $\fbnot{\fapwaitfor{\apres}}$ or $\fnext{\fbnot{\fapwaitfor{\apres}}}$, while $\clausethree$ does not.
\item \emph{Case 4.} $\clausethree$ is generated by BFS loop search initialization \refloopitinitn. Impossible: the production rule \refloopitinitn copies and time-shifts a clause. I.e., it cannot be the case that $\clausetwo$ contains $\fbnot{\fapwaitfor{\apres}}$, while $\clausethree$ does not contain $\fnext{\fbnot{\fapwaitfor{\apres}}}$.
\item \emph{Case 5.} $\clausethree$ is generated by BFS loop search initialization \refloopitinitc. Impossible: the production rule \refloopitinitc copies and time-shifts a clause from a previous BFS loop search iteration (or initializes with the empty clause $\emptyclause$) and disjoins with an eventuality literal $\fnext{\apresp}$. I.e., it cannot be the case that $\clausetwo$ contains $\fbnot{\fapwaitfor{\apres}}$, while $\clausethree$ does not contain $\fnext{\fbnot{\fapwaitfor{\apres}}}$.
\item \emph{Case 6.} $\vertextwo$ and $\vertexthree$ are linked via BFS loop search subsumption \refloopitsub, i.e., a time-shifted version of $\clausetwo$ subsumes $\clausethree$. Impossible: \refloopitsub links from a clause with fewer literals to a clause with (modulo time-shifting) the same and more literals. I.e., it cannot be the case that $\clausetwo$ contains $\fbnot{\fapwaitfor{\apres}}$, while $\clausethree$ does not contain $\fnext{\fbnot{\fapwaitfor{\apres}}}$.
\item \emph{Case 7.} $\clausethree$ is generated by BFS loop search conclusion 1 \refloopconclusionone. Impossible: production rule \refloopconclusionone copies all literals verbatim from a clause derived in loop search, copies all literals verbatim from an eventuality clause except for the eventuality literal $\apresp$ prefixed by $\ffinallyname$, and disjoins with the eventuality literal $\apresp$. I.e., it cannot be the case that $\clausetwo$ contains $\fbnot{\fapwaitfor{\apres}}$, while $\clausethree$ does not.
\item \emph{Case 8.} $\clausethree$ is generated by BFS loop search conclusion 2 \refloopconclusiontwo. Impossible: production rule \refloopconclusiontwo copies and time-shifts all literals from a clause derived in loop search and disjoins with $\fbnot{\fapwaitfor{\apresp}}$ and $\fnext{\apresp}$ for some eventuality literal $\apresp$. I.e., it cannot be the case that $\clausetwo$ contains $\fbnot{\fapwaitfor{\apres}}$, while $\clausethree$ does not contain $\fnext{\fbnot{\fapwaitfor{\apres}}}$.
\end{itemize}
Notice that the only possible cases are case 1.3 and 1.4. Of those, case 1.4 can only happen a finite number of times and must be followed by an occurrence of case 1.3. This concludes the proof.
\qed
\end{proof}

\begin{lemma}\label{thm:noevbacklinkloopconclusiontwo}
Let $\setclauses$ be an unsatisfiable set of SNF clauses, let $\graph$ be an optimized resolution graph, and let $\graphp$ be the subgraph according to Def.~\ref{def:coreextractionopt}. Let $\vertex$ be a vertex in $\graphp$ labeled with a clause $\clause = \fgnxclause{(\fbnot{\fapwaitfor{\apres}})}{\fbor{\fbor{\app_1}{\fbor{\ldots}{\app_\maxindp}}}{\apres}}$ generated by BFS loop search conclusion 2 \refloopconclusiontwo from some eventuality clause $\fgneclause{\fbor{\ap_1}{\fbor{\ldots}{\ap_\maxind}}}{\apres} \in \setclauses$ with eventuality literal $\apres$ (and some other clause). Then there is a vertex $\vertexpp$ in $\graphp$ labeled with an eventuality clause $\clausepp = \fgneclause{\fbor{\appp_1}{\fbor{\ldots}{\appp_\maxindpp}}}{\apres} \in \setclauses$ with eventuality literal $\apres$.
\end{lemma}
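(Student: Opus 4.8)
The plan is to replay the proof of Lemma~\ref{thm:nobacklinkaugtwo} almost verbatim. The key observation is that the clause $\clause$ produced by \refloopconclusiontwo has the same shape with respect to the wait-for proposition $\fapwaitfor{\apres}$ as the clause produced by \refaugtwo in that lemma: it contains $\fbnot{\fapwaitfor{\apres}}$ in its present part and a time-shifted occurrence of $\apres$ (but no positive occurrence of $\fapwaitfor{\apres}$) in its next part, and $\fapwaitfor{\apres}$ is fresh for the eventuality literal $\apres$, being introduced only by \refaugone and \refaugtwo. Since the case analysis in the proof of Lemma~\ref{thm:nobacklinkaugtwo} only ever uses the syntactic form of $\clause$ and of the production rules, never how $\clause$ came about, essentially the same argument should go through.

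First I would note that, since $\vertex \in \graphp$ and its clause is non-empty, there is a path of non-zero length in $\graphp$ from $\vertex$ to the unique vertex in $\mainpartitionv$ labeled with $\emptyclause$, and that along this path the literal $\fbnot{\fapwaitfor{\apres}}$ (or its time-shift $\fnext{\fbnot{\fapwaitfor{\apres}}}$) must eventually vanish. The only rule that can remove it is a resolution step (\refinitii, \refinitin, \refstepnn, \refstepnx, or \refstepxx) against a clause $\clausepppp$ that contains $\fapwaitfor{\apres}$, resp.\ $\fnext{\fapwaitfor{\apres}}$, positively; every other generating rule leaves the literal in place, which is exactly Cases~2--8 of Lemma~\ref{thm:nobacklinkaugtwo}. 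I would then trace this positive $\fapwaitfor{\apres}$ backward along $\graphp$ to its first appearance, in a clause $\clauseppppp$ at a vertex $\vertexppppp$, and carry out the case split of Case~1 of Lemma~\ref{thm:nobacklinkaugtwo} on how $\clauseppppp$ is generated. Since $\fapwaitfor{\apres}$ is fresh it is not in $\setclauses$, and since initial/step resolution, \refloopitinitx, \refloopitinitn, \refloopitinitc, \refloopitsub, \refloopconclusionone, and \refloopconclusiontwo never introduce a positive $\fapwaitfor{\apres}$ that is not already present (modulo time-shift) in a premise, the only surviving cases are that $\clauseppppp$ is generated by \refaugone or by \refaugtwo.

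In the \refaugone case, since \refaugone is applied to an eventuality clause and $\fapwaitfor{\apres}$ is specific to the eventuality literal $\apres$, that eventuality clause has the form $\feclause{\fbor{\appp_1}{\fbor{\ldots}{\appp_\maxindpp}}}{\apres} \in \setclauses$ (eventuality clauses are never derived), and by construction of the resolution graph there is an edge in $\graphp$ from the vertex labeled with it to $\vertexppppp$; that vertex lies in $\mainpartitionv$ and is the vertex $\vertexpp$ claimed by the lemma. In the \refaugtwo case, $\clauseppppp$ reintroduces a present-part occurrence of $\fbnot{\fapwaitfor{\apres}}$ that must itself be resolved away, so the argument repeats; as in Case~1.4 of Lemma~\ref{thm:nobacklinkaugtwo}, the main partition of $\graphp$ is a finite directed acyclic graph, so this can happen only finitely often and must end with an application of \refaugone.

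I expect the work here to be bookkeeping rather than new ideas: one re-checks that each ``Impossible'' justification of Cases~1.1--1.2, 1.5--1.10, and 2--8 of Lemma~\ref{thm:nobacklinkaugtwo} still applies for this starting clause --- in particular that a positive $\fapwaitfor{\apres}$ that merely passes through a loop partition (copied by the loop-initialization rules and returned by \refloopconclusionone) is absorbed by the ``trace further back'' step rather than being a genuine new source --- and that the termination argument for the \refaugtwo-chain is unaffected by $\clause$ now being produced by \refloopconclusiontwo instead of \refaugtwo. Modulo that replay, the proof is identical to that of Lemma~\ref{thm:nobacklinkaugtwo}.
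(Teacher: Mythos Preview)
Your proposal is correct and is exactly the approach the paper takes: the paper's entire proof is the single sentence ``Analogous to the proof of Lemma~\ref{thm:nobacklinkaugtwo}.'' Your detailed replay of that case analysis, and in particular your handling of the \refaugtwo case (where $\clauseppppp$ need no longer equal $\clause$ but still reintroduces $\fbnot{\fapwaitfor{\apres}}$ and so feeds the same finite-DAG termination argument), fills in precisely the bookkeeping the paper leaves implicit.
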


\begin{proof}
Analogous to the proof of Lemma \ref{thm:nobacklinkaugtwo}.
\qed
\end{proof}

\begin{lemma}\label{thm:noevbacklinkloopinitc}
Let $\setclauses$ be an unsatisfiable set of SNF clauses, let $\graph$ be an optimized resolution graph, and let $\graphp$ be the subgraph according to Def.~\ref{def:coreextractionopt}. Let $\vertex$ be a vertex in $\graphp$ labeled with a clause $\clause = \fgxclause{\fbor{\app_1}{\fbor{\ldots}{\fbor{\app_\maxindp}{\apres}}}}$ generated by production rule \refloopitinitc from some eventuality clause $\fgneclause{\fbor{\ap_1}{\fbor{\ldots}{\ap_\maxind}}}{\apres} \in \setclauses$ with eventuality literal $\apres$ (and some other clause). Then there is a vertex $\vertexpp$ in $\graphp$ labeled with an eventuality clause $\clausepp = \fgneclause{\fbor{\appp_1}{\fbor{\ldots}{\appp_\maxindpp}}}{\apres} \in \setclauses$ with eventuality literal $\apres$.
\end{lemma}

\begin{proof}
By the construction of the optimized resolution graph $\graph$ (Def.~\ref{def:resolutiongraphtype}, \ref{def:resolutiongraphinitialization}, \ref{def:optimizedresolutiongraphupdate}) and its subgraph $\graphp$ (Def.~\ref{def:coreextractionopt}) $\vertex$ is included in $\graphp$ only if $\graphp$ also includes some vertex $\vertexp$ labeled with some clause $\clausep$ such that $\clausep$ was generated by BFS loop search conclusion \refloopconclusionone or \refloopconclusiontwo from the BFS loop search iteration of which $\clause$ is part.
\begin{itemize}
\item \emph{Case 1.} $\clausep$ is generated by BFS loop search conclusion 1 \refloopconclusionone. The claim follows from the construction of the optimized resolution graph $\graph$ and its subgraph $\graphp$. By Def.~\ref{def:optimizedresolutiongraphupdate} $\vertexp$ has an incoming edge from a vertex $\vertexpp$ labeled with an eventuality clause $\clausepp = \fgneclause{\fbor{\appp_1}{\fbor{\ldots}{\appp_\maxindpp}}}{\apres} \in \setclauses$ with eventuality literal $\apres$ and by Def.~\ref{def:coreextractionopt} $\vertexpp$ is included in $\graphp$ if $\vertexp$ is included.
\item \emph{Case 2.} $\clausep$ is generated by BFS loop search conclusion 2 \refloopconclusiontwo. In that case the claim follows directly from Lemma \ref{thm:noevbacklinkloopconclusiontwo}.
\end{itemize}
This concludes the proof.
\qed
\end{proof}

\begin{lemma}\label{thm:noloopitbacklinkloopinitc}
Let $\setclauses$ be a set of SNF clauses, assume an execution of \refalgltlsattrp on $\setclauses$, and let $\setclausesp \definedas \{\fgnclause{\fbor{\app_{\posp,1}}{\fbor{\ldots}{\app_{\posp,\maxindp_\posp}}}} \mid 1 \le \posp \le \maxindp\}$ and $\setclausespp \definedas \{\fgxclause{\fbor{\fbor{\ap_{\pos,1}}{\fbor{\ldots}{\ap_{\pos,\maxind_\pos}}}}{\apres}} \mid 1 \le \pos \le \maxind\}$ be the sets of clauses obtained in line \ref{line:fig:ltlsattrp:loopitchecksub1} and line \ref{line:fig:ltlsattrp:loopitchecksub2} of \refalgltlsattrp in the last iteration of a successful loop search for eventuality literal $\apres$.
Then, assuming $\setclauses$, $\{(\fglobally{\fbor{(\fbor{\app_{\posp,1}}{\fbor{\ldots}{\app_{\posp,\maxindp_1}}}}{\fnext{\fglobally{\fbnot{\apres}}}})}) \mid 1 \le \posp \le \maxindp\}$ is provable.
\end{lemma}

\begin{proof}
After initialization of a BFS loop search iteration in line \ref{line:fig:ltlsattrp:loopitinit} of \refalgltlsattrp there are three sets of clauses according to the three production rules for initializing a BFS loop search iteration. Clauses generated by \refloopitinitx and \refloopitinitn are (partly time-shifted) duplicates of clauses derived so far in the main partition. \refloopitinitc generates the set of clauses $\setclausespp$. From these three sets saturation restricted to rule \refstepxx in line \ref{line:fig:ltlsattrp:loopitsat} derives another set of clauses, $\setclausesp$. Taking the restriction of saturation to rule \refstepxx into account, that BFS loop search iteration has established that, assuming $\setclauses$, the following fact is provable:
{
\begin{equation}
\fglobally{(\fbimplies{(\bigwedge_{1 \le \pos \le \maxind} \fnext{(\fbor{\fbor{\ap_{\pos,1}}{\fbor{\ldots}{\ap_{\pos,\maxind_\pos}}}}{\apres})})}{\bigwedge_{1 \le \posp \le  \maxindp} (\fbor{\app_{\posp,1}}{\fbor{\ldots}{\app_{\posp,\maxindp_\posp}}}))}}\label{eqn:thmunsatisfiabilityofcore1:app}.
\end{equation}
}
Moreover, because for a successful BFS loop search iteration subsumption in line \ref{line:fig:ltlsattrp:loopitchecksub3} succeeds, the following fact is also provable:
{
\begin{equation}
\bigwedge_{1 \le \pos \le \maxind} \; \bigvee_{1 \le \posp \le \maxindp} \fglobally{(\fbimplies{(\fbor{\app_{\posp,1}}{\fbor{\ldots}{\app_{\posp,\maxindp_\posp}}})}{(\fbor{\ap_{\pos,1}}{\fbor{\ldots}{\ap_{\pos,\maxind_\pos}}})})}\label{eqn:thmunsatisfiabilityofcore2:app}.
\end{equation}
}
We rewrite \eqref{eqn:thmunsatisfiabilityofcore1:app} and \eqref{eqn:thmunsatisfiabilityofcore2:app} as follows:
{
\begin{align}
& \;\;\fglobally{(\fbimplies{(\bigwedge_{1 \le \pos \le \maxind} \fnext{(\fbor{\fbor{\ap_{\pos,1}}{\fbor{\ldots}{\ap_{\pos,\maxind_\pos}}}}{\apres})})}{\bigwedge_{1 \le \posp \le  \maxindp} (\fbor{\app_{\posp,1}}{\fbor{\ldots}{\app_{\posp,\maxindp_\posp}}})})} \nonumber \\
\proofbiimplies & \;\;\fglobally{\bigwedge_{1 \le \posp \le  \maxindp} (\fbimplies{(\bigwedge_{1 \le \pos \le \maxind} \fnext{(\fbor{\fbor{\ap_{\pos,1}}{\fbor{\ldots}{\ap_{\pos,\maxind_\pos}}}}{\apres})})}{(\fbor{\app_{\posp,1}}{\fbor{\ldots}{\app_{\posp,\maxindp_\posp}}})})} \nonumber \\
\proofbiimplies & \;\;\bigwedge_{1 \le \posp \le  \maxindp} \fglobally{(\fbimplies{(\bigwedge_{1 \le \pos \le \maxind} \fnext{(\fbor{\fbor{\ap_{\pos,1}}{\fbor{\ldots}{\ap_{\pos,\maxind_\pos}}}}{\apres})})}{(\fbor{\app_{\posp,1}}{\fbor{\ldots}{\app_{\posp,\maxindp_\posp}}})})} \nonumber \\
\proofbiimplies & \;\;\bigwedge_{1 \le \posp \le  \maxindp} \fglobally{(\fbimplies{(\fbnot{(\fbor{\app_{\posp,1}}{\fbor{\ldots}{\app_{\posp,\maxindp_\posp}}})})}{\fbnot{\bigwedge_{1 \le \pos \le \maxind}\fnext{(\fbor{\fbor{\ap_{\pos,1}}{\fbor{\ldots}{\ap_{\pos,\maxind_\pos}}}}{\apres})}}})} \nonumber \\
\proofbiimplies & \;\;\bigwedge_{1 \le \posp \le  \maxindp} \fglobally{(\fbimplies{(\fbnot{(\fbor{\app_{\posp,1}}{\fbor{\ldots}{\app_{\posp,\maxindp_\posp}}})})}{\bigvee_{1 \le \pos \le \maxind} \fnext{\fbnot{(\fbor{\ap_{\pos,1}}{\fbor{\fbor{\ldots}{\ap_{\pos,\maxind_\pos}}}{\apres}})}}})} \nonumber \\
\proofbiimplies & \;\;\bigwedge_{1 \le \posp \le  \maxindp} \fglobally{(\fbimplies{(\fbnot{(\fbor{\app_{\posp,1}}{\fbor{\ldots}{\app_{\posp,\maxindp_\posp}}})})}{\bigvee_{1 \le \pos \le \maxind} \fnext{(\fband{(\fbnot{(\fbor{\ap_{\pos,1}}{\fbor{\ldots}{\ap_{\pos,\maxind_\pos}}})})}{\fbnot{\apres}})}})} \nonumber \\
\proofbiimplies & \;\bigwedge_{1 \le \posp \le  \maxindp} \fglobally{(\fbimplies{(\fbnot{(\fbor{\app_{\posp,1}}{\fbor{\ldots}{\app_{\posp,\maxindp_\posp}}})})}{(\fband{(\fnext{\fbnot{\apres}})}{\bigvee_{1 \le \pos \le \maxind}\fnext{\fbnot{(\fbor{\ap_{\pos,1}}{\fbor{\ldots}{\ap_{\pos,\maxind_\pos}}})}}})})} \label{eqn:thmunsatisfiabilityofcore3:app},
\end{align}
}
{
\begin{eqnarray}
& & \bigwedge_{1 \le \pos \le \maxind} \bigvee_{1 \le \posp \le \maxindp} \fglobally{(\fbimplies{(\fbor{\app_{\posp,1}}{\fbor{\ldots}{\app_{\posp,\maxindp_\posp}}})}{(\fbor{\ap_{\pos,1}}{\fbor{\ldots}{\ap_{\pos,\maxind_\pos}}})})} \nonumber \\
& \proofbiimplies & \bigwedge_{1 \le \pos \le \maxind} \bigvee_{1 \le \posp \le \maxindp} \fglobally{(\fbimplies{(\fbnot{(\fbor{\ap_{\pos,1}}{\fbor{\ldots}{\ap_{\pos,\maxind_\pos}}})})}{\fbnot{(\fbor{\app_{\posp,1}}{\fbor{\ldots}{\app_{\posp,\maxindp_\posp}}})}})} \label{eqn:thmunsatisfiabilityofcore4:app}.
\end{eqnarray}
}
Putting \eqref{eqn:thmunsatisfiabilityofcore3:app} and \eqref{eqn:thmunsatisfiabilityofcore4:app} together, we obtain \eqref{eqn:thmunsatisfiabilityofcore5:app}, which is exactly the premise required to perform eventuality resolution with an eventuality clause with eventuality literal $\apres$ \cite{MFisherCDixonMPeim-ACMTrComputationalLogic-2001}:
{
\begin{equation}
\label{eqn:thmunsatisfiabilityofcore5:app}
\bigwedge_{1 \le \posp \le \maxindp} \fglobally{(\fbor{\fbor{\app_{\posp,1}}{\fbor{\ldots}{\app_{\posp,\maxindp_\posp}}}}{\fnext{\fglobally{\fbnot{\apres}}}})}.\\
\end{equation}
}
This concludes the proof.
\qed
\end{proof}

\mycorollary%
{\thmcoreisunsatoptcor}%
{thm:coreisunsatoptcor}%
{\UC in SNF via \tr from an Optimized Resolution Graph is \UC in SNF}%
{Let $\setclauses$ be an unsatisfiable set of SNF clauses, and let $\setclausesuc$ be a \uc of $\setclauses$ in SNF via \tr from an optimized resolution graph. Then $\setclausesuc$ is a \uc of $\setclauses$ in SNF.}
\thmcoreisunsatoptcor{true}

Theorem \ref{thm:coreisunsatopt} shows that not including some premises during the update of the optimized resolution graph still leads to a \uc.
It does not discuss whether the remaining premises actually need to be included to guarantee a \uc.
For all premises of all production rules not excluded in the update of the optimized resolution graph in Def.~\ref{def:optimizedresolutiongraphupdate} it turns out that they are indeed required to obtain a \uc.
In other words, for any of the premises not excluded in the update of the optimized resolution graph in Def.~\ref{def:optimizedresolutiongraphupdate} there exists a set of SNF clauses $\setclauses$ such that excluding that particular premise from the update of the optimized resolution graph for $\setclauses$ in Def.~\ref{def:optimizedresolutiongraphupdate} and from the subsequent extraction of a \uc in SNF in Def.~\ref{def:coreextractionopt} leads to a satisfiable rather than unsatisfiable set of clauses $\setclausesuc \subseteq \setclauses$.

\myproposition%
{\thmminimalitypremises}%
{thm:minimalitypremises}%
{Minimality of Set of Premises to Include in Optimized Resolution Graph}%
{The set of premises included in the update of the optimized resolution graph in Def.~\ref{def:optimizedresolutiongraphupdate} is minimal.
}%
\begin{sloppypar}
\thmminimalitypremises{true}
\end{sloppypar}

\begin{proof}
The proof is obtained by providing suitable examples. For a given premise \premise of some production rule \refinferencerule{rule} with conclusion \conclusion the need for inclusion of edges between instances of \premise and \conclusion induced by \refinferencerule{rule} in the optimized resolution graph to obtain a \uc can be established as follows.\footnote{Note that this proof assumes that there are no edges between instances of the premise of \refaugtwo, the premises of \refloopitinitc, and premise 2 of \refloopconclusiontwo and their conclusions induced by these production rules as indicated in Def.~\ref{def:optimizedresolutiongraphupdate}. I.e., different minimal sets of premises to include in an optimized resolution graph may exist.}
Let $\setclauses$ be an unsatisfiable set of SNF clauses, let $\graph$ be an optimized resolution graph, let $\vertex_\emptyclause$ be the (unique) vertex in the main partition $\mainpartitionv$ of the optimized resolution graph $\graph$ labeled with the empty clause $\emptyclause$, and let $\setclausesuc$ be the \uc of $\setclauses$ in SNF via \tr from an optimized resolution graph.
Assume that $\setclausesuc$ is a minimal \uc, i.e., for any $\clausep \in \setclausesuc$ we have that $\setclausesuc \setminus \{\clausep\}$ is satisfiable.
Now, if removing all edges between instances of \premise and \conclusion induced by \refinferencerule{rule} from $\graph$ makes a vertex $\vertex_\clausep$ in $\mainpartitionv$ labeled with $\clausep \in \setclausesuc$ backward unreachable from $\vertex_\emptyclause$, then the \uc obtained without including edges between instances of \premise and \conclusion induced by \refinferencerule{rule} clearly would not be unsatisfiable.

Hence, for all premises of all production rules not excluded in the update of the optimized resolution graph in Def.~\ref{def:optimizedresolutiongraphupdate} we need to provide triples of premises \premise of production rules \refinferencerule{rule}, minimally unsatisfiable SNFs $\setclauses \equiv \setclausesuc$, and subgraphs $\graphp$ of optimized resolution graphs with the following properties.
\begin{inparaenum}[(i)]
\item $\graphp$ is the subgraph according to Def.~\ref{def:coreextractionopt} for $\setclauses$.
\item There exists a vertex $\vertex_\clausep$ in $\graphp$ labeled with $\clausep \in \setclauses$ and an edge $\edge$ that is an instance of \premise and \conclusion induced by \refinferencerule{rule} such that removal of $\edge$ from $\graphp$ makes $\vertex_\clausep$ backward unreachable from $\vertex_\emptyclause$.
\end{inparaenum}
In the graphs below $\edge$ and $\vertex_\clausep$ are marked dashed, blue. The vertex labels use \trp syntax.

Note that the search for such triples can be supported by a corresponding modification to the temporal resolution solver. For the more complex cases candidates were obtained in that way and then optimized by hand.

Below we provide triples for \refinitii and \refloopitinitx.
The triples for \refinitin, \refstepnn, \refstepnx, \refstepxx are trivially obtained as the one for \refinitii.
The triple for \refloopitinitx also serves as triple for premises 1 and 2 of \refloopconclusionone.
The remaining triples are more complex. They can be found in \refformalcoreextraction.

{
\newcommand{\tablineone}[2]{
\begin{samepage}
\noindent
\begin{tabular}{lc}
\hline
\\[-0.5ex]
\begin{minipage}[c]{0.65\linewidth}
{#1}
\end{minipage}
&
\begin{minipage}[c]{0.25\linewidth}
\begin{center}
{#2}
\end{center}
\end{minipage}
\end{tabular}
}

\newcommand{\tablinethree}{
\\
\begin{center}
}

\newcommand{\tablinefour}{
\end{center}
\vspace{0.75ex}
\end{samepage}

}

\vspace{2.5ex}
\tablineone
{\refinitii}
{$\{\ficlause{a}, \ficlause{\fbnot{a}}\}$}
\tablinethree
\adjustbox{scale=0.67,keepaspectratio=true}{
\iftrue
\begin{tikzpicture}[>=latex,line join=bevel,scale=0.75]
\node (1) at (132bp,18bp) [draw,ellipse] {or([not a])};
  \node (0) at (33bp,18bp) [draw=blue,ellipse,ultra thick, densely dashed, text=blue] {or([a])};
  \node (2) at (82bp,106bp) [draw,ellipse] {or([])};
  \draw [->] (1) ..controls (114.79bp,48.6bp) and (104.68bp,65.988bp)  .. (2);
  \definecolor{strokecol}{rgb}{0.0,0.0,0.0};
  \pgfsetstrokecolor{strokecol}
  \draw (126.5bp,62bp) node {init-ii};
  \draw [blue,->,ultra thick,densely dashed] (0) ..controls (49.611bp,48.154bp) and (59.577bp,65.646bp)  .. (2);
  \draw (77.5bp,62bp) node[blue] {init-ii};
\end{tikzpicture}
\else
\begin{dot2tex}[options={--cache --graphstyle "scale=0.75"}]
digraph G {
graph [rankdir="BT"];
0[label="or([a])",color=blue,style="ultra thick, densely dashed, text=blue"];
1[label="or([not a])"];
2[label="or([])"];
1->2 [label="init-ii"];
0->2 [label="init-ii",color=blue,lblstyle="blue",style="ultra thick, densely dashed"];
}
\end{dot2tex}
\fi
}
\tablinefour

\tablineone
{\refloopitinitx}
{$
\begin{array}{l}
\hspace{-0.5em}\{\ficlause{a},\\
(\fglobally{(\fbor{(\fbnot{a})}{\fnext{a}})}),\\
\fgeclause{\fbnot{a}}\}
\end{array}
$}
\tablinethree
\adjustbox{scale=0.67,keepaspectratio=true}{
\iftrue
\begin{tikzpicture}[>=latex,line join=bevel,scale=0.75]
\node (11) at (140bp,370bp) [draw,ellipse] {or([])};
  \node (10) at (222bp,282bp) [draw,ellipse] {always(or([not a]))};
  \node (1) at (423bp,18bp) [draw=blue,ellipse,ultra thick, densely dashed, text=blue] {always(or([not a,next(a)]))};
  \node (0) at (93bp,282bp) [draw,ellipse] {or([a])};
  \node (2) at (118bp,194bp) [draw,ellipse] {always(or([sometime(not a)]))};
  \node (5) at (423bp,106bp) [draw,ellipse] {always(or([not a,next(a)]))};
  \node (7) at (471bp,282bp) [draw,ellipse] {always(or([next(not a)]))};
  \node (8) at (423bp,194bp) [draw,ellipse] {always(or([not a]))};
  \draw [->] (7) ..controls (509.94bp,254.77bp) and (520.19bp,242.02bp)  .. (512bp,230bp) .. controls (506.32bp,221.66bp) and (498.33bp,215.31bp)  .. (8);
  \definecolor{strokecol}{rgb}{0.0,0.0,0.0};
  \pgfsetstrokecolor{strokecol}
  \draw (534bp,238bp) node {step-xx};
  \draw [->] (5) ..controls (423bp,136.25bp) and (423bp,152.18bp)  .. (8);
  \draw (443bp,150bp) node {step-xx};
  \draw [->] (0) ..controls (108.93bp,312.15bp) and (118.49bp,329.65bp)  .. (11);
  \draw (139.5bp,326bp) node {init-ini};
  \draw [->] (8) ..controls (335bp,209.81bp) and (306.78bp,217.9bp)  .. (283bp,230bp) .. controls (269.47bp,236.88bp) and (256.31bp,247.26bp)  .. (10);
  \draw (348bp,238bp) node {BFS-loop-conclusion1g};
  \draw [blue,->,ultra thick,densely dashed] (1) ..controls (423bp,48.254bp) and (423bp,64.183bp)  .. (5);
  \draw (472.5bp,62bp) node[blue] {BFS-loop-it-init-x};
  \draw [->] (10) ..controls (193.12bp,313.29bp) and (175.37bp,331.91bp)  .. (11);
  \draw (209bp,326bp) node {init-inn};
  \draw [->] (2) ..controls (112.05bp,222.9bp) and (112bp,236.42bp)  .. (119bp,246bp) .. controls (126.35bp,256.04bp) and (136.73bp,263.24bp)  .. (10);
  \draw (183.5bp,238bp) node {BFS-loop-conclusion1e};
  \draw [->] (8) ..controls (415.51bp,222.41bp) and (414.55bp,235.64bp)  .. (420bp,246bp) .. controls (422.44bp,250.63bp) and (425.8bp,254.77bp)  .. (7);
  \draw (464bp,238bp) node {BFS-loop-it-sub};
\end{tikzpicture}
\else
\begin{dot2tex}[options={--cache --graphstyle "scale=0.75"}]
digraph G {
graph [rankdir="BT"];
0[label="or([a])"];
1[label="always(or([not a,next(a)]))",color=blue,style="ultra thick, densely dashed, text=blue"];
2[label="always(or([sometime(not a)]))"];
5[label="always(or([not a,next(a)]))"];
7[label="always(or([next(not a)]))"];
8[label="always(or([not a]))"];
10[label="always(or([not a]))"];
11[label="or([])"];
1->5 [label="BFS-loop-it-init-x",color=blue,lblstyle="blue",style="ultra thick, densely dashed"];
7->8 [label="step-xx"];
5->8 [label="step-xx"];
8->7 [label="BFS-loop-it-sub"];
8->10 [label="BFS-loop-conclusion1g"];
2->10 [label="BFS-loop-conclusion1e"];
10->11 [label="init-inn"];
0->11 [label="init-ini"];
}
\end{dot2tex}
\fi
}
\tablinefour

\noindent
\begin{tabular}{p{0.65\linewidth}p{0.25\linewidth}}
\hline
&\\
\end{tabular}
}

\noindent
This concludes the proof.
\qed
\end{proof}

Notice that using optimized resolution graphs has no impact on the added complexity of \uc extraction.
Hence, we state Prop.~\ref{thm:corecomplexityopt}.

\myproposition%
{\thmcorecomplexityopt}%
{thm:corecomplexityopt}%
{Added Complexity of \UC Extraction from an Optimized Resolution Graph}%
{Let $\setclauses$ be an unsatisfiable set of SNF clauses. Construction of $\setclausesuc$ according to Def.~\ref{def:coreextractionopt} can be performed in time exponential in $\fcardinality{\allaps} + log(\fcardinality{\setclauses})$ in addition to the time required to run \refalgltlsattrp.}
\thmcorecomplexityopt{true}

\myremark%
{\thmrgpruningopt}%
{thm:rgpruningopt}%
{Pruning the Optimized Resolution Graph at Run Time}%
{The fact that not all premises need to be included during the update of the optimized resolution graph permits the following optimization for an optimized resolution graph during the execution of \refalgltlsattrp extended with the construction in Def.~\ref{def:coreextractionopt}.
Note that, because in the optimized resolution graph there is no edge from a vertex labeled with premise 1 to a vertex labeled with the conclusion of an application of production rule \refloopitinitc, there are no outgoing edges from a failed loop search iteration (lines \ref{line:fig:ltlsattrp:loopitinit}--\ref{line:fig:ltlsattrp:loopitend} of \refalgltlsattrp).
Therefore, if a loop search iteration fails, all vertices and edges in the partition of that loop search iteration can be pruned from the optimized resolution graph right away.
Moreover, Rem.~\ref{thm:rgpruning} also applies to optimized resolution graphs.
}
\thmrgpruningopt{true}

In the next Subsec.~\ref{sec:extractingaucinltl} and in the following Sec.~\ref{sec:postprocessingucs} we only consider optimized resolution graphs and \ucs in SNF via \tr from optimized resolution graphs, and we drop the designators ``optimized'' and ``from an optimized resolution graph''.

\paragraph{Example}

In Fig.~\ref{fig:completeexresgraphucexopt} we return to our running example from Sec.~\ref{sec:tralgorithmexample} and \ref{sec:extractingaucinsnffromaresolutiongraph} to illustrate the construction of the optimized resolution graph and its use to extract a \uc in SNF via \tr.
The graph in Fig.~\ref{fig:completeexresgraphucexopt} shows the same, unoptimized resolution graph as in Fig.~\ref{fig:completeexresgraphucex}.
The edges that are excluded from the optimized resolution graph according to Def.~\ref{def:optimizedresolutiongraphupdate} are marked dotted, red.
The reader can easily verify that the dashed, blue edges and the solid, black edges in Fig.~\ref{fig:completeexresgraphucexopt} are not excluded in the update of the optimized resolution graph in Def.~\ref{def:optimizedresolutiongraphupdate}, while the dotted, red edges are in fact excluded there.
As in Fig.~\ref{fig:completeexresgraphucex} the dashed, blue clauses and edges show the part of the resolution graph that is backward reachable from the empty clause $\emptyclause$.
Notice that now no clause in the loop partition for the first loop search iteration (in the rectangle in the middle shaded in dark green), which was unsuccessful, is backward reachable from $\emptyclause$.
Still, the subset of the starting clauses (in the rectangle at the bottom shaded in light red) that is backward reachable from $\emptyclause$ is the same as when using the unoptimized resolution graph in Fig.~\ref{fig:completeexresgraphucex}.
Therefore, also the \uc in SNF via \tr from an optimized resolution graph according to Def.~\ref{def:coreextractionopt} is unchanged (see \eqref{ex:complete:snfcore}).

{
\setlength\mylinewidth{\the\linewidth}

\begin{sidewaysfigure}
\vspace{98ex}
\centering
\resizebox{!}{0.955\mylinewidth}{
\begin{tikzpicture}[auto,
    =<triangle 45,
    every state/.style={draw=none},
    corev/.style={draw=blue,rectangle,rounded corners=1mm,very thick,densely dashed,text=blue},
    coreea/.style={blue,very thick,densely dashed},
    coreel/.style={draw=none,very thick,text=blue},
    corenov/.style={},
    corenoea/.style={},
    corenoel/.style={},
    noncoreea/.style={},
    noncoreel/.style={},
    corenononrgea/.style={red,very thick,loosely dotted},
    corenononrgel/.style={draw=none,very thick,text=red},
    noncorenonrgea/.style={red,very thick,loosely dotted},
    noncorenonrgel/.style={},
    xscale=0.875,yscale=0.75]

  \input{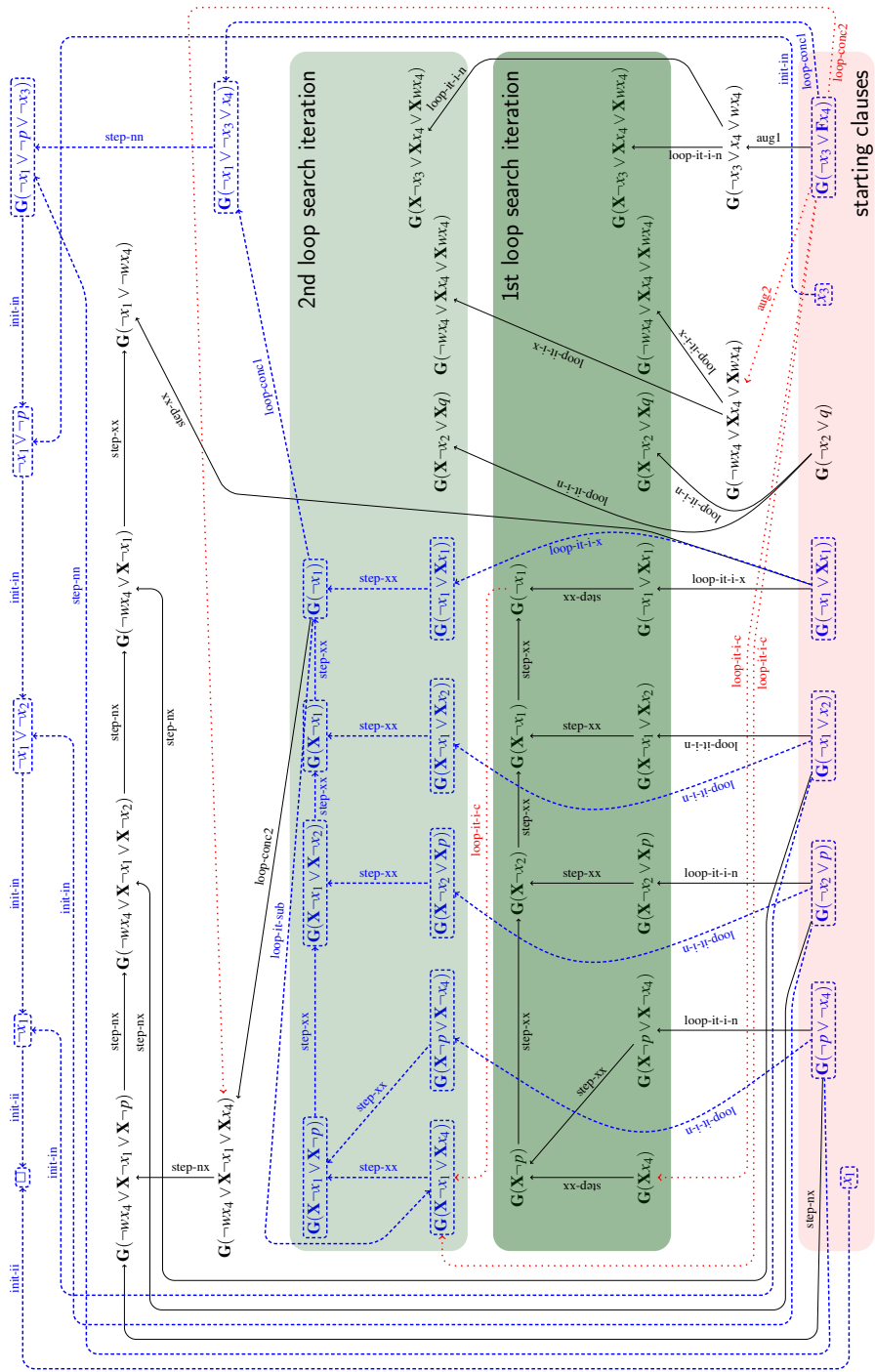}

\end{tikzpicture}
}
\begin{minipage}{0.63\linewidth}\caption{\label{fig:completeexresgraphucexopt} Example of an optimized resolution graph and \uc extraction in SNF from an optimized resolution graph}\end{minipage}
\end{sidewaysfigure}
}

\subsection{Extracting a \UC in LTL}
\label{sec:extractingaucinltl}

In Def.~\ref{def:ltlcoreextraction} we describe how to map a \uc in SNF back to a \uc in LTL.
The correctness of the construction is then proved in Thm.~\ref{thm:ltlcoreisunsat}.
The main idea in the proof is to compare the SNF of $\inp$ and of its \uc in LTL by partitioning the SNF clauses into three sets: one that is shared by the two SNFs, one that replaces some occurrences of propositions in $\fdCNF{\inp}$ with $\true$ or $\false$, and one whose clauses are only in $\fdCNF{\inp}$. Then one can show that the \uc of $\inp$ in SNF must be contained in the first partition.

\mydefinition{\UC in LTL}\label{def:ucinltl}
(cf.~Def.~10 of \cite{VSchuppan-SCP-2012})
Let $\inp$ be an unsatisfiable LTL formula.
Let $\inpuc$
\begin{inparaenum}[(i)]
\item be obtained from $\inp$ by replacing a set of positive polarity occurrences of subformulas of $\inp$ with $\true$ and a set of negative polarity occurrences of subformulas of $\inp$ with $\false$ and \label{enum:defucinltl:1}
\item be unsatisfiable.
\end{inparaenum}
Then $\inpuc$ is a \emph{\uc of $\inp$ in LTL}.
\end{definition}

\mydefinition{\UC in LTL from SNF}\label{def:ltlcoreextraction}
Let $\inp$ be an unsatisfiable LTL formula, let $\fdCNF{\inp}$ be its SNF, and let $\setclausesuc$ be a \uc of $\fdCNF{\inp}$ in SNF. Then the \emph{\uc of $\inp$ in LTL from SNF}, $\inpuc$, is obtained as follows. For each positive (resp.~negative) polarity occurrence of a proper subformula $\prt$ of $\inp$ with proposition $\fdCNFvar{\prt}$ according to Tab.~\ref{tab:ltltosnf}, replace $\prt$ in $\inp$ with $\true$ (resp.~$\false$) iff $\setclausesuc$ contains no clause with an occurrence of proposition $\fdCNFvar{\prt}$ that is marked {\color{blue}\setlength\fboxsep{1pt}\fbox{blue boxed}} in Tab.~\ref{tab:ltltosnf}. (We are sloppy in that we ``replace'' subformulas of replaced subformulas, while in effect they simply vanish.)
\end{definition}

\mytheorem%
{\thmltlcoreisunsat}%
{thm:ltlcoreisunsat}%
{Unsatisfiability of \UC in LTL from SNF}%
{Let $\inp$ be an unsatisfiable LTL formula, and let $\inpuc$ be a \uc of $\inp$ in LTL from SNF. Then $\inpuc$ is unsatisfiable.}
\thmltlcoreisunsat{true}

\begin{proof}
Let $\fdCNF{\inp}$ be the SNF of $\inp$, and let $\setclausesuc$ be a \uc of $\fdCNF{\inp}$ in SNF. Let $\setclausesucp$ be the \uc of $\setclausesuc$ in SNF via \tr from an optimized resolution graph.

First, consider the trivial case that $\inp$ is $\false$. Here, Def.~\ref{def:ltlcoreextraction} results in the \uc of $\inp$ in LTL being $\inpuc \definedas \false$ as desired.

Now assume that $\inp$ is not $\false$, i.e., the size of the syntax tree of $\inp$ is greater than 1. Let $\fdCNF{\inpuc}$ be the SNF of $\inpuc$. In order to prove that $\inpuc$ is unsatisfiable we show that the clauses of $\setclausesucp$ (which is unsatisfiable by Def.~\ref{def:ucinsnf} and Thm.~\ref{thm:coreisunsatopt}) are a subset of the SNF of $\inpuc$: $\setclausesucp \subseteq \fdCNF{\inpuc}$.

By comparing the clauses of $\fdCNF{\inp}$ with those of $\fdCNF{\inpuc}$ we can partition the clauses of $\fdCNF{\inp}$ into three sets:\footnote{We disregard the issue of the indices of the variables $\dCNFvar, \dCNFvarp, \ldots$.}
\begin{inparaenum}[(i)]
\item Some clauses are present in both $\fdCNF{\inp}$ and $\fdCNF{\inpuc}$: $\setclausesp_1 \definedas \fdCNF{\inp} \cap \fdCNF{\inpuc}$.
\item Some clauses are present in $\fdCNF{\inp}$ and are present in $\fdCNF{\inpuc}$ with one or more occurrences of some propositions $\dCNFvar, \dCNFvarp, \ldots$ that are marked {\color{blue}\setlength\fboxsep{1pt}\fbox{blue boxed}} in Tab.~\ref{tab:ltltosnf} replaced with $\true$ or $\false$. Call that set $\setclausesp_2$.
\item Some clauses are present in $\fdCNF{\inp}$ but not in $\fdCNF{\inpuc}$: $\setclausesp_3 \definedas \fdCNF{\inp} \setminus (\fdCNF{\inpuc} \cup \setclausesp_2)$.
\end{inparaenum}

By Def.~\ref{def:ucinsnf} and Cor.~\ref{thm:coreisunsatoptcor} $\setclausesucp$ is a subset of $\fdCNF{\inp}$: $\setclausesucp \subseteq \setclausesuc \subseteq \fdCNF{\inp}$.

\begin{sloppypar}
By Def.~\ref{def:ltlcoreextraction} $\setclausesuc$ and, therefore, also $\setclausesucp$ contains no member of $\setclausesp_2$; otherwise, there could not be one or more occurrences of some propositions $\dCNFvar, \dCNFvarp, \ldots$ that are marked {\color{blue}\setlength\fboxsep{1pt}\fbox{blue boxed}} in Tab.~\ref{tab:ltltosnf} replaced with $\true$ or $\false$ in the clauses of $\setclausesp_2$: $\setclausesucp \cap \setclausesp_2 = \emptyset$.
\end{sloppypar}

Now we argue that $\setclausesucp$ also contains no member of $\setclausesp_3$. First, let $\clause \in \setclausesp_3$ be an initial or a global clause. $\clause$ cannot be a member of $\setclausesucp$ as, in order to be part of a proof that derives the empty clause, all literals of $\clause$ need to be ``resolved away''. However, this is not possible for $\clause$ as for the literal $(\fbnotname) \fdCNFvar{\prt}$ on the left side of the implication in Tab.~\ref{tab:ltltosnf} there is no clause with an opposite literal in $\setclausesucp$. This follows by induction on the nesting depth of the subformula $\prt$ to which $(\fbnotname) \fdCNFvar{\prt}$ belongs from the occurrence of the superformula of $\prt$ that has been replaced with $\true$ or $\false$ in $\inpuc$. Now let $\clause \in \setclausesp_3$ be an eventuality clause. By Def.~\ref{def:coreextractionopt} for such $\clause$ to be part of $\setclausesucp$ there would have to be a clause $\clausep$ in the resolution graph $\graph$ according to Def.~\ref{def:resolutiongraphtype}, \ref{def:resolutiongraphinitialization}, \ref{def:optimizedresolutiongraphupdate} that was generated by production rules \refaugone or \refloopconclusionone and that is backward reachable in $\graph$ from the vertex labeled with the empty clause $\emptyclause$ in the main partition $\mainpartition$, $\vertex_\emptyclause$. Again, for the latter to happen, all literals of $\clausep$ would have to be ``resolved away'', which is impossible by a similar inductive argument as before.

Hence, we have shown that all clauses in $\setclausesucp$ come from $\setclausesp_1$, which is a subset of $\fdCNF{\inpuc}$. This concludes the proof.
\qed
\end{proof}

As a \uc in LTL from SNF fulfills requirement (\ref{enum:defucinltl:1}) in Def.~\ref{def:ucinltl}, we obtain Cor.~\ref{thm:ltlcoreisunsatcor}.

\mycorollary%
{\thmltlcoreisunsatcor}%
{thm:ltlcoreisunsatcor}%
{\UC in LTL from SNF is \UC in LTL}%
{Let $\inp$ be an unsatisfiable LTL formula, and let $\inpuc$ be a \uc of $\inp$ in LTL from SNF. Then $\inpuc$ is a \uc of $\inp$ in LTL.}
\thmltlcoreisunsatcor{true}

\myremark%
{\thmltlcoredet}%
{thm:ltlcoredet}%
{Determinants of \UC in LTL from SNF}%
{Note that an unsatisfiable LTL formula $\inp$ may have several \ucs in LTL.
If the \uc of $\inp$ is obtained via Def.~\ref{def:ltlcoreextraction}, \ref{def:coreextractionopt}, then Rem.~\ref{thm:coredet} on the determinants of a \uc in SNF applies here, too.
Moreover, using a different translation from LTL into SNF instead of Def.~\ref{def:ltltosnf} might also lead to a different \uc in LTL from SNF.
}
\thmltlcoredet{true}

We now complete the running example from Sec.~\ref{sec:preliminaries} and the previous subsection.
We show how to obtain a \uc of $\inp$ in LTL from SNF from the \uc of $\inp$ in SNF.
Remember that the formula $\inp$, which we would like to obtain a \uc of, is $\fband{(\fglobally{(\fband{p}{q})})}{\ffinally{\fbnot{p}}}$ (cf.~\eqref{ex:ltltosnf}).
Moreover, the SNF of $\inp$, $\setclauses$, on which in Fig.~\ref{fig:completeexresgraphucexopt} we executed \refalgltlsattrp, is $\{x_{1}$, $\fglobally{(\fbor{\fbnot{x_{1}}}{\fnext{x_{1}}})}$, $\fglobally{(\fbor{\fbnot{x_{1}}}{x_{2}})}$, $\fglobally{(\fbor{\fbnot{x_{2}}}{p})}$, $\fglobally{(\fbor{\fbnot{x_{2}}}{q})}$, $x_{3}$, $\fglobally{(\fbor{\fbnot{x_{3}}}{\ffinally{x_{4}}})}$, $\fglobally{(\fbor{\fbnot{p}}{\fbnot{x_{4}}})}\}$ (cf.~\eqref{ex:complete:snf}).
Finally, the \uc of $\inp$ in SNF, $\setclausesuc$, is $\setclauses \setminus \{\fglobally{(\fbor{\fbnot{x_{2}}}{q})}\}$ (cf.~\eqref{ex:complete:snfcore}).
Using the correspondence of $x_{1}$ to $\fdCNFvar{\fglobally{(\fband{p}{q})}}$, of $x_2$ to $\fdCNFvar{\fband{p}{q}}$, of $x_3$ to $\fdCNFvar{\ffinally{\fbnot{p}}}$, and of $x_4$ to $\fdCNFvar{\fbnot{p}}$ as well as the definition of implication we rewrite\footnote{Notice that this is done purely for the convenience of the reader and does not correspond to a step in our method for \uc extraction.} the \uc of $\inp$ in SNF, $\setclauses$ to \eqref{ex:complete:snfcoreexpanded}.
\begin{equation}\label{ex:complete:snfcoreexpanded}
\begin{array}{l}
\hspace{-0.5em}\{(\fdCNFvar{\fglobally{(\fband{p}{q})}}),\\
(\fglobally{(\fbimplies{\fdCNFvar{\fglobally{(\fband{p}{q})}}}{\fnext{\fdCNFvar{\fglobally{(\fband{p}{q})}}}})}),\\
(\fglobally{(\fbimplies{\fdCNFvar{\fglobally{(\fband{p}{q})}}}{\fdCNFvar{\fband{p}{q}}})}),\\
(\fglobally{(\fbimplies{\fdCNFvar{\fband{p}{q}}}{p})}),\\
(\fdCNFvar{\ffinally{\fbnot{p}}}),\\
(\fglobally{(\fbimplies{\fdCNFvar{\ffinally{\fbnot{p}}}}{\ffinally{\fdCNFvar{\fbnot{p}}}})}),\\
(\fglobally{(\fbimplies{\fdCNFvar{\fbnot{p}}}{\fbnot{p}})})\}
\end{array}
\end{equation}
By careful inspection of \eqref{ex:complete:snfcoreexpanded} we see that $q$ is the only subformula of $\inp$ whose proposition according to column 2 of Tab.~\ref{tab:ltltosnf}, which is $q$ itself, does not occur in any clause of \eqref{ex:complete:snfcoreexpanded} in a position that is marked {\color{blue}\setlength\fboxsep{1pt}\fbox{blue boxed}} in Tab.~\ref{tab:ltltosnf}.
Hence, $q$ is the only subformula to be replaced by $\true$ or $\false$ in $\inp$, yielding the \uc of $\inp$ in LTL from SNF, $\inpuc$, in \eqref{ex:complete:core}.
\begin{equation}\label{ex:complete:core}
\inpuc \definedas \fband{(\fglobally{(\fband{p}{\true})})}{\ffinally{\fbnot{p}}}
\end{equation}

\section{Post-Processing \UCs}
\label{sec:postprocessingucs}

In this section we adapt two techniques to our setting of \ucs for LTL that can be used to make the \ucs obtained so far more useful.
Both techniques will typically be applied after an initial \uc has been obtained; hence, we term this section post-processing of \ucs.
First, we discuss minimality of \ucs.
Then we show how to partition occurrences of propositions in a \uc according to whether they interact in a \tr proof of unsatisfiability, leading to a more fine-grained notion of \uc.
For a more advanced method of post-processing, which extracts information on the time points at which occurrences of subformulas are relevant to unsatisfiability, see \cite{VSchuppan-QAPL-2013}.

\subsection{Minimal \UCs}
\label{sec:minimalucs}

In this subsection we introduce notions of and algorithms to obtain minimal \ucs. The results are either straightforward (Prop.~\ref{thm:snfminucnotltlminuc}) or well known (Def.~\ref{def:snfltlmincore}, Rem.~\ref{thm:minucextraction}). Still, the material is needed in the experimental evaluation, and within the flow of the article this seems to be the appropriate place.

\mydefinition{Minimal \uc in SNF and LTL}\label{def:snfltlmincore}
(See, e.g., \cite{VSchuppan-SCP-2012}: irreducible \uc)
A \uc $\setclausesuc$ in SNF is \emph{minimal} iff $\forall \clause \in \setclausesuc \;.\; \setclausesuc \setminus \{\clause\}$ is satisfiable.
A \uc $\inpuc$ in LTL is \emph{minimal} iff there is no positive polarity occurrence of a subformula that can be replaced with $\true$ and no negative polarity occurrence of a subformula that can be replaced with $\false$ without making $\inpuc$ satisfiable.
\end{definition}

\myproposition%
{\thmsnfminucnotltlminuc}%
{thm:snfminucnotltlminuc}%
{Minimal \uc in SNF No Guarantee for Minimal \uc in LTL}%
{Let $\inp$ be an unsatisfiable LTL formula, let $\fdCNF{\inp}$ be its SNF, let $\setclausesuc$ be a minimal \uc of $\fdCNF{\inp}$ in SNF, and let $\inpuc$ be the \uc of $\inp$ in LTL from SNF. Then $\inpuc$ is not necessarily minimal.}%
\thmsnfminucnotltlminuc{true}

\begin{proof}
Let $\inp \definedas \fband{(\fbnot{p})}{(\fband{(\fglobally{\fbnot{q}})}{(\funtil{p}{q})})}$.
Then
\[
\fdCNF{\inp} \definedas
\begin{array}{l}
\hspace{-0.5em}\{(\fdCNFvar{\inp}),\\
\fgnclause{\fbimplies{\fdCNFvar{\inp}}{\fdCNFvar{\fbnot{\ap}}}},\\
\fgnclause{\fbimplies{\fdCNFvar{\fbnot{\ap}}}{\fbnot{\ap}}},\\
\fgnclause{\fbimplies{\fdCNFvar{\inp}}{\fdCNFvar{\fband{(\fglobally{\fbnot{\app}})}{(\funtil{\ap}{\app})}}}},\\
\fgnclause{\fbimplies{\fdCNFvar{\fband{(\fglobally{\fbnot{\app}})}{(\funtil{\ap}{\app})}}}{\fdCNFvar{\fglobally{\fbnot{\app}}}}},\\
\fgnclause{\fbimplies{\fdCNFvar{\fglobally{\fbnot{\app}}}}{\fnext{\fdCNFvar{\fglobally{\fbnot{\app}}}}}},\\
\fgnclause{\fbimplies{\fdCNFvar{\fglobally{\fbnot{\app}}}}{\fdCNFvar{\fbnot{\app}}}},\\
\fgnclause{\fbimplies{\fdCNFvar{\fbnot{\app}}}{\fbnot{\app}}},\\
\fgnclause{\fbimplies{\fdCNFvar{\fband{(\fglobally{\fbnot{\app}})}{(\funtil{\ap}{\app})}}}{\fdCNFvar{\funtil{\ap}{\app}}}},\\
\fgnclause{\fbimplies{\fdCNFvar{\funtil{\ap}{\app}}}{(\fbor{\app}{\ap})}},\\
\fgnclause{\fbimplies{\fdCNFvar{\funtil{\ap}{\app}}}{(\fbor{\app}{\fnext{\fdCNFvar{\fdCNFvar{\funtil{\ap}{\app}}}}})}},\\
\fgnclause{\fbimplies{\fdCNFvar{\funtil{\ap}{\app}}}{\ffinally{\app}}}\}
\end{array}
\]
is its SNF according to Def.~\ref{def:ltltosnf}.
A minimal \uc of $\fdCNF{\inp}$ in SNF is
\[
\setclausesuc \definedas
\begin{array}{l}
\hspace{-0.5em}\{(\fdCNFvar{\inp}),\\
\fgnclause{\fbimplies{\fdCNFvar{\inp}}{\fdCNFvar{\fbnot{\ap}}}},\\
\fgnclause{\fbimplies{\fdCNFvar{\fbnot{\ap}}}{\fbnot{\ap}}},\\
\fgnclause{\fbimplies{\fdCNFvar{\inp}}{\fdCNFvar{\fband{(\fglobally{\fbnot{\app}})}{(\funtil{\ap}{\app})}}}},\\
\fgnclause{\fbimplies{\fdCNFvar{\fband{(\fglobally{\fbnot{\app}})}{(\funtil{\ap}{\app})}}}{\fdCNFvar{\fglobally{\fbnot{\app}}}}},\\
\fgnclause{\fbimplies{\fdCNFvar{\fglobally{\fbnot{\app}}}}{\fdCNFvar{\fbnot{\app}}}},\\
\fgnclause{\fbimplies{\fdCNFvar{\fbnot{\app}}}{\fbnot{\app}}},\\
\fgnclause{\fbimplies{\fdCNFvar{\fband{(\fglobally{\fbnot{\app}})}{(\funtil{\ap}{\app})}}}{\fdCNFvar{\funtil{\ap}{\app}}}},\\
\fgnclause{\fbimplies{\fdCNFvar{\funtil{\ap}{\app}}}{(\fbor{\app}{\ap})}}\}.
\end{array}
\]
Mapping $\setclausesuc$ back to a \uc in LTL from SNF via Def.~\ref{def:ltlcoreextraction} yields $\inp$.
$\inp$ is not a minimal \uc, as the first conjunct $\fbnot{\ap}$ can be replaced with $\true$ while retaining unsatisfiability.
This concludes the proof.
\qed
\end{proof}

Note that given $\inp$ from the proof of Prop.~\ref{thm:snfminucnotltlminuc} our implementation actually produces $\inp$ as a \uc in LTL.
This is due to the fact that the \uc in SNF, $\setclausesuc$, is found during the first execution of saturation in line \ref{line:fig:ltlsattrp:sat1} of \refalgltlsattrp, while the contradiction between $\fglobally{\fbnot{\app}}$ and the eventuality part of $\funtil{\ap}{\app}$ requires loop search, which is only performed at a later stage.

Note also that the result just proved depends on the notion of \uc for LTL: the proof above obviously does not hold, if the notion of \uc allows to not only replace $\fglobally{\fbnot{\app}}$ with $\true$ but also, alternatively, with $\fbnot{\app}$ and $\funtil{\ap}{\app}$ not only with $\true$ but also, alternatively, with $\fbor{\ap}{\app}$.

\myremark%
{\thmminucextraction}%
{thm:minucextraction}%
{Extraction of Minimal \UCs}%
{A common way to obtain minimal \ucs works by repeatedly attempting to remove parts of a \uc (e.g., \ifnoappendix\cite{JChinneckEDravnieks-ORSAJournalOnComputing-1991,RBakkerFDikkerFTempelmanPWognum-IJCAI-1993,LZhang-PhDThesis-2003,JMarquesSilva-JournalOnMultipleValuedLogicAndSoftComputing-2012,AAwadRGoreZHouJThomsonMWeidlich-InformationSystems-2012,ETorlakFShengHoChangDJackson-FM-2008}\else\cite{JChinneckEDravnieks-ORSAJournalOnComputing-1991,RBakkerFDikkerFTempelmanPWognum-IJCAI-1993,LZhangSMalik-SAT-2003,LZhang-PhDThesis-2003,JMarquesSilva-JournalOnMultipleValuedLogicAndSoftComputing-2012,AAwadRGoreZHouJThomsonMWeidlich-InformationSystems-2012,ETorlakFShengHoChangDJackson-FM-2008}\fi). If the modified formula is still unsatisfiable, then the removal is made permanent; otherwise the removal is undone. The procedure continues until all parts of the \uc have been considered for removal. This is called \emph{deletion-based extraction of minimal \ucs} (e.g., \cite{JChinneckEDravnieks-ORSAJournalOnComputing-1991,JMarquesSilva-JournalOnMultipleValuedLogicAndSoftComputing-2012}).

In the case of LTL the algorithm attempts to replace positive polarity occurrences of subformulas with $\true$ and negative polarity ones with $\false$. It terminates, if no more replacements can be performed without making the resulting formula satisfiable.

Naturally, this method may be expensive due to the number of satisfiability tests to be performed. It is therefore often used to minimize a \uc that has been obtained by other means such as those described in Sec.~\ref{sec:ucextraction} (see, e.g., \ifnoappendix\cite{JChinneckEDravnieks-ORSAJournalOnComputing-1991,RBakkerFDikkerFTempelmanPWognum-IJCAI-1993,LZhang-PhDThesis-2003,JMarquesSilva-JournalOnMultipleValuedLogicAndSoftComputing-2012,AAwadRGoreZHouJThomsonMWeidlich-InformationSystems-2012,ETorlakFShengHoChangDJackson-FM-2008}\else\cite{JChinneckEDravnieks-ORSAJournalOnComputing-1991,RBakkerFDikkerFTempelmanPWognum-IJCAI-1993,LZhangSMalik-SAT-2003,LZhang-PhDThesis-2003,JMarquesSilva-JournalOnMultipleValuedLogicAndSoftComputing-2012,AAwadRGoreZHouJThomsonMWeidlich-InformationSystems-2012,ETorlakFShengHoChangDJackson-FM-2008}\fi).
Potential optimizations of the minimization algorithm include binary search (e.g., \cite{AZellerRHildebrandt-TrSE-2002,UJunker-CONS-2001,JMarquesSilva-JournalOnMultipleValuedLogicAndSoftComputing-2012,RKoenighoferGHofferekRBloem-FMCAD-2009}) and reusing intermediate results (e.g., \cite{ETorlakFShengHoChangDJackson-FM-2008,RKoenighoferGHofferekRBloem-FMCAD-2009}).
}%
\thmminucextraction{true}

\subsection{Grouping Propositions in \UCs}
\label{sec:groupedpropositionsucs}

In this subsection we show how to extract information from a \tr proof of unsatisfiability on which occurrences of same propositions in a \uc actually need to be the same propositions to retain unsatisfiability and which occurrences of same propositions might be substituted with different propositions without losing unsatisfiability.
This can provide helpful information on the interaction of parts of a formula to a user who is debugging a specification.

\subsubsection{Intuition}

As an example consider
\begin{equation}\label{ex:partitioned}
\fbor{(\fband{\ap}{\fbnot{\ap}})}{\fnext{(\fband{(\fglobally{\ap})}{\ffinally{\fbnot{\ap}}})}}.
\end{equation}
Note that \eqref{ex:partitioned} is a minimal \uc in LTL.
However, from the point of view of unsatisfiability of \eqref{ex:partitioned}, the occurrences of $\ap$ in $\fband{\ap}{\fbnot{\ap}}$ ``have nothing to do'' with the occurrences of $\ap$ in $\fnext{(\fband{(\fglobally{\ap})}{\ffinally{\fbnot{\ap}}})}$.
On the other hand, the first occurrence of $\ap$ in $\fband{\ap}{\fbnot{\ap}}$ must be an occurrence of the same proposition as the second occurrence of $\ap$ in $\fband{\ap}{\fbnot{\ap}}$ to obtain unsatisfiability.
The same is true for the two occurrences of $\ap$ in $\fnext{(\fband{(\fglobally{\ap})}{\ffinally{\fbnot{\ap}}})}$.
Hence, the first two occurrences of $\ap$ in \eqref{ex:partitioned} could be substituted with, e.g., $\ap_0$ and the second two occurrences with, e.g., $\ap_1$, obtaining \eqref{ex:partitioned:c}.
\eqref{ex:partitioned:c} retains unsatisfiability and the structure of \eqref{ex:partitioned}.
\begin{equation}\label{ex:partitioned:c}
\fbor{(\fband{\ap_0}{\fbnot{\ap_0}})}{\fnext{(\fband{(\fglobally{\ap_1})}{\ffinally{\fbnot{\ap_1}}})}}
\end{equation}

Note that, depending on the intended application of the \ucs, \eqref{ex:partitioned:c} may or may not be considered to be a valid notion of \uc of \eqref{ex:partitioned}.
For the purpose of debugging specifications we believe that this more fine-grained notion of \uc can provide helpful additional information to the user.

We call \eqref{ex:partitioned:c} a \uc of \eqref{ex:partitioned} in LTL \emph{with grouped propositions}.
We use the term ``group'' as a synonym to ``partition'' in order to avoid confusion with the notion of ``partition'' of vertices in the resolution graph in Sec.~\ref{sec:extractingaucinsnf}.

The information required to construct a \uc with grouped propositions is obtained by observing the interaction of occurrences of propositions in a \tr proof.
Essentially, if two occurrences of the same proposition $\ap$ in an LTL formula $\inp$ are found not to be interacting in a \tr proof of the unsatisfiability of $\inp$, then these occurrences of $\ap$ can be substituted with different propositions in a \uc of $\inp$ without losing unsatisfiability.

Let $\occurrence$ and $\occurrencep$ be two occurrences of a proposition $\ap$ in two SNF clauses $\clause$ and $\clausep$.
The occurrences $\occurrence$ and $\occurrencep$ in $\clause$ and $\clausep$ can interact in four different ways:
\begin{enumerate}
\item $\clause$ might be a premise and $\clausep$ might be a conclusion that is obtained from $\clause$ (and possibly some other clauses) by application of a production rule from Tab.~\ref{tab:productionrules}.
Then, some propositions may be transferred from $\clause$ to $\clausep$.
Occurrences of propositions that are subject to transfer from $\clause$ to $\clausep$ are said to interact.
For example, assume that $\ficlause{\fbor{\ap}{\app}}$ and $\fgnclause{\fbor{(\fbnot{\app})}{\appp}}$ are resolved to $\ficlause{\fbor{\ap}{\appp}}$ using \refinitin.
Then the occurrence of $\ap$ in the premise $\ficlause{\fbor{\ap}{\app}}$ is transferred to the occurrence of $\ap$ in the conclusion $\ficlause{\fbor{\ap}{\appp}}$, as is the occurrence of $\appp$ in the premise $\fgnclause{\fbor{(\fbnot{\app})}{\appp}}$ to the occurrence of $\appp$ in the conclusion $\ficlause{\fbor{\ap}{\appp}}$.
\item $\clause$ and $\clausep$ might be resolved with each other using one of the saturation rules in Tab.~\ref{tab:productionrules}.
Then the two occurrences of the proposition which is resolved upon (which have different polarity) are also said to interact.
In the example above the occurrences of $\app$ in $\ficlause{\fbor{\ap}{\app}}$ and in $\fgnclause{\fbor{(\fbnot{\app})}{\appp}}$ interact.
\item $\clause$ and $\clausep$ might be eventuality clauses with the same eventuality literal.
Then the two occurrences of the eventuality literal are said to interact.
\item There might be a finite sequence of interactions of the former three kinds linking the occurrences $\occurrence$ and $\occurrencep$ in $\clause$ and $\clausep$, i.e., we form the transitive closure of the interaction relation.
\end{enumerate}

\subsubsection{Formalization --- SNF}

In Def.~\ref{def:groupingofpropositionsinresolutiongraph}--\ref{def:coreextractiongroupedpropositions} we formalize the idea as follows.\footnote{Remember that in this section we only consider optimized resolution graphs and \ucs in SNF via \tr from optimized resolution graphs, and we drop the designators ``optimized'' and ``from an optimized resolution graph''. Moreover, we drop general definitions and statements in the style of Def.~\ref{def:ucinsnf} and Cor.~\ref{thm:coreisunsatoptcor} and restrict ourselves to the specific cases \'a la Def.~\ref{def:coreextractionopt} and Thm.~\ref{thm:coreisunsatopt}.}
Each occurrence of a proposition in the resolution graph is mapped to some group (here groups are arbitrarily represented by natural numbers).
If two occurrences of a proposition interact, then they are forced to be mapped to the same group.
In our example the occurrences of $\ap$ in $\ficlause{\fbor{\ap}{\app}}$ and in $\ficlause{\fbor{\ap}{\appp}}$ might be mapped to $\pos \in \allnats$, the occurrences of $\app$ in $\ficlause{\fbor{\ap}{\app}}$ and in $\fgnclause{\fbor{(\fbnot{\app})}{\appp}}$ to $\posp \in \allnats$ different from $\pos$, and the occurrences of $\appp$ in $\fgnclause{\fbor{(\fbnot{\app})}{\appp}}$ and in $\ficlause{\fbor{\ap}{\appp}}$ to $\pospp \in \allnats$ different from both $\pos$ and $\posp$.
This leads to a partitioning of the occurrences of propositions in clauses labeling vertices in the resolution graph.
Occurrences in each partition are then replaced with a different proposition.

\mydefinition{Grouping of Propositions in Resolution Graph}\label{def:groupingofpropositionsinresolutiongraph}
Let $\setclauses \subseteq \allclauses$ be a set of SNF clauses, and let $\graph$ be a resolution graph with set of vertices $\setvertices$ and labeling of vertices with SNF clauses $\vertexlabelingname$.
Let $\alloccurrences$ be the set of occurrences of propositions in clauses in $\allclauses$.
Let $\fgroupname$ be a mapping from pairs of vertices in $\setvertices$ and occurrences of propositions in clauses in $\allclauses$ to groups in $\allnats$, $\fgroupname : \setvertices \times \alloccurrences \rightarrow \allnats$, fulfilling the following four conditions.
\begin{enumerate}
\item If
\begin{inparaenum}[(i)]
\item $\vertexzero$ is a vertex in $\setvertices$ labeled with a clause $\clausezero$: $\vertexzero \in \setvertices$ and $\fvertexlabeling{\vertexzero} = \clausezero$,
\item $\vertexone$ is a vertex in $\setvertices$ labeled with a clause $\clauseone$: $\vertexone \in \setvertices$ and $\fvertexlabeling{\vertexone} = \clauseone$,
\item $\clauseone$ is a conclusion obtained from premise $\clausezero$ (and possibly other premises) by an application of a production rule \refrule from Tab.~\ref{tab:productionrules} in $\graph$,
\item $\occurrencezero$ is an occurrence of a proposition $\ap$ in $\clausezero$,
\item $\occurrenceone$ is an occurrence of a proposition $\app$ in $\clauseone$, and
\item \refrule constrains $\ap$ in $\occurrencezero$ and $\app$ in $\occurrenceone$ to be the same proposition,
\end{inparaenum}
then $\mkpair{\vertexzero}{\occurrencezero}$ and $\mkpair{\vertexone}{\occurrenceone}$ are mapped to the same group: $\fgroup{\vertexzero}{\occurrencezero} = \fgroup{\vertexone}{\occurrenceone}$.
\item If
\begin{inparaenum}[(i)]
\item $\vertextwo$ is a vertex in $\setvertices$ labeled with a clause $\clausetwo$: $\vertextwo \in \setvertices$ and $\fvertexlabeling{\vertextwo} = \clausetwo$,
\item $\vertexthree$ is a vertex in $\setvertices$ labeled with a clause $\clausethree$: $\vertexthree \in \setvertices$ and $\fvertexlabeling{\vertexthree} = \clausethree$,
\item $\clausetwo$ and $\clausethree$ are premises in the application of a production rule $\mbox{\refrule} \in \{\mbox{\refinitii, \refinitin, \refstepnn, \refstepnx, \refstepxx}\}$ from Tab.~\ref{tab:productionrules} in $\graph$,
\item $\occurrencetwo$ is the occurrence of the proposition $\appp$ in $\clausetwo$ that is resolved upon in the application of \refrule, and
\item $\occurrencethree$ is the occurrence of the proposition $\appp$ in $\clausethree$ that is resolved upon in the application of \refrule,
\end{inparaenum}
then $\mkpair{\vertextwo}{\occurrencetwo}$ and $\mkpair{\vertexthree}{\occurrencethree}$ are mapped to the same group: $\fgroup{\vertextwo}{\occurrencetwo} = \fgroup{\vertexthree}{\occurrencethree}$.
\item If
\begin{inparaenum}[(i)]
\item $\vertexfour$ is a vertex in $\setvertices$ labeled with an eventuality clause $\clausefour$: $\vertexfour \in \setvertices$ and $\fvertexlabeling{\vertexfour} = \clausefour$,
\item $\vertexfive$ is a vertex in $\setvertices$ labeled with an eventuality clause $\clausefive$: $\vertexfive \in \setvertices$ and $\fvertexlabeling{\vertexfive} = \clausefive$,
\item $\occurrencefour$ is the occurrence of the eventuality literal $\apres$ in $\clausefour$,
\item $\occurrencefive$ is the occurrence of the eventuality literal $\apres$ in $\clausefive$,
\end{inparaenum}
then $\mkpair{\vertexfour}{\occurrencefour}$ and $\mkpair{\vertexfive}{\occurrencefive}$ are mapped to the same group: $\fgroup{\vertexfour}{\occurrencefour} = \fgroup{\vertexfive}{\occurrencefive}$.
\item If two pairs $\mkpair{\vertexsix}{\occurrencesix}$ and $\mkpair{\vertexseven}{\occurrenceseven}$ are not (transitively) forced to be mapped to the same group by the former three conditions, then they are mapped to different groups.
\end{enumerate}
Then $\fgroupname$ is a \emph{grouping of propositions in a resolution graph}.
\end{definition}

\mydefinition{Grouping-Induced Substitution of Propositions in Resolution Graph}\label{def:groupinginducedsubstitutionofpropositionsinresolutiongraph}
Let $\setclauses \subseteq \allclauses$ be a set of SNF clauses, and let $\graph$ be a resolution graph with set of vertices $\setvertices$ and labeling of vertices with SNF clauses $\vertexlabelingname$.
Let $\alloccurrences$ be the set of occurrences of propositions in clauses in $\allclauses$.
Let $\fgroupname$ be the grouping of propositions in $\graph$.
Let $\fmapname$ be an injective mapping from the image of $\fgroupname$ to propositions $\allaps$: $\fmapname : \{\pos \in \allnats \mid \exists \vertex \in \setvertices \;.\; \exists \occurrence \mbox{ in } \fvertexlabeling{\vertex} \;.\; \pos = \fgroup{\vertex}{\occurrence}\} \rightarrow \allaps$ such that $\fmap{\pos} = \fmap{\posp} \proofimplies \pos = \posp$.
Then the composition of $\fmapname$ and $\fgroupname$, $\fsubstitutionname = \fmapname \circ \fgroupname$, is a \emph{grouping-induced substitution of propositions in a resolution graph}.
\end{definition}

We extend the domain of $\fsubstitutionname$ to clauses in $\graph$ and to $\graph$ itself in the natural way.
If the vertex $\vertex_\clause$ that a clause $\clause$ is labeling is clear from the context, we write $\fsubstitution{\clause}$ instead of $\fsubstitution{\vertex_\clause, \clause}$.

\mydefinition{\UC in SNF via \TR with Grouped Propositions}\label{def:coreextractiongroupedpropositions}
Let $\setclauses$ be an unsatisfiable set of SNF clauses, and let $\setclausesuc$ be a \uc of $\setclauses$ in SNF via \tr.
Let $\fsubstitutionname$ be the grouping-induced substitution of propositions in a resolution graph.
The \emph{\uc of $\setclauses$ in SNF via \tr with grouped propositions}, $\setclausesucgrouped$, is obtained from the \uc in SNF via \tr $\setclausesuc$ by applying $\fsubstitutionname$ to each clause in $\setclausesuc$: $\setclausesucgrouped = \{\fsubstitution{\clause} \mid \clause \in \setclausesuc\}$.
\end{definition}

Assume that Def.~\ref{def:groupingofpropositionsinresolutiongraph}--\ref{def:coreextractiongroupedpropositions} are applied to
\begin{equation}\label{ex:partitioned:2}
\setclauses \definedas \{
\ficlause{\fbor{\ap}{\app}},
\ficlause{\fbnot{\appp}},
(\fglobally{(\fbor{(\fbnot{\ap})}{\fnext{\appp}})}),
(\fglobally{\;\fnext{\fbnot{\appp}}}),
\fgnclause{\fbor{(\fbnot{\app})}{\appp}}
\}.
\end{equation}
Now it is easy to see that there exists a \tr proof of the unsatisfiability of $\setclauses$ such that both occurrences of $\ap$ are mapped to some group $\pos$, both occurrences of $\app$ are mapped to a different group $\posp$, the occurrences of $\appp$ in $\ficlause{\fbnot{\appp}}$ and in $\fgnclause{\fbor{(\fbnot{\app})}{\appp}}$ are mapped to another group $\pospp$, and, finally, the remaining occurrences of $\appp$ in $(\fglobally{(\fbor{(\fbnot{\ap})}{\fnext{\appp}})})$ and in $(\fglobally{\;\fnext{\fbnot{\appp}}})$ are mapped to a last group $\posppp$.
To obtain a \uc of $\setclauses$ in SNF via \tr with grouped propositions we substitute occurrences of propositions according to the groups they are mapped to: occurrences of propositions are substituted with the same proposition if and only if they are mapped to the same group.
In the example \eqref{ex:partitioned:2} we obtain the following \uc \eqref{ex:partitioned:2:c}:
\begin{equation}\label{ex:partitioned:2:c}
\setclausesucgrouped \definedas \{
\ficlause{\fbor{\ap}{\app}},
\ficlause{\fbnot{\appp_0}},
(\fglobally{(\fbor{(\fbnot{\ap})}{\fnext{\appp_1}})}),
(\fglobally{\;\fnext{\fbnot{\appp_1}}}),
\fgnclause{\fbor{(\fbnot{\app})}{\appp_0}}
\}.
\end{equation}

The proof of correctness in Thm.~\ref{thm:coregroupedpropositionsisunsat} is essentially by saying that the \tr proof for the unsatisfiability of a \uc in SNF via \tr is also a \tr proof for the unsatisfiability of a \uc in SNF via \tr with grouped propositions modulo renaming of some occurrences of propositions.
Before stating Thm.~\ref{thm:coregroupedpropositionsisunsat} we mention some properties of $\fgroupname$ and $\fsubstitutionname$ required in its proof of correctness.

\myremark%
{\thmgroupmapsdifferentpropositionstodifferentgroups}%
{thm:groupmapsdifferentpropositionstodifferentgroups}%
{$\fgroupname$ Maps Different Propositions to Different Groups}%
{Let $\setclauses$ be a set of SNF clauses, let $\graph$ be a resolution graph with set of vertices $\setvertices$ and labeling of vertices with SNF clauses $\vertexlabelingname$, and let $\fgroupname$ be the grouping of propositions in a resolution graph.
$\fgroupname$ maps occurrences of different propositions to different groups: if
\begin{inparaenum}[(i)]
\item $\vertex$ is a vertex in $\setvertices$ labeled with a clause $\clause$: $\vertex \in \setvertices$ and $\fvertexlabeling{\vertex} = \clause$,
\item $\vertexp$ is a vertex in $\setvertices$ labeled with a clause $\clausep$: $\vertexp \in \setvertices$ and $\fvertexlabeling{\vertexp} = \clausep$,
\item $\occurrence$ is an occurrence of a proposition $\ap$ in $\clause$,
\item $\occurrencep$ is an occurrence of a proposition $\app$ in $\clausep$, and
\item $\ap \ne \app$,
\end{inparaenum}
then $\fgroup{\vertex}{\occurrence} \ne \fgroup{\vertexp}{\occurrencep}$.
Given the injectivity of $\fmapname$ in Def.~\ref{def:groupinginducedsubstitutionofpropositionsinresolutiongraph} this directly extends to $\fsubstitutionname$.
}%
\thmgroupmapsdifferentpropositionstodifferentgroups{true}

\myremark%
{\thmsubstitutiondoesnotchangepolarityofliteral}%
{thm:substitutiondoesnotchangepolarityofliteral}%
{$\fsubstitutionname$ Does Not Change Polarity of Literal}%
{Let $\setclauses$ be a set of SNF clauses, let $\graph$ be a resolution graph, and let $\fsubstitutionname$ be the grouping-induced substitution of propositions in a resolution graph.
Then $\fsubstitutionname$ does not change the polarity of a literal in a clause.
}%
\thmsubstitutiondoesnotchangepolarityofliteral{true}

\myremark%
{\thmsubstitutiondoesnotchangetimeofliteral}%
{thm:substitutiondoesnotchangetimeofliteral}%
{$\fsubstitutionname$ Does Not Change Time of Literal}%
{Let $\setclauses$ be a set of SNF clauses, let $\graph$ be a resolution graph, and let $\fsubstitutionname$ be the grouping-induced substitution of propositions in a resolution graph.
Then $\fsubstitutionname$ maps an initial literal to an initial literal, a literal in the now part to a literal in the now part, a literal in the $\fnextname$ part to a literal in the $\fnextname$ part, and an eventuality literal to an eventuality literal.
}%
\thmsubstitutiondoesnotchangetimeofliteral{true}

\mytheorem%
{\thmcoregroupedpropositionsisunsat}%
{thm:coregroupedpropositionsisunsat}%
{Unsatisfiability of \UC in SNF via \TR with Grouped Propositions}%
{Let $\setclauses$ be an unsatisfiable set of SNF clauses, and let $\setclausesucgrouped$ be a \uc of $\setclauses$ in SNF via \tr with grouped propositions. Then $\setclausesucgrouped$ is unsatisfiable.}
\thmcoregroupedpropositionsisunsat{true}

\begin{proof}
Let $\setclausesuc$ be the \uc in SNF via \tr.
Let $\graph$ be the resolution graph with set of vertices $\setvertices$ and labeling of vertices with SNF clauses $\vertexlabelingname$.
Let $\fsubstitutionname$ be the grouping-induced substitution of propositions in a resolution graph.
Let $\graphgrouped = \fsubstitution{\graph}$.

Possibly contrary to Def.~\ref{def:groupingofpropositionsinresolutiongraph}, \ref{def:groupinginducedsubstitutionofpropositionsinresolutiongraph}  we assume that $\fsubstitutionname$ maps occurrences of same propositions $\fapwaitfor{\apreszero}$ introduced by augmentation \refaugone, \refaugtwo to same propositions.
Note that propositions $\fapwaitfor{\apreszero}$ do not occur in $\setclausesuc$ and, therefore, their images under $\fsubstitutionname$ do not occur in $\setclausesucgrouped$.
Hence, if we can show unsatisfiability of $\setclausesucgrouped$ under this assumption, then we have shown unsatisfiability of $\setclausesucgrouped$.

Below we show that each application of a production rule \refrule to clauses labeling some vertices in $\graph$ is also an application of \refrule to the clauses labeling these vertices in $\graphgrouped$.
This establishes that $\graphgrouped$ represents a (possibly partial) \tr proof.
Consider the following cases:
\begin{description}
\item[\refinitii]
Let $\vertex, \vertexp, \vertexpp \in \setvertices$ with $\fvertexlabeling{\vertex} = \clause$, $\fvertexlabeling{\vertexp} = \clausep$, and $\fvertexlabeling{\vertexpp} = \clausepp$ where $\clausepp$ is obtained from $\clause$ and $\clausep$ by applying \refinitii.
By Rem.~\ref{thm:substitutiondoesnotchangetimeofliteral} $\fsubstitution{\clause}$, $\fsubstitution{\clausep}$, and $\fsubstitution{\clausepp}$ are initial clauses.
Let $\apreszero$, $\fbnot{\apreszero}$ be the resolved upon literals in $\clause$ and $\clausep$.
By Def.~\ref{def:groupingofpropositionsinresolutiongraph}, \ref{def:groupinginducedsubstitutionofpropositionsinresolutiongraph} $\fsubstitutionname$ maps these occurrences of $\apreszero$ and $\fbnot{\apreszero}$ to opposite polarity occurrences of some proposition $\ap$ in $\fsubstitution{\clause}$ and $\fsubstitution{\clausep}$.
Let there be an occurrence of a not resolved upon literal $\apresone \ne \ap, \fbnot{\ap}$ in $\fsubstitution{\clause}$ (resp.~$\fsubstitution{\clausep}$).
By Def.~\ref{def:groupingofpropositionsinresolutiongraph}, \ref{def:groupinginducedsubstitutionofpropositionsinresolutiongraph} there is an occurrence $\occurrence$ of a literal $\aprestwo$ in $\clause$ (resp.~$\clausep$) such that $\fsubstitution{\vertex,\occurrence} = \apresone$.
By rule \refinitii there is a corresponding occurrence of $\aprestwo$ in $\clausepp$.
By Def.~\ref{def:groupingofpropositionsinresolutiongraph}, \ref{def:groupinginducedsubstitutionofpropositionsinresolutiongraph} $\fsubstitutionname$ maps that occurrence of $\aprestwo$ to an occurrence of $\apresone$ in $\fsubstitution{\clausepp}$.
Let there be an occurrence of a literal $\apresthree$ in $\fsubstitution{\clausepp}$.
By Def.~\ref{def:groupingofpropositionsinresolutiongraph}, \ref{def:groupinginducedsubstitutionofpropositionsinresolutiongraph} there is an occurrence $\occurrencep$ of a literal $\apresfour$ in $\clausepp$ such that $\fsubstitution{\vertexpp,\occurrencep} = \apresthree$.
By rule \refinitii there is a corresponding occurrence of $\apresfour$ in $\clause$ or $\clausep$.
By Def.~\ref{def:groupingofpropositionsinresolutiongraph}, \ref{def:groupinginducedsubstitutionofpropositionsinresolutiongraph} $\fsubstitutionname$ maps that occurrence of $\apresfour$ to an occurrence of $\apresthree$ in $\fsubstitution{\clause}$ or $\fsubstitution{\clausep}$.
This shows that $\fsubstitution{\clausepp}$ can be obtained by applying \refinitii to $\fsubstitution{\clause}$ and $\fsubstitution{\clausep}$.
\item[\refinitin, \refstepnn] Similar to the case of \refinitii.
\item[\refstepnx, \refstepxx] Similar to the case of \refinitii when considering now and $\fnextname$ parts separately as appropriate.
\item[\refaugone]
Let $\vertex, \vertexp \in \setvertices$ with $\fvertexlabeling{\vertex} = \clause$ and $\fvertexlabeling{\vertexp} = \clausep$ where $\clausep$ is obtained from $\clause$ by applying \refaugone, $\apreszero$ is the eventuality literal in $\clause$, and $\fapwaitfor{\apreszero}$ is the fresh literal introduced by \refaugone and \refaugtwo for all eventuality clauses in $\graph$ with eventuality literal $\apreszero$.
By Rem.~\ref{thm:substitutiondoesnotchangetimeofliteral} $\fsubstitution{\clause}$ is an eventuality clause and $\fsubstitution{\clausep}$ is a global clause with empty $\fnextname$ part.
By Def.~\ref{def:groupingofpropositionsinresolutiongraph}, \ref{def:groupinginducedsubstitutionofpropositionsinresolutiongraph} $\fsubstitutionname$ maps the occurrences of $\apreszero$ as the eventuality literal in $\clause$ and in $\clausep$ to two occurrences of some literal $\apresone$ as the eventuality literal in $\fsubstitution{\clause}$ and in $\fsubstitution{\clausep}$.
By Def.~\ref{def:groupingofpropositionsinresolutiongraph}, \ref{def:groupinginducedsubstitutionofpropositionsinresolutiongraph} $\fsubstitutionname$ maps the occurrence of $\fapwaitfor{\apreszero}$ in $\clausep$ to an occurrence of some literal $\fapwaitfor{\apresone}$ in $\fsubstitution{\clausep}$.
By Rem.~\ref{thm:groupmapsdifferentpropositionstodifferentgroups} $\fapwaitfor{\apresone}$ is fresh.
By Def.~\ref{def:groupingofpropositionsinresolutiongraph}, \ref{def:groupinginducedsubstitutionofpropositionsinresolutiongraph} and the assumption above $\fapwaitfor{\apresone}$ is the fresh literal introduced by \refaugone and \refaugtwo for all eventuality clauses in $\graphgrouped$ with eventuality literal $\apresone$.
Let there be an occurrence of a literal $\aprestwo$ in the now part of $\fsubstitution{\clause}$.
By Def.~\ref{def:groupingofpropositionsinresolutiongraph}, \ref{def:groupinginducedsubstitutionofpropositionsinresolutiongraph} there is an occurrence $\occurrence$ of a literal $\apresthree$ in the now part of $\clause$ such that $\fsubstitution{\vertex,\occurrence} = \aprestwo$.
By rule \refaugone there is a corresponding occurrence of $\apresthree$ in $\clausep$.
By Def.~\ref{def:groupingofpropositionsinresolutiongraph}, \ref{def:groupinginducedsubstitutionofpropositionsinresolutiongraph} $\fsubstitutionname$ maps that occurrence of $\apresthree$ to an occurrence of $\aprestwo$ in $\fsubstitution{\clausep}$.
Let there be an occurrence of a literal $\apresfour \ne \apresone, \fapwaitfor{\apresone}$ in $\fsubstitution{\clausep}$.
By Def.~\ref{def:groupingofpropositionsinresolutiongraph}, \ref{def:groupinginducedsubstitutionofpropositionsinresolutiongraph} there is an occurrence $\occurrencep$ of a literal $\apresfive$ in $\clausep$ such that $\fsubstitution{\vertexp,\occurrencep} = \apresfour$.
By rule \refaugone there is a corresponding occurrence of $\apresfive$ in the now part of $\clause$.
By Def.~\ref{def:groupingofpropositionsinresolutiongraph}, \ref{def:groupinginducedsubstitutionofpropositionsinresolutiongraph} $\fsubstitutionname$ maps that occurrence of $\apresfive$ to an occurrence of $\apresfour$ in the now part of $\fsubstitution{\clause}$.
This shows that $\fsubstitution{\clausep}$ can be obtained by applying \refaugone to $\fsubstitution{\clause}$.
\item[\refaugtwo]
Let $\vertex \in \setvertices$ with $\fvertexlabeling{\vertex} = \clause$ where $\clause$ is obtained by applying \refaugtwo to eventuality clauses with eventuality literal $\apreszero$.
By Def.~\ref{def:groupingofpropositionsinresolutiongraph}, \ref{def:groupinginducedsubstitutionofpropositionsinresolutiongraph} and the assumption above $\fsubstitution{\clause}$ is of the form $\fgnxclause{(\fbnot{\fapwaitfor{\apresone}})}{\fbor{\apresone}{\fapwaitfor{\apresone}}}$ for some $\apresone$, $\fapwaitfor{\apresone}$.
By Def.~\ref{def:groupingofpropositionsinresolutiongraph}, \ref{def:groupinginducedsubstitutionofpropositionsinresolutiongraph} $\apresone$ is unique for all occurrences of $\apreszero$ as eventuality literal in $\graph$.
By Rem.~\ref{thm:groupmapsdifferentpropositionstodifferentgroups} $\fapwaitfor{\apresone}$ is fresh.
By Def.~\ref{def:groupingofpropositionsinresolutiongraph}, \ref{def:groupinginducedsubstitutionofpropositionsinresolutiongraph} and the assumption above $\fapwaitfor{\apresone}$ is the fresh literal introduced by \refaugone and \refaugtwo for all eventuality clauses in $\graphgrouped$ with eventuality literal $\apresone$.
This shows that $\fsubstitution{\clause}$ can be obtained by applying \refaugtwo to eventuality clauses with eventuality literal $\apresone$.
\item[\refloopitinitx] 
Let $\vertex, \vertexp \in \setvertices$ with $\fvertexlabeling{\vertex} = \clause$ and $\fvertexlabeling{\vertexp} = \clausep$ where $\clausep$ is obtained from $\clause$ by applying \refloopitinitx.
By Rem.~\ref{thm:substitutiondoesnotchangetimeofliteral} $\fsubstitution{\clause}$ and $\fsubstitution{\clausep}$ are global clauses with non-empty $\fnextname$ part.
Let there be an occurrence of a literal $\apreszero$ in the now (resp.~$\fnextname$) part of $\fsubstitution{\clause}$.
By Def.~\ref{def:groupingofpropositionsinresolutiongraph}, \ref{def:groupinginducedsubstitutionofpropositionsinresolutiongraph} there is an occurrence $\occurrence$ of a literal $\apresone$ in the now (resp.~$\fnextname$) part of $\clause$ such that $\fsubstitution{\vertex,\occurrence} = \apreszero$.
By rule \refloopitinitx there is a corresponding occurrence of $\apresone$ in the now (resp.~$\fnextname$) part of $\clausep$.
By Def.~\ref{def:groupingofpropositionsinresolutiongraph}, \ref{def:groupinginducedsubstitutionofpropositionsinresolutiongraph} $\fsubstitutionname$ maps that occurrence of $\apresone$ to an occurrence of $\apreszero$ in the now (resp.~$\fnextname$) part of $\fsubstitution{\clausep}$.
Let there be an occurrence of a literal $\aprestwo$ in the now (resp.~$\fnextname$) part of $\fsubstitution{\clausep}$.
By Def.~\ref{def:groupingofpropositionsinresolutiongraph}, \ref{def:groupinginducedsubstitutionofpropositionsinresolutiongraph} there is an occurrence $\occurrencep$ of a literal $\apresthree$ in the now (resp.~$\fnextname$) part of $\clausep$ such that $\fsubstitution{\vertexp,\occurrencep} = \aprestwo$.
By rule \refloopitinitx there is a corresponding occurrence of $\apresthree$ in the now (resp.~$\fnextname$) part of $\clause$.
By Def.~\ref{def:groupingofpropositionsinresolutiongraph}, \ref{def:groupinginducedsubstitutionofpropositionsinresolutiongraph} $\fsubstitutionname$ maps that occurrence of $\apresthree$ to an occurrence of $\aprestwo$ in the now (resp.~$\fnextname$) part of $\fsubstitution{\clause}$.
This shows that $\fsubstitution{\clausep}$ can be obtained by applying \refloopitinitx to $\fsubstitution{\clause}$.
\item[\refloopitinitn] Similar to the case of \refloopitinitx.
\item[\refloopitinitc]
First, consider the case that the current BFS loop search iteration is the second or later iteration of a BFS loop search.
Let $\vertex, \vertexp, \vertexpp \in \setvertices$ with $\fvertexlabeling{\vertex} = \clause$, $\fvertexlabeling{\vertexp} = \clausep$, and $\fvertexlabeling{\vertexpp} = \clausepp$ where $\clausepp$ is obtained from global clause with empty $\fnextname$ part $\clause$ and eventuality clause $\clausep$ with eventuality literal $\apreszero$ by applying \refloopitinitc.
By Rem.~\ref{thm:substitutiondoesnotchangetimeofliteral} $\fsubstitution{\clause}$ is a global clause with empty $\fnextname$ part, $\fsubstitution{\clausep}$ is an eventuality clause, and $\fsubstitution{\clausepp}$ is a global clause with empty now part.
By rule \refloopitinitc there is a corresponding occurrence of $\apreszero$ in $\clausepp$.
By Def.~\ref{def:groupingofpropositionsinresolutiongraph}, \ref{def:groupinginducedsubstitutionofpropositionsinresolutiongraph} $\fsubstitutionname$ maps these occurrences of $\apreszero$ to occurrences of some literal $\apresone$ as the eventuality literal in $\fsubstitution{\clausep}$ and as a disjunct in $\fsubstitution{\clausepp}$.
Let there be an occurrence of a literal $\aprestwo$ in $\fsubstitution{\clause}$.
By Def.~\ref{def:groupingofpropositionsinresolutiongraph}, \ref{def:groupinginducedsubstitutionofpropositionsinresolutiongraph} there is an occurrence $\occurrence$ of a literal $\apresthree$ in $\clause$ such that $\fsubstitution{\vertex,\occurrence} = \aprestwo$.
By rule \refloopitinitc there is a corresponding occurrence of $\apresthree$ in $\clausepp$.
By Def.~\ref{def:groupingofpropositionsinresolutiongraph}, \ref{def:groupinginducedsubstitutionofpropositionsinresolutiongraph} $\fsubstitutionname$ maps that occurrence of $\apresthree$ to an occurrence of $\aprestwo$ in $\fsubstitution{\clausepp}$.
Let there be an occurrence of a literal $\apresfour \ne \apresone$ in $\fsubstitution{\clausepp}$.
By Def.~\ref{def:groupingofpropositionsinresolutiongraph}, \ref{def:groupinginducedsubstitutionofpropositionsinresolutiongraph} there is an occurrence $\occurrencep$ of a literal $\apresfive$ in $\clausepp$ such that $\fsubstitution{\vertexpp,\occurrencep} = \apresfour$.
By rule \refloopitinitc there is a corresponding occurrence of $\apresfive$ in $\clause$.
By Def.~\ref{def:groupingofpropositionsinresolutiongraph}, \ref{def:groupinginducedsubstitutionofpropositionsinresolutiongraph} $\fsubstitutionname$ maps that occurrence of $\apresfive$ to an occurrence of $\apresfour$ in $\fsubstitution{\clause}$.
Note that following the reasoning in Thm.~\ref{thm:coreisunsatopt} it is not necessary to show that $\fsubstitution{\clause}$ originated in the previous BFS loop search iteration.
Hence, this shows that $\fsubstitution{\clausepp}$ can be obtained by applying \refloopitinitc to $\fsubstitution{\clause}$ and $\fsubstitution{\clausep}$.

Now consider the case that the current BFS loop search iteration is the first iteration of a BFS loop search.
In that case there essentially is no premise 1 $\clause$.
Hence, this case is a trivial special case of the previous case.
\item[\refloopitsub]
Let $\vertex, \vertexp \in \setvertices$ with $\fvertexlabeling{\vertex} = \clause$ and $\fvertexlabeling{\vertexp} = \clausep$ where $\clausep$ is ``obtained'' from $\clause$ by applying \refloopitsub.
Let $\occurrence$ be the occurrence of the eventuality literal $\apreszero$ in $\clausep$ that the loop search in $\graph$ is for.
Let $\fsubstitution{\vertexp, \occurrence} = \apresone$.
By Rem.~\ref{thm:substitutiondoesnotchangetimeofliteral} $\fsubstitution{\clause}$ is a global clause with empty $\fnextname$ part and $\fsubstitution{\clausep}$ is a global clause with empty now part.
Using an inductive argument on the sequence in which clauses labeling vertices in $\graph$ are generated by \refalgltlsattrp one can see that $\fsubstitution{\clausep}$ was generated by an application of \refloopitinitc.
Let there be an occurrence of a literal $\aprestwo$ in $\fsubstitution{\clause}$.
By Def.~\ref{def:groupingofpropositionsinresolutiongraph}, \ref{def:groupinginducedsubstitutionofpropositionsinresolutiongraph} there is an occurrence $\occurrencep$ of a literal $\apresthree$ in $\clause$ such that $\fsubstitution{\vertex,\occurrencep} = \aprestwo$.
By rule \refloopitsub there is a corresponding occurrence of $\apresthree \ne \apreszero$ in $\clausep$.
By Def.~\ref{def:groupingofpropositionsinresolutiongraph}, \ref{def:groupinginducedsubstitutionofpropositionsinresolutiongraph} $\fsubstitutionname$ maps that occurrence of $\apresthree$ to an occurrence of $\aprestwo \ne \apresone$ in $\fsubstitution{\clausep}$.
This shows that $\fsubstitution{\clausep}$ can be obtained by applying \refloopitsub to $\fsubstitution{\clause}$.
\item[\refloopconclusionone]
Let $\vertex, \vertexp, \vertexpp \in \setvertices$ with $\fvertexlabeling{\vertex} = \clause$, $\fvertexlabeling{\vertexp} = \clausep$, and $\fvertexlabeling{\vertexpp} = \clausepp$ where $\clausepp$ is obtained by applying \refloopconclusionone to a global clause with empty $\fnextname$ part $\clause$ derived in a successful BFS loop search iteration and an eventuality clause $\clausep$ with eventuality literal $\apreszero$.
By Rem.~\ref{thm:substitutiondoesnotchangetimeofliteral} $\fsubstitution{\clausep}$ is an eventuality clause and $\fsubstitution{\clause}$ and $\fsubstitution{\clausepp}$ are global clauses with empty $\fnextname$ part.
By rule \refloopconclusionone there is a corresponding occurrence of $\apreszero$ in $\clausepp$.
By Def.~\ref{def:groupingofpropositionsinresolutiongraph}, \ref{def:groupinginducedsubstitutionofpropositionsinresolutiongraph} $\fsubstitutionname$ maps these occurrences of $\apreszero$ to occurrences of some literal $\apresone$ as the eventuality literal in $\fsubstitution{\clausep}$ and as a disjunct in $\fsubstitution{\clausepp}$.
Let there be an occurrence of a literal $\aprestwo$ in the now part of $\fsubstitution{\clause}$ (resp.~$\fsubstitution{\clausep}$).
By Def.~\ref{def:groupingofpropositionsinresolutiongraph}, \ref{def:groupinginducedsubstitutionofpropositionsinresolutiongraph} there is an occurrence $\occurrence$ of a literal $\apresthree$ in the now part of $\clause$ (resp.~$\clausep$) such that $\fsubstitution{\vertex,\occurrence} = \aprestwo$ (resp.~$\fsubstitution{\vertexp,\occurrence} = \aprestwo$).
By rule \refloopconclusionone there is a corresponding occurrence of $\apresthree$ in $\clausepp$.
By Def.~\ref{def:groupingofpropositionsinresolutiongraph}, \ref{def:groupinginducedsubstitutionofpropositionsinresolutiongraph} $\fsubstitutionname$ maps that occurrence of $\apresthree$ to an occurrence of $\aprestwo$ in $\fsubstitution{\clausepp}$.
Let there be an occurrence of a literal $\apresfour \ne \apresone$ in $\fsubstitution{\clausepp}$.
By Def.~\ref{def:groupingofpropositionsinresolutiongraph}, \ref{def:groupinginducedsubstitutionofpropositionsinresolutiongraph} there is an occurrence $\occurrencep$ of a literal $\apresfive$ in $\clausepp$ such that $\fsubstitution{\vertexpp,\occurrencep} = \apresfour$.
By rule \refloopconclusionone there is a corresponding occurrence of $\apresfive$ in the now part of $\clause$ or $\clausep$.
By Def.~\ref{def:groupingofpropositionsinresolutiongraph}, \ref{def:groupinginducedsubstitutionofpropositionsinresolutiongraph} $\fsubstitutionname$ maps that occurrence of $\apresfive$ to an occurrence of $\apresfour$ in the now part of $\fsubstitution{\clause}$ or $\fsubstitution{\clausep}$.
Except for the requirement that $\fsubstitution{\clause}$ was derived in a successful BFS loop search iteration this shows that $\fsubstitution{\clausepp}$ can be obtained by applying \refloopconclusionone to $\fsubstitution{\clause}$ and $\fsubstitution{\clausep}$.
\item[\refloopconclusiontwo]
Let $\vertex, \vertexp, \vertexpp \in \setvertices$ with $\fvertexlabeling{\vertex} = \clause$, $\fvertexlabeling{\vertexp} = \clausep$, and $\fvertexlabeling{\vertexpp} = \clausepp$ where $\clausepp$ is obtained by applying \refloopconclusiontwo to a global clause with empty $\fnextname$ part $\clause$ derived in a successful BFS loop search iteration and an eventuality clause $\clausep$ with eventuality literal $\apreszero$.
By Rem.~\ref{thm:substitutiondoesnotchangetimeofliteral} $\fsubstitution{\clause}$ is a global clause with empty $\fnextname$ part, $\fsubstitution{\clausep}$ is an eventuality clause, and $\fsubstitution{\clausepp}$ is a global clause.
By rule \refloopconclusiontwo there is a corresponding occurrence of $\apreszero$ in the $\fnextname$ part of $\clausepp$.
By Def.~\ref{def:groupingofpropositionsinresolutiongraph}, \ref{def:groupinginducedsubstitutionofpropositionsinresolutiongraph} $\fsubstitutionname$ maps these occurrences of $\apreszero$ to occurrences of some literal $\apresone$ as the eventuality literal in $\fsubstitution{\clausep}$ and as a disjunct in the $\fnextname$ part of $\fsubstitution{\clausepp}$.
By Def.~\ref{def:groupingofpropositionsinresolutiongraph}, \ref{def:groupinginducedsubstitutionofpropositionsinresolutiongraph} $\fsubstitutionname$ maps the occurrence of $\fbnot{\fapwaitfor{\apreszero}}$ in the now part of $\clausepp$ to an occurrence of some literal $\fbnot{\fapwaitfor{\apresone}}$ in the now part of $\fsubstitution{\clausepp}$.
By Def.~\ref{def:groupingofpropositionsinresolutiongraph}, \ref{def:groupinginducedsubstitutionofpropositionsinresolutiongraph} and the assumption above $\fapwaitfor{\apresone}$ is the fresh literal introduced by \refaugone and \refaugtwo for all eventuality clauses in $\graphgrouped$ with eventuality literal $\apresone$.
Let there be an occurrence of a literal $\aprestwo$ in $\fsubstitution{\clause}$.
By Def.~\ref{def:groupingofpropositionsinresolutiongraph}, \ref{def:groupinginducedsubstitutionofpropositionsinresolutiongraph} there is an occurrence $\occurrence$ of a literal $\apresthree$ in $\clause$ such that $\fsubstitution{\vertex,\occurrence} = \aprestwo$.
By rule \refloopconclusiontwo there is a corresponding occurrence of $\apresthree$ in the $\fnextname$ part of $\clausepp$.
By Def.~\ref{def:groupingofpropositionsinresolutiongraph}, \ref{def:groupinginducedsubstitutionofpropositionsinresolutiongraph} $\fsubstitutionname$ maps that occurrence of $\apresthree$ to an occurrence of $\aprestwo$ in the $\fnextname$ part of $\fsubstitution{\clausepp}$.
Let there be an occurrence of a literal $\apresfour \ne \apresone$ in the $\fnextname$ part of $\fsubstitution{\clausepp}$.
By Def.~\ref{def:groupingofpropositionsinresolutiongraph}, \ref{def:groupinginducedsubstitutionofpropositionsinresolutiongraph} there is an occurrence $\occurrencep$ of a literal $\apresfive$ in the $\fnextname$ part of $\clausepp$ such that $\fsubstitution{\vertexpp,\occurrencep} = \apresfour$.
By rule \refloopconclusiontwo there is a corresponding occurrence of $\apresfive$ in $\clause$.
By Def.~\ref{def:groupingofpropositionsinresolutiongraph}, \ref{def:groupinginducedsubstitutionofpropositionsinresolutiongraph} $\fsubstitutionname$ maps that occurrence of $\apresfive$ to an occurrence of $\apresfour$ in $\fsubstitution{\clause}$.
Except for the requirement that $\fsubstitution{\clause}$ was derived in a successful BFS loop search iteration this shows that $\fsubstitution{\clausepp}$ can be obtained by applying \refloopconclusiontwo to $\fsubstitution{\clause}$ and $\fsubstitution{\clausep}$.
\end{description}
It remains to show that a successful BFS loop search iteration in $\graph$ is also a successful BFS loop search iteration in $\graphgrouped$.
Let $\vertex \in \setvertices$ with $\fvertexlabeling{\vertex} = \clause$ such that $\fsubstitution{\clause}$ was generated by \refloopitinitc.
As shown above $\clause$ was also generated by \refloopitinitc.
If the BFS loop search iteration that $\clause$ is part of was successful, then there exists $\vertexp \in \setvertices$ with $\fvertexlabeling{\vertexp} = \clausep$ such that $\clausep$ is part of the same BFS loop search iteration as $\clause$ and $\clause$ and $\clausep$ are connected by an application of \refloopitsub.
As shown above $\fsubstitution{\clause}$ and $\fsubstitution{\clausep}$ are connected by an application of \refloopitsub.

Clearly, for any clause $\clause$, $\fsubstitution{\clause}$ cannot be larger than $\clause$.
Moreover, with Rem.~\ref{thm:groupmapsdifferentpropositionstodifferentgroups}, $\fsubstitution{\clause}$ cannot be smaller than $\clause$.
Hence, the empty clause (and, therefore, unsatisfiability) is derived in the proof on $\setclausesucgrouped$ at the same place as in the proof on $\setclausesuc$.
This concludes the proof.
\qed
\end{proof}

\subsubsection{Formalization --- LTL}

In order to map a \uc in SNF via \tr with grouped propositions to a \uc in LTL via \tr with grouped propositions we need to take the translation from LTL into SNF and back in Sec.~\ref{sec:translatingltlintosnf}, \ref{sec:extractingaucinltl} into account.
Assume an occurrence of a proposition $\ap$ in a subformula $\prt$.
Depending on $\prt$ that single occurrence of $\ap$ in $\prt$ may be translated into multiple occurrences of $\ap$ in $\fdCNF{\prt}$.
For example, if $\prt \definedas \funtil{\app}{\ap}$ under positive polarity, then $\prt$ is translated into 
$\fgnclause{\fbimplies{\fdCNFvar{\prt}}{(\fbor{\ap}{\app})}}$,
$(\fglobally{(\fbimplies{\fdCNFvar{\prt}}{(\fbor{\ap}{\fnext{\fdCNFvar{\prt}}})})})$, and
$\fgnclause{\fbimplies{\fdCNFvar{\prt}}{\ffinally{\ap}}}$.
Now, even if these occurrences of $\ap$ in the \uc of $\fdCNF{\prt}$ in SNF via \tr with grouped propositions are mapped to different groups (and therefore are substituted with different propositions in the \uc in SNF via \tr with grouped propositions), we must ensure that the occurrence of $\ap$ in the \uc in LTL via \tr with grouped propositions is substituted with a unique proposition such that unsatisfiability of the result is guaranteed.
This can be achieved by merging some groups in the mapping $\fgroupname$: if two occurrences of a proposition $\ap$ in the set of starting clauses $\fdCNF{\inp}$ were obtained from the same occurrence of $\ap$ in the LTL formula $\inp$, then these occurrences of $\ap$ must be mapped to the same group.
Notice that now all occurrences of a proposition $\ap$ in the \uc of $\inp$ in SNF via \tr that were obtained from the same occurrence of $\ap$ in $\inp$ are mapped to the same group.
This allows to transfer the grouping of occurrences of propositions from SNF to LTL.
This is stated more formally in Def.~\ref{def:groupingofpropositionsinresolutiongraphforltl}--\ref{def:ltlcoreextractiongroupedpropositions}.

\mydefinition{Grouping of Propositions in Resolution Graph for LTL}\label{def:groupingofpropositionsinresolutiongraphforltl}
Let $\inp$ be an LTL formula, let $\fdCNF{\inp}$ be the SNF of $\inp$, let $\graph$ be a resolution graph with set of vertices $\setvertices$ and labeling of vertices with SNF clauses $\vertexlabelingname$, and let $\fgroupname$ be the grouping of propositions in a resolution graph.
Let $\fgroupltlname$ be obtained from $\fgroupname$ by merging some groups as follows.
Let $\clause$ and $\clausep$ be two clauses obtained from translating the occurrence of a subformula $\prt$ in $\inp$, let $\vertex$ and $\vertexp$ be the unique vertices in the main partition of $\graph$ labeled with $\clause$ and $\clausep$, and let $\occurrence$ and $\occurrencep$ be two occurrences of the same proposition in $\clause$ and $\clausep$ that are marked {\color{blue}\setlength\fboxsep{1pt}\fbox{blue boxed}} in Tab.~\ref{tab:ltltosnf}.
Let $\fgroup{\vertex}{\occurrence} = \pos$ and $\fgroup{\vertexp}{\occurrencep} = \posp$.
Then all pairs $\mkpair{\vertexpp}{\occurrencepp}$ in the domain of $\fgroupname$ that are mapped to $\pos$ or to $\posp$ in $\fgroupname$ are mapped to the same group $\pospp$ in $\fgroupltlname$.
$\fgroupltlname$ is the \emph{grouping of propositions in a resolution graph for LTL}.
\end{definition}

Inspection of Tab.~\ref{tab:ltltosnf} shows that the only propositions whose groups might be merged because of Def.~\ref{def:groupingofpropositionsinresolutiongraphforltl} are propositions representing the right hand operand of a positive polarity occurrence of a $\funtilname$ formula or of a negative polarity occurrence of a $\freleasesname$ formula.

\mydefinition{Grouping-Induced Substitution of Propositions in Resolution Graph for LTL}\label{def:groupinginducedsubstitutionofpropositionsinresolutiongraphforltl}
Let $\inp$ be an unsatisfiable LTL formula, let $\fdCNF{\inp}$ be the SNF of $\inp$, let $\setclausesuc$ be a \uc of $\fdCNF{\inp}$ in SNF via \tr, let $\graph$ be the resolution graph with set of vertices $\setvertices$ and labeling of vertices with SNF clauses $\vertexlabelingname$, and let $\fgroupltlname$ be the grouping of propositions in a resolution graph for LTL.
Let $\fmapltlocctosnfoccsname$ be a mapping from occurrences of propositions in $\inp$ to sets of occurrences of propositions in $\fdCNF{\inp}$ that maps an occurrence of a proposition $\ap$ in $\inp$ to the set of occurrences of $\ap$ in $\fdCNF{\inp}$ that are induced by the occurrence of $\ap$ in $\inp$. 
Let $\fmapltlname$ be an injective mapping from the image of $\fgroupltlname$ to propositions $\allaps$: $\fmapltlname : \{\pos \in \allnats \mid \exists \vertex \in \setvertices \;.\; \exists \occurrence \mbox{ in } \fvertexlabeling{\vertex} \;.\; \pos = \fgroupltl{\vertex}{\occurrence}\} \rightarrow \allaps$ such that $\fmapltl{\pos} = \fmapltl{\posp} \proofimplies \pos = \posp$.
Extend $\fsubstitutionname$ to LTL as follows.
Use $\fmapltlocctosnfoccsname$ to map an occurrence $\occurrence$ of a proposition in $\inpuc$ to a set of occurrences $\setoccurrences$ in $\fdCNF{\inp}$.
Choose an occurrence $\occurrencep \in \setoccurrences$ that is also an occurrence in $\setclausesuc$.
Use $\fmapltlname \circ \fgroupltlname$ to map $\occurrencep$ to a proposition $\ap \in \allaps$.
The resulting mapping $\fsubstitutionltlname$ is the \emph{grouping-induced substitution of propositions for LTL}.
\end{definition}

We extend the domain of $\fsubstitutionltlname$ from occurrences of propositions in $\inpuc$ to $\inpuc$ in the natural way.

\mydefinition{\UC in LTL via \TR with Grouped Propositions}\label{def:ltlcoreextractiongroupedpropositions}
Let $\inp$ be an unsatisfiable LTL formula, let $\inpuc$ be its \uc in LTL obtained using Def.~\ref{def:ltlcoreextraction}, \ref{def:coreextractionopt}, and let $\fsubstitutionltlname$ be the grouping-induced substitution of propositions in a resolution graph for LTL.
Then $\inpucgrouped = \fsubstitutionltl{\inpuc}$ is the \emph{\uc of $\inp$ in LTL via \tr with grouped propositions}.
\end{definition}

\mytheorem%
{\thmltlcoregroupedpropositionsisunsat}%
{thm:ltlcoregroupedpropositionsisunsat}%
{Unsatisfiability of \UC in LTL via \TR with Grouped Propositions}%
{Let $\inp$ be an unsatisfiable LTL formula, and let $\inpucgrouped$ be a \uc of $\inp$ in LTL via \tr with grouped propositions. Then $\inpucgrouped$ is unsatisfiable.}
\thmltlcoregroupedpropositionsisunsat{true}

\begin{proof}
(Sketch.)
Let $\inpuc$ be the \uc of $\inp$ in LTL obtained using Def.~\ref{def:ltlcoreextraction}, \ref{def:coreextractionopt},
let $\fdCNF{\inpuc}$ be the SNF of $\inpuc$,
let $\setclausesuc$ be the \uc of $\inp$ in SNF via \tr,
let $\fdCNF{\inpucgrouped}$ be the SNF of $\inpucgrouped$,
let $\setclausesucgrouped$ be the \uc of $\inp$ in SNF via \tr with grouped propositions, and
let $\setclausesucgroupedltl$ be as $\setclausesucgrouped$ but using $\fgroupltlname$ instead of $\fgroupname$ in its construction.
By the proof of Thm.~\ref{thm:ltlcoreisunsat} $\fdCNF{\inpuc}$ is a superset of $\setclausesuc$.
By that fact and by the construction of $\inpucgrouped$ $\fdCNF{\inpucgrouped}$ is a superset of $\setclausesucgroupedltl$.
By Thm.~\ref{thm:coregroupedpropositionsisunsat} $\setclausesucgrouped$ and, thus, $\setclausesucgroupedltl$ is unsatisfiable.
Hence, so is $\inpucgrouped$.
This concludes the proof.
\qed
\end{proof}

\section{Relation to Mutual Vacuity}
\label{sec:vacuity}

In this section we explain that, under the frequently legitimate assumption that a system description can be translated into an LTL formula, our results extend to vacuity for LTL \cite{AGurfinkelMChechik-TACAS-2004,IBeerSBenDavidCEisnerYRodeh-FMSD-2001,OKupfermanMVardi-STTT-2003,RArmoniLFixAFlaisherOGrumbergNPitermanATiemeyerMVardi-CAV-2003,JSimmondsJDaviesAGurfinkelMChechik-STTT-2010,DFismanOKupfermanSSheinvaldFaragyMVardi-HVC-2008,OKupferman-CONCUR-2006}.
In \cite{AGurfinkelMChechik-TACAS-2004} Gurfinkel and Chechik introduce the notion of mutual vacuity.
We prove that the problems of finding a \uc of an unsatisfiable formula $\inp$ in LTL and of finding a set of subformula occurrences $\setoccurrences$ of an LTL specification $\spec$ such that $\spec$ is mutually vacuously $\true$ in $\setoccurrences$ in a system $\system$ can be reduced to each other.
%
%
For some more discussion on the relation between \ucs and vacuity see also \cite{VSchuppan-SCP-2012}.

Given a system description $\system$ and a specification $\spec$ formal verification (e.g., \cite{DPeled-2001}) proves or disproves that the system description $\system$ conforms to the specification $\spec$.
If the system description $\system$ conforms to the specification $\spec$, then we say that $\spec$ holds in $\system$ or that $\spec$ is $\true$ in $\system$.
For a specification $\spec$ given as an LTL formula $\spec$ is $\true$ in $\system$ iff every execution of $\system$ satisfies $\spec$.
Note that if the system description itself is given as an LTL formula, then conformance of $\system$ to $\spec$ corresponds to implication: $\spec$ is $\true$ in $\system$ iff $\fbimplies{\system}{\spec}$.

\myremark%
{\thmltlmctoltlsat}%
{thm:ltlmctoltlsat}%
{LTL Formal Verification as LTL Satisfiability}%
{Let $\system$ be a system description in LTL. Let $\spec$ be a specification in LTL. Then $\spec$ is $\true$ in $\system$ iff $\fband{\system}{\fbnot{\spec}}$ is unsatisfiable.}%
\thmltlmctoltlsat{true}

The fact that a system description conforms to a specification does not mean that all is well.
An example is antecedent failure \cite{DBeattyRBryant-DAC-1994}, where some specification $\spec \definedas \fglobally{(\fbimplies{\prt}{\prtp})}$ holds in a system description $\system$, but the antecedent $\prt$ never becomes $\true$ in any execution of $\system$.
In that case the consequent $\prtp$ plays no role in determining that $\spec$ holds in $\system$.
That often indicates presence of a problem in the specification or in the system description \cite{IBeerSBenDavidCEisnerYRodeh-FMSD-2001}.

Vacuity generalizes that idea as follows \cite{IBeerSBenDavidCEisnerYRodeh-FMSD-2001,OKupfermanMVardi-STTT-2003}.
Let $\system$ be a system description, let $\spec$ be an LTL specification such that $\spec$ is $\true$ in $\system$, and let $\prt$ be a positive (resp.~negative) polarity occurrence of a subformula in $\spec$.
Let $\specp$ be obtained from $\spec$ by replacing $\prt$ with $\false$ (resp.~$\true$).
If the modified specification $\specp$ still holds in $\system$, then apparently $\prt$ has no influence on $\spec$ being $\true$ in $\system$ in the following sense: $\prt$ could be replaced with any LTL formula in $\spec$ and the modified specification would still be $\true$ in $\system$.
In that case we say that $\spec$ is vacuously $\true$ in $\system$.
As an example consider a system $\system$ such that $\fband{\ap}{\fband{(\fbnot{\app})}{\fband{\appp}{(\fbnot{\apppp})}}}$ is an invariant of $\system$, i.e., it holds in every reachable state of $\system$.
Let the specification be $\spec \definedas \fglobally{(\fband{(\fbor{\ap}{\app})}{(\fbor{\appp}{\apppp})})}$.
$\spec$ is vacuously $\true$ in $\system$, as $\spec$ is $\true$ in $\system$ and either $\app$ or $\apppp$ could be replaced with $\false$ without falsifying $\spec$ in $\system$.

Mutual vacuity \cite{AGurfinkelMChechik-TACAS-2004} extends that idea to simultaneously replacing several subformulas with $\false$ or $\true$ depending on their polarity.
In our example $\app$ and $\apppp$ could simultaneously be replaced with $\false$ without falsifying $\spec$ in $\system$.
This is not the case for any other pair of propositions in $\spec$.
Definition \ref{def:mutualvacuity} formalizes that notion.
Proposition \ref{thm:mutualvacuityreducibletoucvv} then shows that the problems of determining mutual vacuity and of finding \ucs in LTL can be reduced to each other.

\mydefinition{Mutual Vacuity}\label{def:mutualvacuity}
Let $\system$ be a system description.
Let $\spec$ be a specification in LTL such that $\spec$ is $\true$ in $\system$.
Let $\setoccurrences$ be a set of disjoint subformula occurrences in $\spec$.
Then $\spec$ is \emph{mutually vacuously $\true$} in $\setoccurrences$ in $\system$ iff the modification $\specp$ of $\spec$ that replaces those members of $\setoccurrences$ that have positive (resp.~negative) polarity in $\spec$ with $\false$ (resp.~$\true$) is $\true$ in $\system$.
\end{definition}

\myproposition%
{\thmmutualvacuityreducibletoucvv}%
{thm:mutualvacuityreducibletoucvv}%
{Reducibility between Mutual Vacuity and \UCs in LTL}%
{The problems of finding a \uc of an unsatisfiable formula $\inp$ in LTL and of finding a set of subformula occurrences $\setoccurrences$ of an LTL specification $\spec$ that is $\true$ in an LTL system description $\system$ such that $\spec$ is mutually vacuously $\true$ in $\setoccurrences$ in $\system$ can be reduced to each other.}%
\thmmutualvacuityreducibletoucvv{true}

\begin{proof}
The proof is essentially by the respective definitions.

\begin{sloppypar}
Assume that $\spec$ is mutually vacuously $\true$ in $\setoccurrences$ in $\system$.
Let $\specp$ be $\spec$ with positive (resp.~negative) polarity members of $\setoccurrences$ replaced with $\false$ (resp.~$\true$).
Then
\begin{inparaenum}[(i)]
\item $\sysandnotspec \definedas \fband{\system}{\fbnot{\spec}}$ is unsatisfiable.
\item $\sysandnotspecp \definedas \fband{\system}{\fbnot{\specp}}$ is unsatisfiable.
\end{inparaenum}
$\mu'$ is a \uc of $\mu$ in LTL.
\end{sloppypar}

Assume that $\inp$ is an unsatisfiable LTL formula with \uc $\inpuc$.
Then there exists a non-empty set of subformula occurrences $\setoccurrencesp$ in $\inp$ such that $\inpuc$ is obtained from $\inp$ by replacing positive (resp.~negative) polarity members of $\setoccurrencesp$ with $\true$ (resp.~$\false$).
Now obviously
\begin{inparaenum}[(i)]
\item $\fband{\true}{\fbnot{\fbnot{\inp}}}$ is unsatisfiable and
\item $\fband{\true}{\fbnot{\fbnot{\inpuc}}}$ is unsatisfiable.
\end{inparaenum}
I.e., $\fbnot{\inp}$ is mutually vacuously $\true$ in $\setoccurrencesp$ in the unconstrained system $\true$.
\qed
\end{proof}

\begin{sloppypar}
A limited number of tools have been made available that can determine vacuity.
\tool{Aardvark} \cite{MPurandareTWahlDKroening-DATE-2009} computes --- depending on configuration --- maximal or maximum mutual vacuity for LTL using \tool{VIS} \cite{DBLP:conf/cav/BraytonHSSACEKKPQRSSSV96} as a backend.
\tool{Aardvark} uses binary search and counterexamples to reduce the search space in the lattice of candidate strengthened specifications; it does not use proofs for passing specifications to obtain an initial candidate strengthened specification.
Hence, our method is complementary.
We performed a small set of trials with \tool{Aardvark} on some of our benchmarks, mostly the smaller ones from each family.
Within this, admittedly limited, set of trials we ran into problems with usability (often getting assertion violations or segmentation faults rather than error messages pointing to potential problems in our input) as well as scalability.%
\footnote{%
Note that in vacuity checking it is typically assumed that the system description is more complex than the specification, while in \uc extraction all complexity is in the formula at hand.
When (as we did in our trials) using the reduction from LTL \uc extraction to vacuity checking from Prop.~\ref{thm:mutualvacuityreducibletoucvv} the resulting vacuity checking instance consists of a trivial system description and a complex specification.
In practice, when the vacuity checking procedure is tuned to take advantage of the small specification/complex system description scenario, then complex specifications may lead to problems; it seems that the scalability problems we observed with \tool{Aardvark} are caused to some extent by the translation from the LTL specification into an explicit B\"uchi automaton performed in \tool{VIS}.
}
We therefore opted not to perform a comparison on the full set of benchmarks.
\tool{VaqTree} \cite{JSimmondsJDaviesAGurfinkel-FM-2006} implements the method of \cite{JSimmondsJDaviesAGurfinkelMChechik-STTT-2010} that computes $k$-step vacuity for LTL, i.e., whether an occurrence of an atomic proposition is vacuous when bounded model checking runs only up to some bound $k$ are considered; the problem of removing the bound $k$ is left open.
The \tool{NuSMV Model Advisor}\footnote{\url{http://code.google.com/a/eclipselabs.org/p/nusmv-tools/downloads/detail?name=NuSMVModelAdvisor_20121012.zip}} computes the set of all occurrences of atomic propositions that are vacuous (but not necessarily mutually vacuous) for LTL.
\tool{VaqUoT} \cite{MGheorghiuAGurfinkel-FM-2006}, a simplified implementation of \cite{AGurfinkelMChechik-TACAS-2004}, computes the set of all occurrences of atomic propositions that are vacuous (but not necessarily mutually vacuous) for CTL.
%
%
%
%
%
%
A proof-based formulation of vacuity is suggested by Namjoshi \cite{KNamjoshi-CAV-2004}; no implementation or experiments are reported.
\end{sloppypar}

\section{Experimental Evaluation}
\label{sec:experimentalevaluation}

Our implementation, examples, and log files are available from \url{http://www.schuppan.de/viktor/actainformatica15/}.

\subsection{Implementation}

\begin{sloppypar}
We implemented extraction of \ucs as described in Sec.~\ref{sec:ucextraction} in \trp.
We also implemented deletion-based minimization of \ucs (Sec.~\ref{sec:minimalucs}) and grouping of propositions in \ucs (Sec.~\ref{sec:groupedpropositionsucs}).
\trp provides a translation from LTL into SNF via an external tool.
To facilitate tracing a \uc in SNF back to the input formula in LTL we implemented a translator from LTL into SNF inside \trp.
We used parts of \tspass\footnote{\url{http://www.csc.liv.ac.uk/~michel/software/tspass/}} \cite{MLudwigUHustadt-AICommunications-2010} for our implementation.
For data structures we used C++ Standard Library containers (e.g., \ifnoappendix\cite{NJosuttis-2012}\else\cite{AStepanovMLee-HPLaboratoriesTR-1995,NJosuttis-2012}\fi), for graph operations the Boost Graph Library\footnote{\url{http://www.boost.org/doc/libs/release/libs/graph/}} \cite{JSiekLLeeALumsdaine-2001}.
\end{sloppypar}

Our translator from LTL into SNF reimplements ideas from the external translator, among them
\begin{inparaenum}[(i)]
\item normalizing and simplifying the LTL formula before translation, 
\item sharing the translation of several occurrences of the same subformula, and
\item avoiding translation for SNF clauses in the input formula.
\end{inparaenum}
In addition, we implemented pure literal simplification for LTL \cite{ACimattiMRoveriVSchuppanSTonetta-CAV-2007} for SNF clauses.

On the one hand, these optimizations are crucial for good performance.
On the other hand, most optimizations change the input that is provided by the user before it is passed to our method for \uc extraction, which potentially makes it harder for the user to understand how the \uc she obtains corresponds to the input she provided.
Therefore, for each such optimization we provide a command line option that disables the optimization.
Mapping the optimized formula back to the input formula is left as future work.

Clearly, not solving a given LTL formula due to reaching run time or memory limits is the least useful result for a user.
Hence, we assume the following usage model.
A user will first enable all optimizations for the translation from LTL into SNF in order to solve a given LTL formula.
If she cannot understand how the \uc she obtains corresponds to the input she provided, she will selectively disable some of these optimizations.
Notice that at that stage the user may already be able to exclude some parts of the LTL formula from consideration and, therefore, provide a reduced input to the solver.
Based on these assumptions our implementation enables all optimizations for the translation from LTL into SNF by default.
With one exception discussed next we used these default settings for all experiments reported in this article.

In previous versions of this work (including \cite{VSchuppan-TIME-2013}) we used sharing of same polarity occurrences of a subformula in the translation from LTL into SNF.
This turned out to negatively impact grouping of propositions in a \uc.
Hence, for better comparison all results reported in this article were obtained with sharing of same polarity occurrences of a subformula in the translation from LTL into SNF disabled.
The data available from \url{http://www.schuppan.de/viktor/actainformatica15/} include results with both sharing disabled and sharing enabled.

By default our implementation uses optimized resolution graphs to extract a \uc (Def.~\ref{def:resolutiongraphtype}, \ref{def:resolutiongraphinitialization}, \ref{def:optimizedresolutiongraphupdate}, \ref{def:coreextractionopt}) and employs the optimizations in Rem.~\ref{thm:rgpruning}, \ref{thm:rgpruningopt}.
The results reported in this section were obtained with these optimizations enabled except in Sec.~\ref{sec:experimentalevaluation:optimizations}, where we evaluate the benefit of the optimizations.

\subsection{Benchmarks}

\begin{sloppypar}
Our examples are based on \cite{VSchuppanLDarmawan-ATVA-2011}.
In categories \benchmark{crafted} and \benchmark{random} and in family \benchmark{forobots} we considered all unsatisfiable instances from \cite{VSchuppanLDarmawan-ATVA-2011}.
The version of \benchmark{alaska\_lift} used here contains a small bug fix: in \cite{MDeWulfLDoyenNMaquetJRaskin-TACAS-2008,VSchuppanLDarmawan-ATVA-2011} the subformula $\fnext{u}$ was erroneously written as literal $Xu$.
Combining two variants of \benchmark{alaska\_lift} with three different scenarios we obtain six subfamilies of \benchmark{alaska\_lift}.
For \benchmark{anzu\_genbuf} we invented three scenarios to obtain three subfamilies.
\end{sloppypar}

For benchmark families in categories \benchmark{application} and \benchmark{crafted}, which consist of sequences of instances of increasing difficulty, we stopped after two instances that could not be solved due to time or memory out.
For benchmark families in category \benchmark{random}, which consist of sequences of sets of instances of increasing size, we stopped after no instance was solved in three consecutive size levels.
Some instances were simplified to $\false$ during the translation from LTL into SNF; these instances were discarded.

In Tab.~\ref{tab:benchmarks} we give an overview of the benchmark families.
Columns 1 and 2 give the name and the source of the family.
Columns 3, 5--7 list the numbers of instances that were solved by our implementation without \uc extraction, with \uc extraction, with minimal \uc extraction, and with grouped propositions \uc extraction.
Column 4 indicates the size (as the number of nodes in the syntax tree) of the largest instance solved without \uc extraction.

\begin{table}
\centering
\begin{minipage}{0.39\linewidth}\caption{\label{tab:benchmarks}Overview of benchmark families}\end{minipage}
{
\ifnoappendix
\def\tchs{0.5em}
\else
\def\tchs{0.15em}
\fi
\begin{tabular}{||@{\hspace{\tchs}}l@{\hspace{\tchs}}|@{\hspace{\tchs}}l@{\hspace{\tchs}}|@{\hspace{\tchs}}r@{\hspace{\tchs}}@{\hspace{\tchs}}r@{\hspace{\tchs}}|@{\hspace{\tchs}}r@{\hspace{\tchs}}|@{\hspace{\tchs}}r@{\hspace{\tchs}}|@{\hspace{\tchs}}r@{\hspace{\tchs}}||}
\hline
\hline
family &
source &
\multicolumn{1}{@{\hspace{0em}}l@{\hspace{0em}}}{solved} &
\multirow{4}*{$\left(\begin{array}{l}\mbox{size of}\\\mbox{largest}\\\mbox{solved}\\\mbox{instance}\end{array}\right)$} &
\multicolumn{1}{@{\hspace{0em}}l@{\hspace{\tchs}}|@{\hspace{\tchs}}}{solved} &
\multicolumn{1}{@{\hspace{0em}}l@{\hspace{\tchs}}|@{\hspace{\tchs}}}{solved} &
\multicolumn{1}{@{\hspace{0em}}l@{\hspace{\tchs}}||}{solved}
\\
&
&
\multicolumn{1}{@{\hspace{0em}}l@{\hspace{0em}}}{instances} &
&
\multicolumn{1}{@{\hspace{0em}}l@{\hspace{\tchs}}|@{\hspace{\tchs}}}{instances} &
\multicolumn{1}{@{\hspace{0em}}l@{\hspace{\tchs}}|@{\hspace{\tchs}}}{instances} &
\multicolumn{1}{@{\hspace{0em}}l@{\hspace{\tchs}}||}{instances}
\\
&
&
\multicolumn{1}{@{\hspace{0em}}l@{\hspace{0em}}}{no \uc} &
&
\multicolumn{1}{@{\hspace{0em}}l@{\hspace{\tchs}}|@{\hspace{\tchs}}}{\uc} &
\multicolumn{1}{@{\hspace{0em}}l@{\hspace{\tchs}}|@{\hspace{\tchs}}}{minimal \uc} &
\multicolumn{1}{@{\hspace{0em}}l@{\hspace{\tchs}}||}{grpd.~prop.~\uc}
\\
&
&
\multicolumn{1}{@{\hspace{0em}}l@{\hspace{0em}}}{extraction} &
&
\multicolumn{1}{@{\hspace{0em}}l@{\hspace{\tchs}}|@{\hspace{\tchs}}}{extraction} &
\multicolumn{1}{@{\hspace{0em}}l@{\hspace{\tchs}}|@{\hspace{\tchs}}}{extraction} &
\multicolumn{1}{@{\hspace{0em}}l@{\hspace{\tchs}}||}{extraction}
\\
\hline
\hline
\multicolumn{7}{||c||}{\benchmark{application}} \\
\hline
\benchmark{alaska\_lift} &
\cite{AHarding-PhDThesis-2005,MDeWulfLDoyenNMaquetJRaskin-TACAS-2008} &
70 &
(4605) &
69 &
69 &
69
\\
\hline
\benchmark{anzu\_genbuf} &
\cite{RBloemSGallerBJobstmannNPitermanAPnueliMWeiglhofer-COCV-2007} &
15 &
(1924) &
15 &
15 &
15
\\
\hline
\benchmark{forobots} &
\cite{ABehdennaCDixonMFisher-IntelligentComputingAndCybernetics-2009} &
25 &
(635) &
25 &
25 &
25
\\
\hline
\hline
\multicolumn{7}{||c||}{\benchmark{crafted}} \\
\hline
\benchmark{schuppan\_O1formula} &
\cite{VSchuppanLDarmawan-ATVA-2011} &
27 &
(4006) &
27 &
27 &
27
\\
\hline
\benchmark{schuppan\_O2formula} &
\cite{VSchuppanLDarmawan-ATVA-2011} &
8 &
(91) &
8 &
8 &
8
\\
\hline
\benchmark{schuppan\_phltl} &
\cite{VSchuppanLDarmawan-ATVA-2011} &
4 &
(125) &
4 &
4 &
4
\\
\hline
\hline
\multicolumn{7}{||c||}{\benchmark{random}} \\
\hline
\benchmark{rozier\_random} &
\cite{KRozierMVardi-STTT-2010} &
61 &
(155) &
61 &
61 &
61
\\
\hline
\benchmark{trp} &
\cite{UHustadtRSchmidt-KR-2002} &
397 &
(1421) &
397 &
330 &
397
\\
\hline
\hline
\end{tabular}
}
\end{table}

\subsection{Setup}

The experiments were performed on a laptop with an Intel Core i7 M 620 processor at 2 GHz running Ubuntu 14.04.
Run time and memory usage were measured with \tool{run}\footnote{\url{http://fmv.jku.at/run/}}.
The time and memory limits were 600 seconds and 6 GB.

\subsection{Extraction of \UCs}

In Fig.~\ref{fig:resultsextractionofucs} (a), (b) we show the overhead that is incurred by extracting \ucs as described in Sec.~\ref{sec:ucextraction} over not extracting \ucs.
In Fig.~\ref{fig:resultsextractionofucs} (c) we compare the sizes of the input formulas with the sizes of their \ucs.
For plots by category see \refmoreplots.

Out of the 749 instances of all categories we considered with \uc extraction disabled, 48 were simplified to $\false$ in the translation into SNF, 607 were shown to be unsatisfiable by \tr, and 94 remained unsolved.
Enabling \uc extraction results in one time out out of 607 instances.

The run time overhead for instances that take at least 0.7 seconds to solve without \uc extraction, except for those of the \benchmark{trp\_N12y} subfamily, is at most 65 \%.
Instances of the \benchmark{trp\_N12y} subfamily incur a run time overhead between 150 and 300 \%, which is the maximum overhead on our examples.

The memory overhead for instances that take at least 0.2 seconds to solve without \uc extraction is at most 526 \%.
Note that peak memory usage with \uc extraction was less than 200 MB.

In category \benchmark{application} three subfamilies of family \benchmark{alaska\_lift} and all subfamilies of family \benchmark{anzu\_genbuf} have \ucs of constant size, which are found by \trp.
The \ucs of instances of these two families are at least 76 \% smaller than the input formulas.
Instances of family \benchmark{forobots} are reduced between 47 \% and 97 \% with the median at 77 \%.
In category \benchmark{crafted} family \benchmark{schuppan\_O1formula} also has \ucs of constant size, which are found by \trp.
Instances of family \benchmark{schuppan\_O2formula} are themselves minimal \ucs.
Instances of \benchmark{schuppan\_phltl} have minimal \ucs in which one top-level conjunct is removed from the input formula, which are found by \trp without minimization.
In category \benchmark{random} instances of family \benchmark{rozier\_random} are reduced between 6 \% and 95 \% with the median at 75 \%.
Instances of family \benchmark{trp} exhibit minimum, median, and maximum reductions of 26 \%, 57 \%, and 88 \%, respectively.

Our data show that extraction of \ucs is possible with quite acceptable overhead in run time and memory usage.
The resulting \ucs are often significantly smaller than the input formula.

{
\newcommand{\tabfield}[1]{%
\begin{minipage}{0.304\linewidth}%
\begin{center}%
\includegraphics[type=pdf,ext=.pdf,read=.pdf,scale=0.33,trim=0.5cm 0.0cm 0.9cm 0.5cm,clip]{#1}%
\end{center}%
\end{minipage}%
}%

\begin{figure}
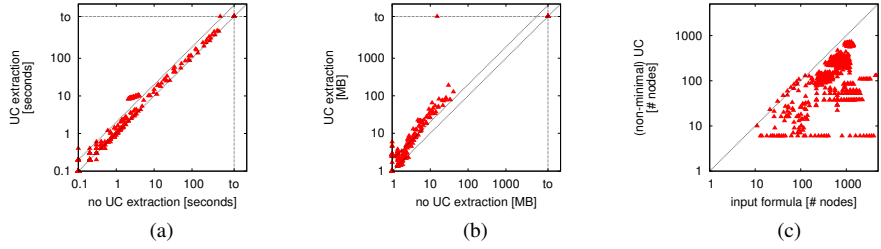

\centering
\begin{tabular}{ccc}
\tabfield{gnuplot/all.nonenone_vs_proofnone.time} &
\tabfield{gnuplot/all.nonenone_vs_proofnone.mem} &
\tabfield{gnuplot/all.inputnone_vs_proofnone.numnodesst}
\\[-1.5ex]
\\
\hspace{4em}(a) & \hspace{4em}(b) & \hspace{4em}(c)
\\[-2.5ex]
\\
\end{tabular}
\caption{\label{fig:resultsextractionofucs}
Comparison of \uc extraction (y-axis) with no \uc extraction (x-axis). (a) and (b) show the overhead incurred in terms of run time (in seconds) and memory (in MB). (c) shows the size reduction obtained, where size is measured as the number of nodes in the syntax trees. The off-center diagonal in (a) and (b) indicates where $y = 2 x$.}
\end{figure}
}

\subsection{Optimizations}
\label{sec:experimentalevaluation:optimizations}

In Fig.~\ref{fig:resultsoptimizations} we show the benefit of the optimizations described in Rem.~\ref{thm:rgpruning} and in Sec.~\ref{sec:extractingaucinsnffromanoptimizedresolutiongraph} when extracting \ucs.
We show the impact on the peak size of the resolution graph rather than on run time or memory, as the former is implementation independent.

The impact of including premise 1 of \refloopitinitc during the construction of the resolution graph and disabling immediate pruning of vertices and edges in partitions of failed loop search iterations from the resolution graph in Fig.~\ref{fig:resultsoptimizations} (b) (the former implies the latter) and of disabling pruning non-reachable vertices from the resolution graph between loop searches in Fig.~\ref{fig:resultsoptimizations} (e) is quite significant.
In Fig.~\ref{fig:resultsoptimizations} (b) the median increase of the peak size of the resolution graph is 39 \%, the maximum is 1339 \%.
In Fig.~\ref{fig:resultsoptimizations} (e) slightly more than half of all instances exhibit no change in the size of the resolution graph; for most of the remaining instances the peak size of the resolution graph increases by 20 \% or more with the maximum at 858 \%.

The impact in the remaining cases (Fig.~\ref{fig:resultsoptimizations} (a), (c), (d)) is negligible: for no instance the peak size of the resolution graph increases by more than 6 \%.
However, in cases (c) and (d) there is an instance where disabling the optimization leads to a larger \uc.
This occurs more often also in case (b).

We note that in each family of categories \benchmark{application} and \benchmark{random} there are instances for which the memory overhead of disabling all optimizations is larger than 1000 \%.
The highest memory overhead is more than 6500 \% for an instance of family \benchmark{alaska\_lift}.
The effect on run time is less pronounced and uniform.
When only instances are taken into account that take at least 0.3 seconds to solve with all optimizations enabled, then the run time overhead of disabling the optimizations is at most 40 \%.

{
\newcommand{\tabfield}[1]{%
\begin{minipage}{0.304\linewidth}%
\begin{center}%
\includegraphics[type=pdf,ext=.pdf,read=.pdf,scale=0.33,trim=0.5cm 0.0cm 0.9cm 0.5cm,clip]{#1}%
\end{center}%
\end{minipage}%
}%

\begin{figure}
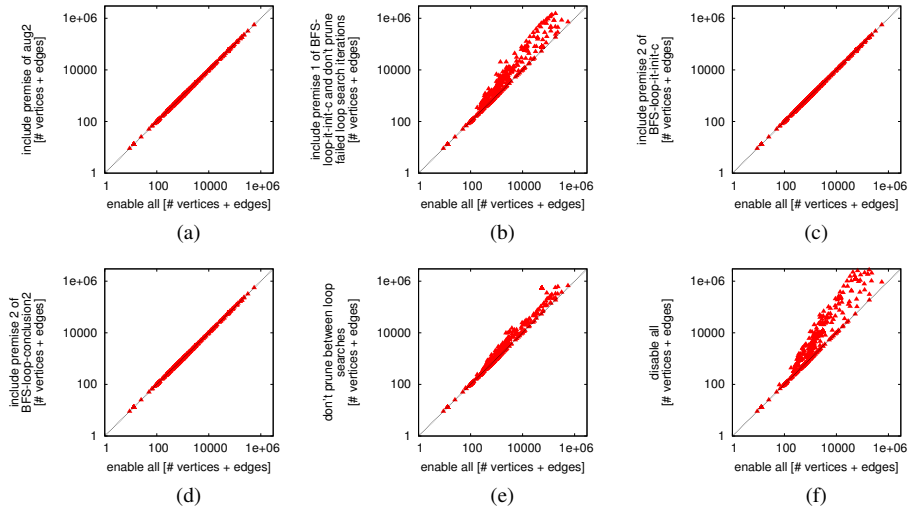

\centering
\begin{tabular}{ccc}
\tabfield{gnuplot/all.proofnone_vs_proofaug2none.numverticesedgespeak} &
\tabfield{gnuplot/all.proofnone_vs_proofbfsloopitinitcgnone.numverticesedgespeak} &
\tabfield{gnuplot/all.proofnone_vs_proofbfsloopitinitcenone.numverticesedgespeak}
\\[-1.5ex]
\\
\hspace{4.75em}(a) & \hspace{4.75em}(b) & \hspace{4.75em}(c)
\\[-0.75ex]
\\
\tabfield{gnuplot/all.proofnone_vs_proofbfsloopconclusion2enone.numverticesedgespeak} &
\tabfield{gnuplot/all.proofnone_vs_proofnoprunemainnone.numverticesedgespeak} &
\tabfield{gnuplot/all.proofnone_vs_proofallnone.numverticesedgespeak} 
\\[-1.5ex]
\\
\hspace{4.75em}(d) & \hspace{4.75em}(e) & \hspace{4.75em}(f)
\\[-2.5ex]
\\
\end{tabular}
\caption{\label{fig:resultsoptimizations}
Benefit of optimizations as reduction in the peak size of the resolution graph (number of vertices + number of edges). The x-axis shows all optimizations enabled. The y-axis of (a)--(e) shows one optimization disabled: (a) include premise of \refaugtwo, (b) include premise 1 of \refloopitinitc and disable immediate pruning of failed loop search iterations, (c) include premise 2 of \refloopitinitc, (d) include premise 2 of \refloopconclusiontwo, (e) disable pruning of the resolution graph between loop searches. The y-axis of (f) shows all optimizations disabled.}
\end{figure}
}

\subsection{Extraction of Minimal \UCs}

Figure \ref{fig:resultsextractionofminimalucs} shows the costs and benefits of applying deletion-based minimization (Sec.~\ref{sec:minimalucs}) to (non-minimal) \ucs obtained as described in Sec.~\ref{sec:ucextraction}.

Costs and benefits are somewhat varied.
Minimal \ucs can be computed for all instances for which (non-minimal) \ucs were obtained except for all 67 instances in family \benchmark{trp\_N12y}.

A closer analysis shows that most instances with reductions of 30 \% or more are in the \benchmark{random} category; some are also in family \benchmark{forobots}.
The largest reduction seen is 62 \% from 469 to 177 nodes in the syntax tree of an instance from family \benchmark{trp\_N12x}.
5 instances of the \benchmark{forobots} family are reduced between 30 \% and 57 \%.
In the \benchmark{application} category several instances in each of the three families exhibit reductions in the range between 20 \% and 28 \%.
On the other hand, for three out of six subfamilies in the \benchmark{alaska\_lift} family, for one out of three subfamilies in the \benchmark{anzu\_genbuf} family, and for all families in the the \benchmark{crafted} category a minimal \uc was already obtained without deletion-based minimization (for the \benchmark{schuppan\_O2formula} family the original instances are minimal \ucs.).

The run time (resp.~memory) overhead to obtain a minimal \uc, except for instances of the \benchmark{trp} family that take 0.5 seconds or less to solve without deletion-based minimization, is at most 428 \% (resp.~87 \%). 

Hence, while based on our data there is no simple conclusion regarding the costs and benefits of applying deletion-based minimization, one should keep in mind that minimizing \ucs trades increased processing time by a computer for reduced analysis time by a human user, and that a minimal \uc has less potential to confuse a user about how parts of a \uc contribute to its unsatisfiability.

{
\newcommand{\tabfield}[1]{%
\begin{minipage}{0.304\linewidth}%
\begin{center}%
\includegraphics[type=pdf,ext=.pdf,read=.pdf,scale=0.33,trim=0.5cm 0.0cm 0.9cm 0.5cm,clip]{#1}%
\end{center}%
\end{minipage}%
}%

\begin{figure}
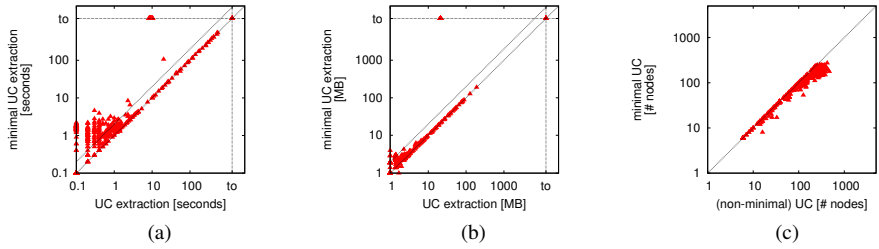

\centering
\begin{tabular}{ccc}
\tabfield{gnuplot/all.proofnone_vs_proofdeletionnone.time} &
\tabfield{gnuplot/all.proofnone_vs_proofdeletionnone.mem} &
\tabfield{gnuplot/all.proofnone_vs_proofdeletionnone.numnodesst}
\\[-1.5ex]
\\
\hspace{4em}(a) & \hspace{4em}(b) & \hspace{4em}(c)
\\[-2.5ex]
\\
\end{tabular}
\caption{\label{fig:resultsextractionofminimalucs}
Comparison of minimal \uc extraction (y-axis) with \uc extraction (x-axis). (a) and (b) show the overhead incurred in terms of run time (in seconds) and memory (in MB). (c) shows the size reduction obtained, where size is measured as the number of nodes in the syntax trees. The off-center diagonal in (a) and (b) indicates where $y = 2 x$.}
\end{figure}
}

\subsection{\UCs with Grouped Propositions}

In Tab.~\ref{tab:effectallgroupedpropositions} we show the effect of grouping propositions in \ucs.
The first column gives the name of the family.
The second column shows the maximum number of groups that the set of occurrences of some proposition in the \uc of a member of this benchmark family was partitioned into.
The number in parentheses shows the number of propositions (not occurrences of propositions!) in that \uc.
The third column indicates the maximum number of propositions whose occurrences in the \uc were partitioned into two or more groups for a member of this benchmark family.
The number in parentheses is as for the previous column.

Our data show that instances from the \benchmark{application} and \benchmark{random} categories exhibit significant potential for grouping propositions in \ucs.
The \benchmark{schuppan\_O1formula} and \benchmark{schuppan\_phltl} families in the \benchmark{crafted} category provide no opportunity for grouping propositions.
Manual inspection of the groupings obtained in the \benchmark{crafted} category proves them to be optimal.

\begin{table}
\centering
\begin{minipage}{0.44\linewidth}\caption{\label{tab:effectallgroupedpropositions}Effect of grouping propositions in \ucs}\end{minipage}
\begin{tabular}{||@{\hspace{0.6em}}l@{\hspace{0.6em}}|@{\hspace{0.6em}}r@{\hspace{0.6em}}|@{\hspace{0.6em}}r@{\hspace{0.6em}}||}
\hline
\hline
family &
max.~\# groups per prop.~in \uc &
max.~\# prop.~w.~$\ge 2$ groups per prop.~in \uc
\\
&
(\# prop.~in \uc) &
(\# prop.~in \uc)
\\
\hline
\hline
\multicolumn{3}{||c||}{\benchmark{application}} \\
\hline
\benchmark{alaska\_lift} &
3 (9) &
6 (18)
\\
\hline
\benchmark{anzu\_genbuf} &
2 (4) &
1 (4)
\\
\hline
\benchmark{forobots} &
2 (4) &
4 (18)
\\
\hline
\hline
\multicolumn{3}{||c||}{\benchmark{crafted}} \\
\hline
\benchmark{schuppan\_O1formula} &
1 (1) &
0 (1)
\\
\hline
\benchmark{schuppan\_O2formula} &
2 (3) &
8 (17)
\\
\hline
\benchmark{schuppan\_phltl} &
1 (2) &
0 (2)
\\
\hline
\hline
\multicolumn{3}{||c||}{\benchmark{random}} \\
\hline
\benchmark{rozier\_random} &
4 (4) &
2 (6)
\\
\hline
\benchmark{trp} &
7 (27) &
10 (27)
\\
\hline
\hline
\end{tabular}
\end{table}

\eqref{ex:lift_b_l3_3:c} shows the \uc with grouped propositions obtained from instance \benchmark{lift\_b\_l3\_3} in family \benchmark{alaska\_lift}.\footnote{The duplicate occurrence of $\fbnot{bf_1^1}$ in line 2 of \eqref{ex:lift_b_l3_3:c} is an artifact resulting from the simplification of \ucs by removing conjunction with $\true$ and disjunction with $\false$.}
We use superscripts ${}^0,{}^1,{}^2$ to distinguish groups of occurrences of the same proposition.
The example illustrates clearly that grouped propositions make it easier to see which occurrences of propositions interact.
\begin{equation}\label{ex:lift_b_l3_3:c}
\begin{array}{l}
\fband{(\fbnot{u^0})}{\fband{(\fbnot{up})}{\fband{(\fbnot{bf_0^0})}{\fband{(\fbnot{bf_1^0})}{\hspace{0em}}}}}\\
\fband{(\fglobally{(\fbor{bf_0^0}{\fbor{bf_1^0}{\fnext{(\fbor{(\fbnot{bf_1^1})}{\fbnot{bf_1^1}})}}})})}{\hspace{0em}}\\
\fband{(\fglobally{(\fbor{bf_0^1}{\fbor{(\fbnot{u^1})}{\fbnot{\fnext{bf_0^2}}}})})}{\hspace{0em}}\\
\fband{(\fglobally{(\fbor{bf_1^1}{\fbor{(\fbnot{u^1})}{\fbnot{\fnext{bf_1^2}}}})})}{\hspace{0em}}\\
\fband{(\fglobally{(\fbor{up}{\fbor{bf_0^0}{\fbor{bf_1^0}{\fbnot{\fnext{(\fband{bf_0^1}{\fbnot{bf_1^1}})}}}}})})}{\hspace{0em}}\\
\fband{(\fglobally{(\fbimplies{(\fbnot{\fnext{u^1}})}{u^0})})}{\hspace{0em}}\\
\fnext{\fnext{(\fbor{bf_0^2}{bf_1^2})}}
\end{array}
\end{equation}

In Fig.~\ref{fig:resultsallgroupedpropositions} we show the overhead that is incurred by the extraction of \ucs with grouped propositions over the extraction of \ucs without grouped propositions.

The run time (resp.\ memory) overhead for instances that take at least 0.4 seconds to solve without grouped propositions is at most 24 \% (resp.~111 \%).
The memory overhead is less than 30 \% for more than 93 \% of all instances.

The moderate run time and memory overhead seems justified given the potential benefits of grouped propositions illustrated by the example above.

{
\newcommand{\tabfield}[1]{%
\begin{minipage}{0.304\linewidth}%
\begin{center}%
\includegraphics[type=pdf,ext=.pdf,read=.pdf,scale=0.33,trim=0.5cm 0.0cm 0.9cm 0.5cm,clip]{#1}%
\end{center}%
\end{minipage}%
}%

\begin{figure}[b]
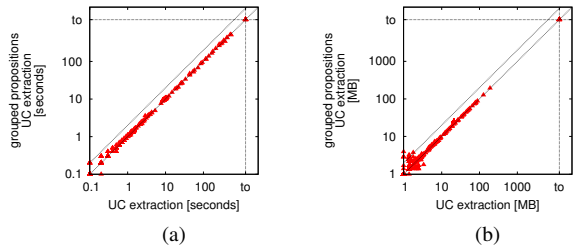

\centering
\begin{tabular}{cc}
\tabfield{gnuplot/all.proofnone_vs_partitionednone.time} &
\tabfield{gnuplot/all.proofnone_vs_partitionednone.mem}
\\[-1.5ex]
\\
\hspace{3.5em}(a) & \hspace{3.5em}(b)
\\[-2.5ex]
\\
\end{tabular}
\caption{\label{fig:resultsallgroupedpropositions}
Overhead of \uc extraction with grouped propositions (y-axis) over \uc extraction (x-axis) in terms of run time (in seconds, (a)) and memory (in MB, (b)). The off-center diagonal indicates where $y = 2 x$.}
\end{figure}
}

\subsection{Comparison with Related Approaches}

We now discuss \trp in comparison to \pltlmup \cite{RGoreJHuangTSergeantJThomson-Report-2013} and \procmine \cite{AAwadRGoreZHouJThomsonMWeidlich-InformationSystems-2012}.
Remember that both \pltlmup and \procmine compute an unsatisfiable subset of a set of LTL formulas rather than a \uc in LTL as a syntax tree according to Def.~\ref{def:snfltlmincore}.
I.e., compared to our notion of \uc \xspace \pltlmup and \procmine likely explore a smaller search space during minimization (if enabled) and possibly produce a less fine-grained result.

\trp, \pltlmup, and \procmine differ in important respects such as preprocessing of the input formula, algorithm used to obtain a \uc, algorithm used to minimize a \uc, programming language used for the implementation, and --- last but not least --- the notion of \uc (syntax tree vs.~subset of a set of formulas).
Therefore, we regard the value of a detailed discussion of the performance of \trp compared to \pltlmup and \procmine as somewhat questionable.
We only state that, when using the options for \trp as explained above and default options for \pltlmup and \procmine, and minimization in \trp enabled iff it was enabled in its competitor
\begin{inparaenum}[(i)]
\item \trp mostly outperforms \pltlmup and \procmine in category \benchmark{application} and in family \benchmark{trp} of category \benchmark{random},
\item in category \benchmark{crafted} \trp is outperformed by \pltlmup in one out of three families and by \procmine in most instances, and
\item in family \benchmark{rozier\_random} of category \benchmark{random} neither tool clearly dominates an other.
\end{inparaenum}
For plots please refer to \refmoreplots, for data to \url{http://www.schuppan.de/viktor/actainformatica15/}.

We finally illustrate the difference between our notion of \uc and the notion of \uc used by \pltlmup and \procmine on an example.
\eqref{ex:ucprocminevstrp:c} shows the \uc produced by \procmine for instance \benchmark{liftf\_l1\_2} from family \benchmark{alaska\_lift}.
The \uc contains 6 out of 15 top level conjuncts.
The parts that are additionally replaced with $\true$ by \trp due to our more fine-grained notion of \uc are marked {\color{blue}\setlength\fboxsep{1pt}\fbox{blue boxed}}.
For \pltlmup or \procmine to obtain a similar result, the user would have to rewrite the input.
More such examples can be found in families \benchmark{alaska\_lift}, \benchmark{forobots}, and \benchmark{rozier\_random} (\pltlmup and \procmine solve no instance in the remaining family in category \benchmark{application}, \benchmark{anzu\_genbuf}).
{
\newcommand{\rmuc}[1]{{\color{blue}\setlength\fboxsep{1pt}\fbox{{#1}}}}
\begin{equation}\label{ex:ucprocminevstrp:c}
\begin{array}{l}
f_0,\\
\fbnot{up},\\
\fglobally{(\fband{\rmuc{(\fbimplies{u}{(\fband{(\fbimplies{f_0}{\fnext{f_0}})}{\fband{(\fbimplies{(\fnext{f_0})}{f_0})}{\fband{(\fbimplies{f_1}{\fnext{f_1}})}{(\fbimplies{(\fnext{f_1})}{f_1})}}})})}}{\hspace{0em}}}\\
\phantom{\fgloballyname(}\fband{(\fbimplies{f_0}{\fnext{(\fbor{f_0}{f_1})}})}{\hspace{0em}}\\
\phantom{\fgloballyname(}\rmuc{(\fbimplies{f_1}{\fnext{(\fbor{f_0}{f_1})}})}),\\
\fglobally{(\fband{(\fbimplies{(\fband{f_0}{\fnext{f_0}})}{(\fband{\rmuc{(\fbimplies{up}{\fnext{up}})}}{(\fbimplies{(\fnext{up})}{up})})})}{\hspace{0em}}}\\
\phantom{\fgloballyname(}\fband{\rmuc{(\fbimplies{(\fband{f_1}{\fnext{f_1}})}{(\fband{(\fbimplies{up}{\fnext{up}})}{(\fbimplies{(\fnext{up})}{up})})})}}{\hspace{0em}}\\
\phantom{\fgloballyname(}\fband{(\fbimplies{(\fband{f_0}{\fnext{f_1}})}{up})}{\hspace{0em}}\\
\phantom{\fgloballyname(}\rmuc{(\fbimplies{(\fband{f_1}{\fnext{f_0}})}{\fbnot{up}})}),\\
\fglobally{(\fband{\rmuc{(\fbimplies{b_0}{\ffinally{f_0}})}}{(\fbimplies{b_1}{\ffinally{f_1}})})},\\
\fglobally{\ffinally{b_1}}\\
\end{array}
\end{equation}
}

All in all \trp produces more fine-grained \ucs than \pltlmup and \procmine while remaining at least competitive in terms of run time and memory usage.

\section{Conclusions}
\label{sec:conclusions}

In this article we showed how to obtain \ucs for LTL via temporal resolution, and we demonstrated with an implementation in \trp that \uc extraction can be performed efficiently.
The resulting \ucs are significantly smaller than the corresponding input formulas and more fine-grained than those produced by existing tools.
In parallel work \cite{VSchuppan-QAPL-2013} this article has been used as a basis to suggest enhancing \ucs for LTL with information on when subformulas of a \uc are relevant for unsatisfiability.
The similarity of temporal resolution and some BDD-based algorithms at a high level and work on resolution with BDDs (\cite{TJussilaCSinzABiere-SAT-2006}) suggests to explore whether computation of \ucs is feasible for BDD-based algorithms.
Another direction for transfer of our results is resolution-based computation of unrealizable cores \cite{PNoel-FAC-1995}.

\begin{acknowledgements}
I am grateful to Boris Konev and Michel Ludwig for making \trp and \tspass (which are the basis of this paper) including their LTL translators available and for answering my questions.
I thank Rajeev Gor\'e, Zhe Hou, Timothy Sergeant, and Jimmy Thomson for availability of and discussion about \pltlmup and \procmine as well as Daniel Kroening, Mitra Purandare, and Thomas Wahl for availability of and discussion about \tool{Aardvark}.
I thank Alessandro Cimatti for bringing up the subject of temporal resolution.
I also thank the reviewers of the current and previous iterations of this article for their helpful feedback.
Initial parts of the work were performed while working under a grant by the Provincia Autonoma di Trento (project EMTELOS).
\end{acknowledgements}


\begin{sloppypar}
\printbibliography
\end{sloppypar}
\ifnoappendix
\else
\clearpage
\appendix
\normalsize
\section{Proofs: \ref{sec:extractingaucinsnf} \UC Extraction}
\label{sec:formal-coreextraction}

\begin{sloppypar}
\thmminimalitypremises{false}
\end{sloppypar}

\begin{proof}
Below we provide the triples that were omitted in Sec.~\ref{sec:extractingaucinsnf}.
For the remainder of the proof see Sec.~\ref{sec:extractingaucinsnf}.

{
\newcommand{\tablineone}[2]{
\begin{samepage}
\noindent
\begin{tabular}{lc}
\hline
\\
\begin{minipage}[c]{0.65\linewidth}
{#1}
\end{minipage}
&
\begin{minipage}[c]{0.25\linewidth}
\begin{center}
{#2}
\end{center}
\end{minipage}
\end{tabular}
}

\newcommand{\tablinethree}{
\\
\begin{center}
}

\newcommand{\tablinefour}{
\end{center}
\vspace{2ex}
\end{samepage}

\hspace{0em}

}

\vspace{4ex}

\tablineone
{\refaugone}
{$
\begin{array}{l}
\hspace{-0.5em}\{\ficlause{a},\\
(\fglobally{(\fbor{(\fbnot{b})}{\fnext{\fbnot{c}}})}),\\
(\fglobally{(\fbor{(\fbnot{c})}{\fnext{c}})}),\\
\fgneclause{(\fbnot{a})}{\fbnot{c}},\\
\fgneclause{c}{b},\\
(\fglobally{\;\ffinally{\;c}})\}
\end{array}
$}
{}
\tablinethree
See Fig.~\ref{fig:rgaugone}.
\tablinefour
\begin{sidewaysfigure}
\vspace{98ex}
\centering
\adjustbox{scale=0.325,keepaspectratio=true}{
\iftrue
\input{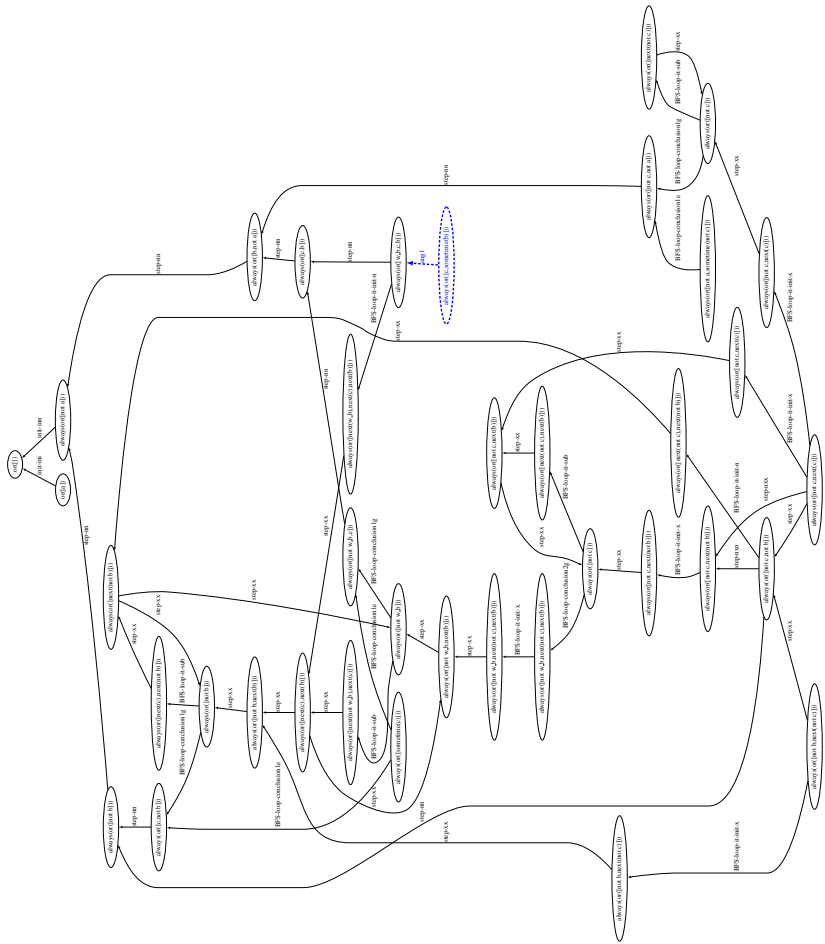}
\else
\begin{dot2tex}[options={--cache --graphstyle "scale=0.625"}]
digraph G {
graph [rankdir="BT"];
0[label="or([a])"];
1[label="always(or([not c,next(c)]))"];
2[label="always(or([not b,next(not c)]))"];
3[label="always(or([not a,sometime(not c)]))"];
4[label="always(or([c,sometime(b)]))",color=blue,style="ultra thick, densely dashed, text=blue"];
5[label="always(or([sometime(c)]))"];
6[label="always(or([not c,not b]))"];
7[label="always(or([not c,next(not b)]))"];
12[label="always(or([w_b,c,b]))"];
14[label="always(or([not c,next(c)]))"];
15[label="always(or([next(not c)]))"];
16[label="always(or([not c]))"];
18[label="always(or([not c,not a]))"];
21[label="always(or([not c,next(not b)]))"];
22[label="always(or([not c,next(c)]))"];
23[label="always(or([next(not c),next(b)]))"];
24[label="always(or([not c,next(b)]))"];
25[label="always(or([not c]))"];
26[label="always(or([not w_b,next(not c),next(b)]))"];
34[label="always(or([not b,next(not c)]))"];
36[label="always(or([not w_b,next(not c),next(b)]))"];
38[label="always(or([next(not c),next(not b)]))"];
40[label="always(or([next(w_b),next(c),next(b)]))"];
44[label="always(or([next(c),next(not b)]))"];
45[label="always(or([next(not w_b),next(c)]))"];
47[label="always(or([next(not b)]))"];
49[label="always(or([next(c),next(b)]))"];
50[label="always(or([not b,next(b)]))"];
53[label="always(or([not w_b,next(b)]))"];
54[label="always(or([not b]))"];
57[label="always(or([not w_b]))"];
59[label="always(or([c,not b]))"];
61[label="always(or([not w_b,c]))"];
64[label="always(or([not b]))"];
65[label="always(or([c,b]))"];
67[label="always(or([b,not a]))"];
70[label="always(or([not a]))"];
71[label="or([])"];
2->6 [label="step-xx"];
1->6 [label="step-xx"];
6->7 [label="step-nxn"];
1->7 [label="step-nxx"];
4->12 [label="aug1",color=blue,lblstyle="blue",style="ultra thick, densely dashed"];
1->14 [label="BFS-loop-it-init-x"];
15->16 [label="step-xx"];
14->16 [label="step-xx"];
16->15 [label="BFS-loop-it-sub"];
16->18 [label="BFS-loop-conclusion1g"];
3->18 [label="BFS-loop-conclusion1e"];
7->21 [label="BFS-loop-it-init-x"];
1->22 [label="BFS-loop-it-init-x"];
23->24 [label="step-xx"];
22->24 [label="step-xx"];
24->25 [label="step-xx"];
21->25 [label="step-xx"];
25->23 [label="BFS-loop-it-sub"];
25->26 [label="BFS-loop-conclusion2g"];
2->34 [label="BFS-loop-it-init-x"];
26->36 [label="BFS-loop-it-init-x"];
6->38 [label="BFS-loop-it-init-n"];
12->40 [label="BFS-loop-it-init-n"];
44->47 [label="step-xx"];
38->47 [label="step-xx"];
45->49 [label="step-xx"];
40->49 [label="step-xx"];
49->50 [label="step-xx"];
34->50 [label="step-xx"];
49->53 [label="step-xx"];
36->53 [label="step-xx"];
50->54 [label="step-xx"];
47->54 [label="step-xx"];
53->57 [label="step-xx"];
47->57 [label="step-xx"];
54->44 [label="BFS-loop-it-sub"];
57->45 [label="BFS-loop-it-sub"];
54->59 [label="BFS-loop-conclusion1g"];
5->59 [label="BFS-loop-conclusion1e"];
57->61 [label="BFS-loop-conclusion1g"];
5->61 [label="BFS-loop-conclusion1e"];
59->64 [label="step-nn"];
6->64 [label="step-nn"];
61->65 [label="step-nn"];
12->65 [label="step-nn"];
65->67 [label="step-nn"];
18->67 [label="step-nn"];
67->70 [label="step-nn"];
64->70 [label="step-nn"];
70->71 [label="init-inn"];
0->71 [label="init-ini"];
}
\end{dot2tex}
\fi
}
\begin{minipage}{0.52\linewidth}\caption{\label{fig:rgaugone} Subgraph $\graphp$ of resolution graph for the case of \refaugone in the proof of Proposition \ref{thm:minimalitypremises}}\end{minipage}
\end{sidewaysfigure}

\tablineone
{\refloopitinitn}
{$
\begin{array}{l}
\hspace{-0.5em}\{\ficlause{a},\\
\fgnclause{\fbor{(\fbnot{b})}{a}},\\
\fgnclause{\fbor{(\fbnot{b})}{(\fbnot{a})}},\\
\fgneclause{(\fbnot{a})}{b}\}
\end{array}
$}
\tablinethree
See Fig.~\ref{fig:rgloopitinitn}.
\tablinefour
\begin{sidewaysfigure}
\vspace{100ex}
\centering
\adjustbox{scale=0.67,keepaspectratio=true}{
\iftrue
\begin{tikzpicture}[>=latex,line join=bevel,scale=0.625]
\node (11) at (640bp,458bp) [draw,ellipse] {always(or([next(a)]))};
  \node (10) at (869bp,370bp) [draw,ellipse] {always(or([next(b)]))};
  \node (13) at (456bp,282bp) [draw,ellipse] {or([])};
  \node (12) at (456bp,194bp) [draw,ellipse] {always(or([next(not a)]))};
  \node (20) at (155bp,546bp) [draw,ellipse] {or([])};
  \node (19) at (237bp,458bp) [draw,ellipse] {always(or([not a]))};
  \node (15) at (166bp,370bp) [draw,ellipse] {always(or([b,not a]))};
  \node (1) at (640bp,282bp) [draw=blue,ellipse,ultra thick, densely dashed, text=blue] {always(or([not b,a]))};
  \node (0) at (108bp,458bp) [draw,ellipse] {or([a])};
  \node (3) at (125bp,282bp) [draw,ellipse] {always(or([not a,sometime(b)]))};
  \node (2) at (328bp,18bp) [draw,ellipse] {always(or([not b,not a]))};
  \node (7) at (640bp,370bp) [draw,ellipse] {always(or([next(not b),next(a)]))};
  \node (8) at (448bp,106bp) [draw,ellipse] {always(or([next(not b),next(not a)]))};
  \draw [->] (2) ..controls (296.57bp,52.016bp) and (278bp,78.058bp)  .. (278bp,105bp) .. controls (278bp,105bp) and (278bp,105bp)  .. (278bp,371bp) .. controls (278bp,392.91bp) and (267.02bp,415.17bp)  .. (19);
  \definecolor{strokecol}{rgb}{0.0,0.0,0.0};
  \pgfsetstrokecolor{strokecol}
  \draw (298bp,238bp) node {step-nn};
  \draw [->] (12) ..controls (456bp,224.25bp) and (456bp,240.18bp)  .. (13);
  \draw (476bp,238bp) node {step-xx};
  \draw [->] (10) ..controls (827.87bp,328.86bp) and (780.79bp,287.13bp)  .. (733bp,264bp) .. controls (672.72bp,234.82bp) and (598.89bp,217.22bp)  .. (12);
  \draw (810bp,282bp) node {step-xx};
  \draw [->] (13) ..controls (399.25bp,295.06bp) and (352.02bp,305.9bp)  .. (312bp,318bp) .. controls (292.08bp,324.02bp) and (287.62bp,327.07bp)  .. (268bp,334bp) .. controls (252.07bp,339.63bp) and (234.69bp,345.64bp)  .. (15);
  \draw (377bp,326bp) node {BFS-loop-conclusion1g};
  \draw [->] (15) ..controls (190.43bp,400.59bp) and (204.78bp,417.97bp)  .. (19);
  \draw (228bp,414bp) node {step-nn};
  \draw [->] (3) ..controls (127.3bp,310.5bp) and (129.8bp,323.49bp)  .. (135bp,334bp) .. controls (136.77bp,337.58bp) and (139.01bp,341.08bp)  .. (15);
  \draw (199.5bp,326bp) node {BFS-loop-conclusion1e};
  \draw [->] (13) ..controls (492.16bp,308.42bp) and (520.55bp,325.73bp)  .. (548bp,334bp) .. controls (644.9bp,363.2bp) and (674.64bp,338.98bp)  .. (775bp,352bp) .. controls (782.75bp,353.01bp) and (790.83bp,354.22bp)  .. (10);
  \draw (592bp,326bp) node {BFS-loop-it-sub};
  \draw [->] (11) ..controls (533.04bp,438.5bp) and (489.78bp,421.26bp)  .. (465bp,388bp) .. controls (448.49bp,365.84bp) and (448.33bp,333.26bp)  .. (13);
  \draw (485bp,370bp) node {step-xx};
  \draw [blue,->,ultra thick,densely dashed] (1) ..controls (640bp,312.25bp) and (640bp,328.18bp)  .. (7);
  \draw (689.5bp,326bp) node[blue] {BFS-loop-it-init-n};
  \draw [->] (2) ..controls (369.61bp,48.821bp) and (395.7bp,67.516bp)  .. (8);
  \draw (447.5bp,62bp) node {BFS-loop-it-init-n};
  \draw [->] (10) ..controls (790.24bp,400.58bp) and (731.04bp,422.81bp)  .. (11);
  \draw (795bp,414bp) node {step-xx};
  \draw [->] (8) ..controls (450.72bp,136.25bp) and (452.2bp,152.18bp)  .. (12);
  \draw (472bp,150bp) node {step-xx};
  \draw [->] (7) ..controls (640bp,400.25bp) and (640bp,416.18bp)  .. (11);
  \draw (660bp,414bp) node {step-xx};
  \draw [->] (0) ..controls (123.93bp,488.15bp) and (133.49bp,505.65bp)  .. (20);
  \draw (154.5bp,502bp) node {init-ini};
  \draw [->] (19) ..controls (208.12bp,489.29bp) and (190.37bp,507.91bp)  .. (20);
  \draw (224bp,502bp) node {init-inn};
\end{tikzpicture}
\else
\begin{dot2tex}[options={--cache --graphstyle "scale=0.625"}]
digraph G {
graph [rankdir="BT"];
0[label="or([a])"];
1[label="always(or([not b,a]))",color=blue,style="ultra thick, densely dashed, text=blue"];
2[label="always(or([not b,not a]))"];
3[label="always(or([not a,sometime(b)]))"];
7[label="always(or([next(not b),next(a)]))"];
8[label="always(or([next(not b),next(not a)]))"];
10[label="always(or([next(b)]))"];
11[label="always(or([next(a)]))"];
12[label="always(or([next(not a)]))"];
13[label="or([])"];
15[label="always(or([b,not a]))"];
19[label="always(or([not a]))"];
20[label="or([])"];
1->7 [label="BFS-loop-it-init-n",color=blue,lblstyle="blue",style="ultra thick, densely dashed"];
2->8 [label="BFS-loop-it-init-n"];
10->11 [label="step-xx"];
7->11 [label="step-xx"];
10->12 [label="step-xx"];
8->12 [label="step-xx"];
12->13 [label="step-xx"];
11->13 [label="step-xx"];
13->10 [label="BFS-loop-it-sub"];
13->15 [label="BFS-loop-conclusion1g"];
3->15 [label="BFS-loop-conclusion1e"];
15->19 [label="step-nn"];
2->19 [label="step-nn"];
19->20 [label="init-inn"];
0->20 [label="init-ini"];
}
\end{dot2tex}
\fi
}
\begin{minipage}{0.59\linewidth}\caption{\label{fig:rgloopitinitn} Subgraph $\graphp$ of resolution graph for the case of \refloopitinitn in the proof of Proposition \ref{thm:minimalitypremises}}\end{minipage}
\end{sidewaysfigure}

\tablineone
{\refloopitsub}
{$
\begin{array}{l}
\hspace{-0.5em}\{\ficlause{a},\\
(\fglobally{(\fbor{(\fbnot{a})}{\fnext{b}})}),\\
(\fglobally{(\fbor{(\fbnot{b})}{\fnext{a}})}),\\
(\fglobally{\;\ffinally{\;c}}),\\
\fgnclause{\fbor{(\fbnot{c})}{(\fbnot{a})}},\\
(\fglobally{(\fbor{(\fbnot{a})}{\fnext{\fbnot{c}}})})\}
\end{array}
$}
\tablinethree
See Fig.~\ref{fig:rgloopitsub}.
\tablinefour
\begin{sidewaysfigure}
\vspace{98ex}
\centering
\adjustbox{scale=0.6,keepaspectratio=true}{
\iftrue
\begin{tikzpicture}[>=latex,line join=bevel,scale=0.625]
\node (24) at (80bp,810bp) [draw,ellipse] {or([])};
  \node (11) at (224bp,106bp) [draw,ellipse] {always(or([next(not c),next(not a)]))};
  \node (13) at (217bp,634bp) [draw,ellipse] {always(or([next(c),next(not a)]))};
  \node (20) at (445bp,634bp) [draw,ellipse] {always(or([c,not a]))};
  \node (14) at (371bp,370bp) [draw,ellipse] {always(or([next(c),next(not b)]))};
  \node (17) at (391bp,458bp) [draw,ellipse] {always(or([not a,next(not b)]))};
  \node (23) at (162bp,722bp) [draw,ellipse] {always(or([not a]))};
  \node (18) at (391bp,546bp) [draw,ellipse] {always(or([not a]))};
  \node (16) at (371bp,282bp) [draw,ellipse] {always(or([not b]))};
  \node (15) at (224bp,194bp) [draw,ellipse] {always(or([next(not a)]))};
  \node (1) at (106bp,18bp) [draw,ellipse] {always(or([not c,not a]))};
  \node (0) at (33bp,722bp) [draw,ellipse] {or([a])};
  \node (3) at (875bp,370bp) [draw,ellipse] {always(or([not a,next(b)]))};
  \node (2) at (487bp,106bp) [draw=blue,ellipse,ultra thick, densely dashed, text=blue] {always(or([not b,next(a)]))};
  \node (5) at (591bp,546bp) [draw,ellipse] {always(or([sometime(c)]))};
  \node (4) at (633bp,282bp) [draw,ellipse] {always(or([not a,next(not c)]))};
  \node (9) at (754bp,458bp) [draw,ellipse] {always(or([not a,next(b)]))};
  \node (8) at (467bp,194bp) [draw,ellipse] {always(or([not b,next(a)]))};
  \node (10) at (633bp,370bp) [draw,ellipse] {always(or([not a,next(not c)]))};
  \draw [->] (9) ..controls (628.37bp,488.76bp) and (520.85bp,514.24bp)  .. (18);
  \definecolor{strokecol}{rgb}{0.0,0.0,0.0};
  \pgfsetstrokecolor{strokecol}
  \draw (625bp,502bp) node {step-xx};
  \draw [->] (15) ..controls (275.54bp,225.15bp) and (309.36bp,244.94bp)  .. (16);
  \draw (330bp,238bp) node {step-xx};
  \draw [->] (5) ..controls (576.88bp,575.46bp) and (567.74bp,589.31bp)  .. (556bp,598bp) .. controls (544.13bp,606.79bp) and (529.98bp,613.42bp)  .. (20);
  \draw (635.5bp,590bp) node {BFS-loop-conclusion1e};
  \draw [blue,->,ultra thick,densely dashed] (16) ..controls (371bp,312.25bp) and (371bp,328.18bp)  .. (14);
  \draw (415bp,326bp) node[blue] {BFS-loop-it-sub};
  \draw [->] (14) ..controls (377.83bp,400.37bp) and (381.58bp,416.5bp)  .. (17);
  \draw (402bp,414bp) node {step-xx};
  \draw [->] (4) ..controls (633bp,312.25bp) and (633bp,328.18bp)  .. (10);
  \draw (682.5bp,326bp) node {BFS-loop-it-init-x};
  \draw [->] (0) ..controls (48.933bp,752.15bp) and (58.492bp,769.65bp)  .. (24);
  \draw (79.5bp,766bp) node {init-ini};
  \draw [->] (11) ..controls (224bp,136.25bp) and (224bp,152.18bp)  .. (15);
  \draw (244bp,150bp) node {step-xx};
  \draw [->] (17) ..controls (391bp,488.25bp) and (391bp,504.18bp)  .. (18);
  \draw (411bp,502bp) node {step-xx};
  \draw [->] (8) ..controls (433.48bp,225.03bp) and (413.21bp,243.19bp)  .. (16);
  \draw (447bp,238bp) node {step-xx};
  \draw [->] (18) ..controls (331.53bp,576.39bp) and (290.46bp,596.69bp)  .. (13);
  \draw (363bp,590bp) node {BFS-loop-it-sub};
  \draw [->] (2) ..controls (480.27bp,135.93bp) and (476.5bp,152.15bp)  .. (8);
  \draw (527.5bp,150bp) node {BFS-loop-it-init-x};
  \draw [->] (10) ..controls (548.15bp,401.15bp) and (488.26bp,422.44bp)  .. (17);
  \draw (553bp,414bp) node {step-xx};
  \draw [->] (18) ..controls (409.43bp,576.36bp) and (420.09bp,593.32bp)  .. (20);
  \draw (487bp,590bp) node {BFS-loop-conclusion1g};
  \draw [->] (3) ..controls (832.96bp,400.88bp) and (806.5bp,419.68bp)  .. (9);
  \draw (874.5bp,414bp) node {BFS-loop-it-init-x};
  \draw [->] (23) ..controls (133.12bp,753.29bp) and (115.37bp,771.91bp)  .. (24);
  \draw (149bp,766bp) node {init-inn};
  \draw [->] (1) ..controls (146.92bp,48.821bp) and (172.57bp,67.516bp)  .. (11);
  \draw (225.5bp,62bp) node {BFS-loop-it-init-n};
  \draw [->] (20) ..controls (349.06bp,664.15bp) and (269.67bp,688.28bp)  .. (23);
  \draw (348bp,678bp) node {step-nn};
  \draw [->] (13) ..controls (217bp,598.51bp) and (217bp,570.89bp)  .. (217bp,547bp) .. controls (217bp,281bp) and (217bp,281bp)  .. (217bp,281bp) .. controls (217bp,261.34bp) and (218.84bp,239.23bp)  .. (15);
  \draw (237bp,414bp) node {step-xx};
  \draw [->] (1) ..controls (74.567bp,52.016bp) and (56bp,78.058bp)  .. (56bp,105bp) .. controls (56bp,105bp) and (56bp,105bp)  .. (56bp,635bp) .. controls (56bp,666.4bp) and (84.404bp,688.34bp)  .. (23);
  \draw (76bp,370bp) node {step-nn};
\end{tikzpicture}
\else
\begin{dot2tex}[options={--cache --graphstyle "scale=0.625"}]
digraph G {
graph [rankdir="BT"];
0[label="or([a])"];
1[label="always(or([not c,not a]))"];
2[label="always(or([not b,next(a)]))",color=blue,style="ultra thick, densely dashed, text=blue"];
3[label="always(or([not a,next(b)]))"];
4[label="always(or([not a,next(not c)]))"];
5[label="always(or([sometime(c)]))"];
8[label="always(or([not b,next(a)]))"];
9[label="always(or([not a,next(b)]))"];
10[label="always(or([not a,next(not c)]))"];
11[label="always(or([next(not c),next(not a)]))"];
13[label="always(or([next(c),next(not a)]))"];
14[label="always(or([next(c),next(not b)]))"];
15[label="always(or([next(not a)]))"];
16[label="always(or([not b]))"];
17[label="always(or([not a,next(not b)]))"];
18[label="always(or([not a]))"];
20[label="always(or([c,not a]))"];
23[label="always(or([not a]))"];
24[label="or([])"];
2->8 [label="BFS-loop-it-init-x"];
3->9 [label="BFS-loop-it-init-x"];
4->10 [label="BFS-loop-it-init-x"];
1->11 [label="BFS-loop-it-init-n"];
13->15 [label="step-xx"];
11->15 [label="step-xx"];
15->16 [label="step-xx"];
8->16 [label="step-xx"];
14->17 [label="step-xx"];
10->17 [label="step-xx"];
17->18 [label="step-xx"];
9->18 [label="step-xx"];
18->13 [label="BFS-loop-it-sub"];
16->14 [label="BFS-loop-it-sub",color=blue,lblstyle="blue",style="ultra thick, densely dashed"];
18->20 [label="BFS-loop-conclusion1g"];
5->20 [label="BFS-loop-conclusion1e"];
20->23 [label="step-nn"];
1->23 [label="step-nn"];
23->24 [label="init-inn"];
0->24 [label="init-ini"];
}
\end{dot2tex}
\fi
}
\begin{minipage}{0.58\linewidth}\caption{\label{fig:rgloopitsub} Subgraph $\graphp$ of resolution graph for the case of \refloopitsub in the proof of Proposition \ref{thm:minimalitypremises}}\end{minipage}
\end{sidewaysfigure}

\tablineone
{\refloopconclusiontwo, premise 1}
{$
\begin{array}{l}
\hspace{-0.5em}\{\ficlause{\fbor{a}{b}},\\
\fgneclause{(\fbnot{a})}{f},\\
\fgneclause{(\fbnot{b})}{f},\\
\fgneclause{(\fbnot{a})}{\fbnot{f}},\\
\fgneclause{(\fbnot{b})}{\fbnot{f}},\\
(\fglobally{(\fbor{(\fbnot{a})}{\fbor{f}{\fnext{c}}})}),\\
(\fglobally{(\fbor{(\fbnot{b})}{\fbor{f}{\fnext{d}}})}),\\
(\fglobally{(\fbor{(\fbnot{c})}{\fnext{c}})}),\\
(\fglobally{(\fbor{(\fbnot{d})}{\fnext{d}})}),\\
\fgnclause{\fbor{(\fbnot{c})}{(\fbnot{f})}},\\
\fgnclause{\fbor{(\fbnot{d})}{(\fbnot{f})}},\\
\ficlause{\fbor{(\fbnot{f})}{g}},\\
(\fglobally{(\fbor{(\fbnot{g})}{\fnext{g}})}),\\
\fgnclause{\fbor{(\fbnot{g})}{f}}\}
\end{array}
$}
\tablinethree
See Fig.~\ref{fig:rgloopconclusiontwo}.
\tablinefour
\begin{sidewaysfigure}
\vspace{98ex}
\centering
\adjustbox{scale=0.30,keepaspectratio=true}{
\iftrue
\input{trpuc-paper-dot2tex-fig6.tex}
\else
\begin{dot2tex}[options={--cache --graphstyle "scale=0.6"}]
digraph G {
graph [rankdir="BT"];
0[label="or([b,a])"];
1[label="or([g,not f])"];
2[label="always(or([not c,not f]))"];
3[label="always(or([not d,not f]))"];
4[label="always(or([not g,f]))"];
5[label="always(or([not c,next(c)]))"];
6[label="always(or([not d,next(d)]))",color=blue,style="ultra thick, densely dashed, text=blue"];
7[label="always(or([not g,next(g)]))"];
8[label="always(or([f,not a,next(c)]))"];
9[label="always(or([f,not b,next(d)]))"];
10[label="always(or([not a,sometime(f)]))"];
11[label="always(or([not b,sometime(f)]))"];
12[label="always(or([not a,sometime(not f)]))"];
13[label="always(or([not b,sometime(not f)]))"];
14[label="always(or([not c,next(not f)]))"];
15[label="always(or([not d,next(not f)]))"];
16[label="always(or([not g,next(f)]))"];
21[label="always(or([f,not a,next(not f)]))"];
22[label="always(or([f,not b,next(not f)]))"];
26[label="always(or([w_f,f,not b]))"];
28[label="always(or([w_not_f,not f,not b]))"];
29[label="always(or([not c,next(not f)]))"];
30[label="always(or([not d,next(not f)]))"];
31[label="always(or([f,not a,next(not f)]))"];
33[label="always(or([not c,next(c)]))"];
34[label="always(or([f,not a,next(c)]))"];
35[label="always(or([not d,next(d)]))"];
37[label="always(or([next(not c),next(f)]))"];
38[label="always(or([next(not d),next(f)]))"];
39[label="always(or([not c,next(f)]))"];
40[label="always(or([f,not a,next(f)]))"];
41[label="always(or([not c]))"];
42[label="always(or([not d,next(f)]))"];
44[label="always(or([not d]))"];
45[label="always(or([f,not a]))"];
48[label="always(or([f,not a]))"];
53[label="always(or([not w_f,next(not d),next(f)]))"];
67[label="always(or([not w_f,f,not b,next(f)]))"];
73[label="always(or([not w_f,f,not b]))"];
74[label="always(or([f,not b]))"];
107[label="always(or([not g,next(f)]))"];
124[label="always(or([not g,next(g)]))"];
138[label="always(or([next(not g),next(not f)]))"];
139[label="always(or([not g,next(not f)]))"];
140[label="always(or([not g]))"];
143[label="always(or([not w_not_f,next(not g),next(not f)]))"];
144[label="always(or([not g,not f,not a]))"];
145[label="always(or([not w_not_f,not g,next(not f)]))"];
146[label="always(or([not w_not_f,not g]))"];
147[label="always(or([not g,not f,not b]))"];
148[label="or([not f,not a])"];
150[label="or([not a])"];
152[label="or([not f,not b])"];
154[label="or([not b])"];
155[label="or([a])"];
156[label="or([])"];
2->14 [label="step-nxn"];
5->14 [label="step-nxx"];
3->15 [label="step-nxn"];
6->15 [label="step-nxx"];
4->16 [label="step-nxn"];
7->16 [label="step-nxx"];
8->21 [label="step-nxx"];
2->21 [label="step-nxn"];
9->22 [label="step-nxx"];
3->22 [label="step-nxn"];
11->26 [label="aug1"];
13->28 [label="aug1"];
14->29 [label="BFS-loop-it-init-x"];
15->30 [label="BFS-loop-it-init-x"];
21->31 [label="BFS-loop-it-init-x"];
5->33 [label="BFS-loop-it-init-x"];
8->34 [label="BFS-loop-it-init-x"];
6->35 [label="BFS-loop-it-init-x"];
37->39 [label="step-xx"];
33->39 [label="step-xx"];
37->40 [label="step-xx"];
34->40 [label="step-xx"];
39->41 [label="step-xx"];
29->41 [label="step-xx"];
38->42 [label="step-xx"];
35->42 [label="step-xx"];
42->44 [label="step-xx"];
30->44 [label="step-xx"];
40->45 [label="step-xx"];
31->45 [label="step-xx"];
41->37 [label="BFS-loop-it-sub"];
44->38 [label="BFS-loop-it-sub"];
45->48 [label="BFS-loop-conclusion1-woemptyg"];
10->48 [label="BFS-loop-conclusion1-woemptye"];
44->53 [label="BFS-loop-conclusion2-woemptyg",color=blue,lblstyle="blue",style="ultra thick, densely dashed"];
53->67 [label="step-xx"];
9->67 [label="step-xx"];
67->73 [label="step-xx"];
22->73 [label="step-xx"];
73->74 [label="step-nn"];
26->74 [label="step-nn"];
16->107 [label="BFS-loop-it-init-x"];
7->124 [label="BFS-loop-it-init-x"];
138->139 [label="step-xx"];
124->139 [label="step-xx"];
139->140 [label="step-xx"];
107->140 [label="step-xx"];
140->138 [label="BFS-loop-it-sub"];
140->143 [label="BFS-loop-conclusion2-woemptyg"];
140->144 [label="BFS-loop-conclusion1-woemptyg"];
12->144 [label="BFS-loop-conclusion1-woemptye"];
143->145 [label="step-xx"];
7->145 [label="step-xx"];
145->146 [label="step-xx"];
16->146 [label="step-xx"];
146->147 [label="step-nn"];
28->147 [label="step-nn"];
144->148 [label="init-inn"];
1->148 [label="init-ini"];
148->150 [label="init-ini"];
48->150 [label="init-inn"];
147->152 [label="init-inn"];
1->152 [label="init-ini"];
152->154 [label="init-ini"];
74->154 [label="init-inn"];
154->155 [label="init-ii"];
0->155 [label="init-ii"];
155->156 [label="init-ii"];
150->156 [label="init-ii"];
48->7 [style=invis];
74->7 [style=invis];
}
\end{dot2tex}
\fi
}
\begin{minipage}{0.61\linewidth}\caption{\label{fig:rgloopconclusiontwo} Subgraph $\graphp$ of resolution graph for the case of \refloopconclusiontwo in the proof of Proposition \ref{thm:minimalitypremises}}\end{minipage}
\end{sidewaysfigure}

\noindent
\begin{tabular}{p{0.65\linewidth}p{0.25\linewidth}}
&\\
\hline
&\\
\end{tabular}
}

\noindent
This concludes the proof.
\qed
\end{proof}

\clearpage
\section{Additional Plots}
\label{sec:moreplots}

Figures \ref{fig:overheadbycategoryucvsnouc} and \ref{fig:sizereductionbycategoryucvsnouc} show the overhead that is incurred and the size reduction that is obtained by extracting \ucs compared to not extracting \ucs split by category.
Figures \ref{fig:optimizationsbycategoryonetothree} and \ref{fig:optimizationsbycategoryfourtosix} show the benefit of optimizations split by category.
Figures \ref{fig:overheadbycategoryminimalucvsuc} and \ref{fig:sizereductionbycategoryminimalucvsuc} show the overhead that is incurred and the size reduction that is obtained by extracting minimal \ucs compared to extracting (non-minimal) \ucs split by category.
Figure \ref{fig:overheadbycategorypartitioneducvsuc} shows the overhead that is incurred by extracting \ucs with grouped propositions compared to extracting \ucs (without grouped propositions) split by category.
Figures \ref{fig:proofdeletionvspltlmup}--\ref{fig:proofvsprocminenomus} compare \UC extraction with \trp and with \pltlmup and \procmine in terms of run time and memory split by category.


\clearpage
{
\newcommand{\tabfield}[1]{%
\begin{minipage}{0.29\linewidth}%
\begin{center}%
\includegraphics[type=pdf,ext=.pdf,read=.pdf,scale=0.36,trim=1.7cm 0.0cm 1.3cm 0.5cm,clip]{#1}%
\end{center}%
\end{minipage}%
}%

\newcommand{\tablineheader}[1]{%
\begin{minipage}{0.02\linewidth}%
\begin{sideways}%
\hspace*{2em}{#1}%
\end{sideways}%
\end{minipage}%
}

\begin{figure}[h]
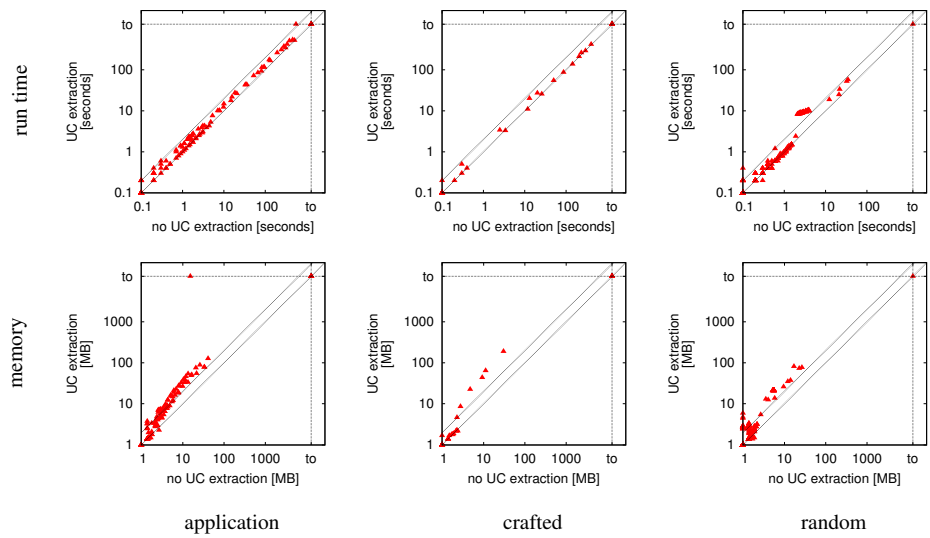

\hspace*{-0.5em}\begin{tabular}{cccc}
\tablineheader{run time} &
\tabfield{gnuplot/application.nonenone_vs_proofnone.time} &
\tabfield{gnuplot/crafted.nonenone_vs_proofnone.time} &
\tabfield{gnuplot/random.nonenone_vs_proofnone.time} \\
\\
\tablineheader{memory} &
\tabfield{gnuplot/application.nonenone_vs_proofnone.mem} &
\tabfield{gnuplot/crafted.nonenone_vs_proofnone.mem} &
\tabfield{gnuplot/random.nonenone_vs_proofnone.mem} \\
\\
& \hspace*{3.25em}application & \hspace*{3.25em}crafted & \hspace*{3.25em}random \\
\\
\end{tabular}
\caption{\label{fig:overheadbycategoryucvsnouc} Overhead incurred by \uc extraction compared to not extracting \ucs in terms of run time (in seconds) and memory (in MB) separated by categories \benchmark{application}, \benchmark{crafted}, and \benchmark{random}. In each graph extraction of \ucs is on the y-axis and no \uc extraction on the x-axis. The off-center diagonal indicates where $y = 2 x$.}
\end{figure}
}

{
\newcommand{\tabfield}[1]{%
\begin{minipage}{0.29\linewidth}%
\begin{center}%
\includegraphics[type=pdf,ext=.pdf,read=.pdf,scale=0.36,trim=1.6cm 0.0cm 1.3cm 0.5cm,clip]{#1}%
\end{center}%
\end{minipage}%
}%

\newcommand{\tablineheader}[1]{%
\begin{minipage}{0.01\linewidth}%
\begin{sideways}%
\hspace*{2em}{#1}%
\end{sideways}%
\end{minipage}%
}

\begin{figure}[h]
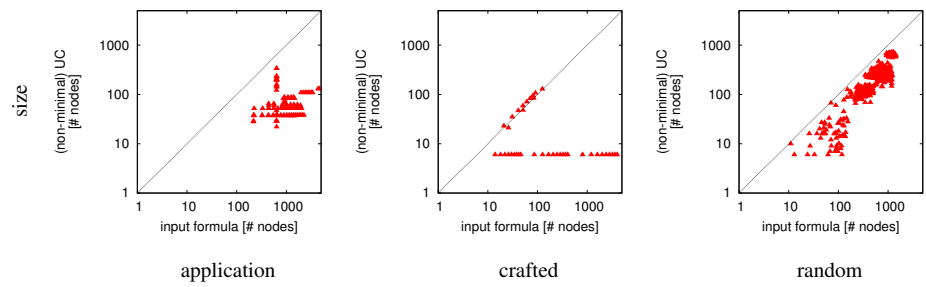

\hspace*{-0.5em}\begin{tabular}{cccc}
\tablineheader{size} &
\tabfield{gnuplot/application.inputnone_vs_proofnone.numnodesst} &
\tabfield{gnuplot/crafted.inputnone_vs_proofnone.numnodesst} &
\tabfield{gnuplot/random.inputnone_vs_proofnone.numnodesst} \\
\\
& \hspace*{3.75em}application & \hspace*{3.75em}crafted & \hspace*{3.75em}random \\
\\
\end{tabular}
\caption{\label{fig:sizereductionbycategoryucvsnouc} Size reduction obtained by \uc extraction compared to not extracting \ucs separated by categories \benchmark{application}, \benchmark{crafted}, and \benchmark{random}. The y-axes show the sizes of the \ucs, the x-axes show the sizes of the input formulas. Size is measured as the number of nodes in the syntax trees.}
\end{figure}
}


{
\newcommand{\tabfield}[1]{%
\begin{minipage}{0.31\linewidth}%
\begin{center}%
\includegraphics[type=pdf,ext=.pdf,read=.pdf,scale=0.33,trim=0.4cm 0.0cm 0.9cm 0.5cm,clip]{#1}%
\end{center}%
\end{minipage}%
}%

\clearpage
\begin{figure}[h]
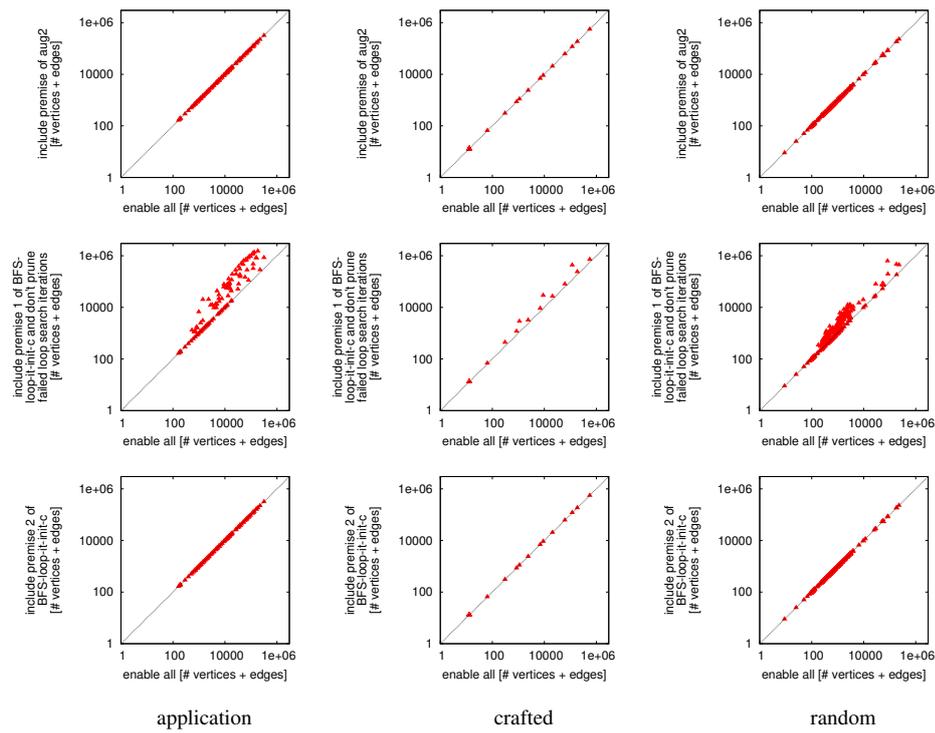

\hspace{-0.75em}\begin{tabular}{ccc}
\tabfield{gnuplot/application.proofnone_vs_proofaug2none.numverticesedgespeak} &
\tabfield{gnuplot/crafted.proofnone_vs_proofaug2none.numverticesedgespeak} &
\tabfield{gnuplot/random.proofnone_vs_proofaug2none.numverticesedgespeak} \\
\\
\tabfield{gnuplot/application.proofnone_vs_proofbfsloopitinitcgnone.numverticesedgespeak} &
\tabfield{gnuplot/crafted.proofnone_vs_proofbfsloopitinitcgnone.numverticesedgespeak} &
\tabfield{gnuplot/random.proofnone_vs_proofbfsloopitinitcgnone.numverticesedgespeak} \\
\\
\tabfield{gnuplot/application.proofnone_vs_proofbfsloopitinitcenone.numverticesedgespeak} &
\tabfield{gnuplot/crafted.proofnone_vs_proofbfsloopitinitcenone.numverticesedgespeak} &
\tabfield{gnuplot/random.proofnone_vs_proofbfsloopitinitcenone.numverticesedgespeak} \\
\\
\hspace*{4.75em}application & \hspace*{4.75em}crafted & \hspace*{4.75em}random \\
\\
\end{tabular}
\caption{\label{fig:optimizationsbycategoryonetothree} Benefit of optimizations as reduction in the peak size of the resolution graph (number of vertices + number of edges) separated by categories \benchmark{application}, \benchmark{crafted}, and \benchmark{random}. The x-axis shows all optimizations enabled. The y-axis of rows 1--3 shows one optimization disabled: (row 1) include premise of \refaugtwo, (row 2) include premise 1 of \refloopitinitc and disable immediate pruning of failed loop search iterations, and (row 3) include premise 2 of \refloopitinitc.}
\end{figure}

\clearpage
\begin{figure}[h]
\hspace{-0,75em}\begin{tabular}{ccc}
\tabfield{gnuplot/application.proofnone_vs_proofbfsloopconclusion2enone.numverticesedgespeak} &
\tabfield{gnuplot/crafted.proofnone_vs_proofbfsloopconclusion2enone.numverticesedgespeak} &
\tabfield{gnuplot/random.proofnone_vs_proofbfsloopconclusion2enone.numverticesedgespeak} \\
\\
\tabfield{gnuplot/application.proofnone_vs_proofnoprunemainnone.numverticesedgespeak} &
\tabfield{gnuplot/crafted.proofnone_vs_proofnoprunemainnone.numverticesedgespeak} &
\tabfield{gnuplot/random.proofnone_vs_proofnoprunemainnone.numverticesedgespeak} \\
\\
\tabfield{gnuplot/application.proofnone_vs_proofallnone.numverticesedgespeak} &
\tabfield{gnuplot/crafted.proofnone_vs_proofallnone.numverticesedgespeak} &
\tabfield{gnuplot/random.proofnone_vs_proofallnone.numverticesedgespeak} \\
\\
\hspace*{4.75em}application & \hspace*{4.75em}crafted & \hspace*{4.75em}random \\
\\
\end{tabular}
\caption{\label{fig:optimizationsbycategoryfourtosix} Benefit of optimizations as reduction in the peak size of the resolution graph (number of vertices + number of edges) separated by categories \benchmark{application}, \benchmark{crafted}, and \benchmark{random}. The x-axis shows all optimizations enabled. The y-axis of rows 1 and 2 shows one optimization disabled: (row 1) include premise 2 of \refloopconclusiontwo and (row 2) disable pruning of the resolution graph between loop searches. The y-axis of row 3 shows all optimizations disabled.}
\end{figure}
}


\clearpage
{
\newcommand{\tabfield}[1]{%
\begin{minipage}{0.29\linewidth}%
\begin{center}%
\includegraphics[type=pdf,ext=.pdf,read=.pdf,scale=0.36,trim=1.7cm 0.0cm 1.3cm 0.5cm,clip]{#1}%
\end{center}%
\end{minipage}%
}%

\newcommand{\tablineheader}[1]{%
\begin{minipage}{0.02\linewidth}%
\begin{sideways}%
\hspace*{2em}{#1}%
\end{sideways}%
\end{minipage}%
}

\begin{figure}[h]
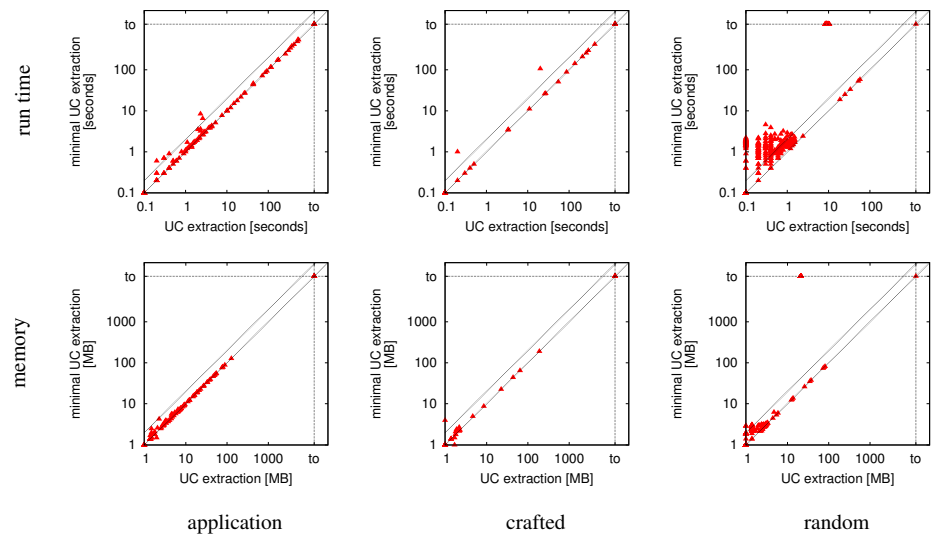

\hspace*{-0.5em}\begin{tabular}{cccc}
\tablineheader{run time} &
\tabfield{gnuplot/application.proofnone_vs_proofdeletionnone.time} &
\tabfield{gnuplot/crafted.proofnone_vs_proofdeletionnone.time} &
\tabfield{gnuplot/random.proofnone_vs_proofdeletionnone.time} \\
\\
\tablineheader{memory} &
\tabfield{gnuplot/application.proofnone_vs_proofdeletionnone.mem} &
\tabfield{gnuplot/crafted.proofnone_vs_proofdeletionnone.mem} &
\tabfield{gnuplot/random.proofnone_vs_proofdeletionnone.mem} \\
\\
& \hspace*{3.25em}application & \hspace*{3.25em}crafted & \hspace*{3.25em}random \\
\\
\end{tabular}
\caption{\label{fig:overheadbycategoryminimalucvsuc} Overhead incurred by minimal \uc extraction compared to (non-minimal) \uc extraction in terms of run time (in seconds) and memory (in MB) separated by categories \benchmark{application}, \benchmark{crafted}, and \benchmark{random}. In each graph extraction of minimal \ucs is on the y-axis and (non-minimal) \uc extraction on the x-axis. The off-center diagonal indicates where $y = 2 x$.}
\end{figure}
}

{
\newcommand{\tabfield}[1]{%
\begin{minipage}{0.29\linewidth}%
\begin{center}%
\includegraphics[type=pdf,ext=.pdf,read=.pdf,scale=0.36,trim=1.6cm 0.0cm 1.3cm 0.5cm,clip]{#1}%
\end{center}%
\end{minipage}%
}%

\newcommand{\tablineheader}[1]{%
\begin{minipage}{0.01\linewidth}%
\begin{sideways}%
\hspace*{2em}{#1}%
\end{sideways}%
\end{minipage}%
}

\begin{figure}[h]
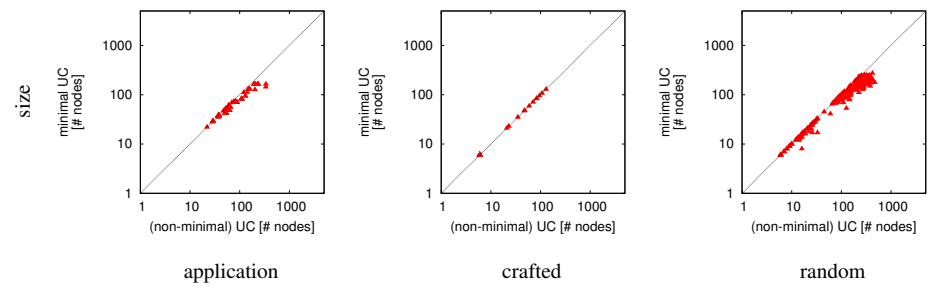

\hspace*{-0.5em}\begin{tabular}{cccc}
\tablineheader{size} &
\tabfield{gnuplot/application.proofnone_vs_proofdeletionnone.numnodesst} &
\tabfield{gnuplot/crafted.proofnone_vs_proofdeletionnone.numnodesst} &
\tabfield{gnuplot/random.proofnone_vs_proofdeletionnone.numnodesst} \\
\\
& \hspace*{3.75em}application & \hspace*{3.75em}crafted & \hspace*{3.75em}random \\
\\
\end{tabular}
\caption{\label{fig:sizereductionbycategoryminimalucvsuc} Size reduction obtained by minimal \uc extraction compared to (non-minimal) \uc extraction separated by categories \benchmark{application}, \benchmark{crafted}, and \benchmark{random}. The y-axes show the sizes of the minimal \ucs, the x-axes show the sizes of the (non-minimal) \ucs. Size is measured as the number of nodes in the syntax trees.}
\end{figure}
}


\clearpage
{
\newcommand{\tabfield}[1]{%
\begin{minipage}{0.29\linewidth}%
\begin{center}%
\includegraphics[type=pdf,ext=.pdf,read=.pdf,scale=0.34,trim=1.5cm 0.0cm 1.1cm 0.5cm,clip]{#1}%
\end{center}%
\end{minipage}%
}%

\newcommand{\tablineheader}[1]{%
\begin{minipage}{0.02\linewidth}%
\begin{sideways}%
\hspace*{1.75em}{#1}%
\end{sideways}%
\end{minipage}%
}

\begin{figure}[h]
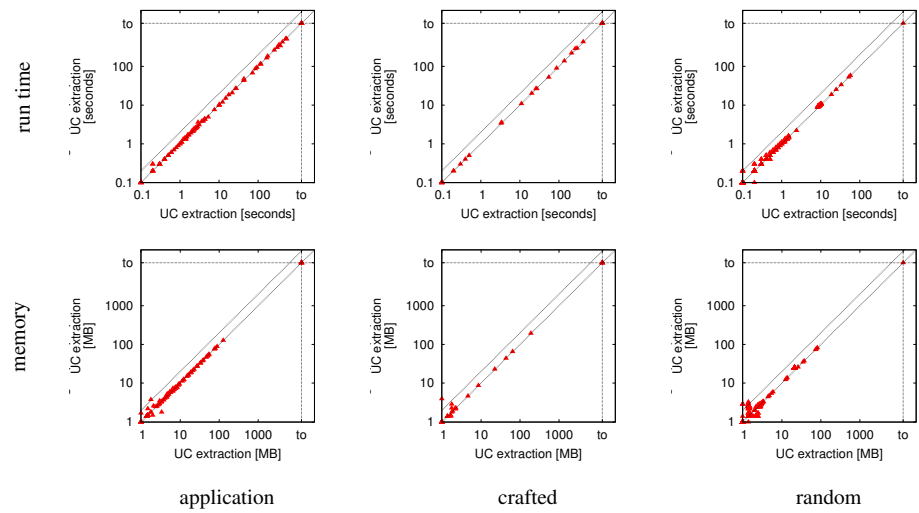

\hspace*{-0.5em}\begin{tabular}{cccc}
\tablineheader{run time} &
\tabfield{gnuplot/application.proofnone_vs_partitionednone.time} &
\tabfield{gnuplot/crafted.proofnone_vs_partitionednone.time} &
\tabfield{gnuplot/random.proofnone_vs_partitionednone.time} \\
\\
\tablineheader{memory} &
\tabfield{gnuplot/application.proofnone_vs_partitionednone.mem} &
\tabfield{gnuplot/crafted.proofnone_vs_partitionednone.mem} &
\tabfield{gnuplot/random.proofnone_vs_partitionednone.mem} \\
\\
& \hspace*{2.5em}application & \hspace*{2.5em}crafted & \hspace*{2.5em}random \\
\\
\end{tabular}
\caption{\label{fig:overheadbycategorypartitioneducvsuc} Overhead of \uc extraction with grouped propositions compared to \uc extraction (without grouped propositions) in terms of run time (in seconds) and memory (in MB) separated by categories \benchmark{application}, \benchmark{crafted}, and \benchmark{random}. In each graph extraction of \ucs with grouped propositions is on the y-axis and \uc extraction (without grouped propositions) on the x-axis. The off-center diagonal indicates where $y = 2 x$.}
\end{figure}
}


\clearpage
{
\newcommand{\tabfield}[1]{%
\begin{minipage}{0.29\linewidth}%
\begin{center}%
\includegraphics[type=pdf,ext=.pdf,read=.pdf,scale=0.36,trim=1.1cm 0.0cm 1.3cm 0.5cm,clip]{#1}%
\end{center}%
\end{minipage}%
}%

\newcommand{\tablineheader}[1]{%
\begin{minipage}{0.02\linewidth}%
\begin{sideways}%
\hspace*{1.75em}{#1}%
\end{sideways}%
\end{minipage}%
}

\begin{figure}[h]
\hspace*{-0.5em}\begin{tabular}{cccc}
\tablineheader{run time} &
\tabfield{gnuplot/application.proofdeletionnone_vs_pltlmupnone.time} &
\tabfield{gnuplot/crafted.proofdeletionnone_vs_pltlmupnone.time} &
\tabfield{gnuplot/random.proofdeletionnone_vs_pltlmupnone.time} \\
\\
\tablineheader{memory} &
\tabfield{gnuplot/application.proofdeletionnone_vs_pltlmupnone.mem} &
\tabfield{gnuplot/crafted.proofdeletionnone_vs_pltlmupnone.mem} &
\tabfield{gnuplot/random.proofdeletionnone_vs_pltlmupnone.mem} \\
\\[-1.5ex]
& \hspace*{3.75em}application & \hspace*{3.75em}crafted & \hspace*{3.75em}random \\
\\[-1.5ex]
\end{tabular}
\caption{\label{fig:proofdeletionvspltlmup} Comparison of minimal \uc extraction with \trp (x-axis) and with \pltlmup (y-axis) in terms of run time (in seconds) and memory (in MB) separated by categories \benchmark{application}, \benchmark{crafted}, and \benchmark{random}. \trp produced only time outs but no memory outs. \pltlmup produced both. \trp likely explores a larger search space during minimization and possibly produces more fine-grained \ucs than \pltlmup in that it computes a minimal \uc in LTL as a syntax tree according to Def.~\ref{def:snfltlmincore}, whereas \pltlmup computes a minimal set of unsatisfiable top-level conjuncts.}
\vspace{-3ex}
\end{figure}
}

{
\newcommand{\tabfield}[1]{%
\begin{minipage}{0.29\linewidth}%
\begin{center}%
\includegraphics[type=pdf,ext=.pdf,read=.pdf,scale=0.36,trim=1.1cm 0.0cm 1.3cm 0.5cm,clip]{#1}%
\end{center}%
\end{minipage}%
}%

\newcommand{\tablineheader}[1]{%
\begin{minipage}{0.02\linewidth}%
\begin{sideways}%
\hspace*{2em}{#1}%
\end{sideways}%
\end{minipage}%
}

\begin{figure}[h]
\hspace*{-0.5em}\begin{tabular}{cccc}
\tablineheader{run time} &
\tabfield{gnuplot/application.proofdeletionnone_vs_procminenone.time} &
\tabfield{gnuplot/crafted.proofdeletionnone_vs_procminenone.time} &
\tabfield{gnuplot/random.proofdeletionnone_vs_procminenone.time} \\
\\
\tablineheader{memory} &
\tabfield{gnuplot/application.proofdeletionnone_vs_procminenone.mem} &
\tabfield{gnuplot/crafted.proofdeletionnone_vs_procminenone.mem} &
\tabfield{gnuplot/random.proofdeletionnone_vs_procminenone.mem} \\
\\[-1.5ex]
& \hspace*{3.75em}application & \hspace*{3.75em}crafted & \hspace*{3.75em}random \\
\\[-1.5ex]
\end{tabular}
\caption{\label{fig:proofdeletionvsprocmine} Comparison of minimal \uc extraction with \trp (x-axis) and with \procmine (y-axis) in terms of run time (in seconds) and memory (in MB) separated by categories \benchmark{application}, \benchmark{crafted}, and \benchmark{random}. \trp produced only time outs but no memory outs. \procmine produced both. \trp likely explores a larger search space during minimization and possibly produces more fine-grained \ucs than \procmine in that it computes a minimal \uc in LTL as a syntax tree according to Def.~\ref{def:snfltlmincore}, whereas \procmine computes a minimal set of unsatisfiable top-level conjuncts.}
\vspace{-5ex}
\end{figure}
}

{
\newcommand{\tabfield}[1]{%
\begin{minipage}{0.29\linewidth}%
\begin{center}%
\includegraphics[type=pdf,ext=.pdf,read=.pdf,scale=0.36,trim=1.1cm 0.0cm 1.3cm 0.5cm,clip]{#1}%
\end{center}%
\end{minipage}%
}%

\newcommand{\tablineheader}[1]{%
\begin{minipage}{0.02\linewidth}%
\begin{sideways}%
\hspace*{2em}{#1}%
\end{sideways}%
\end{minipage}%
}

\begin{figure}[h]
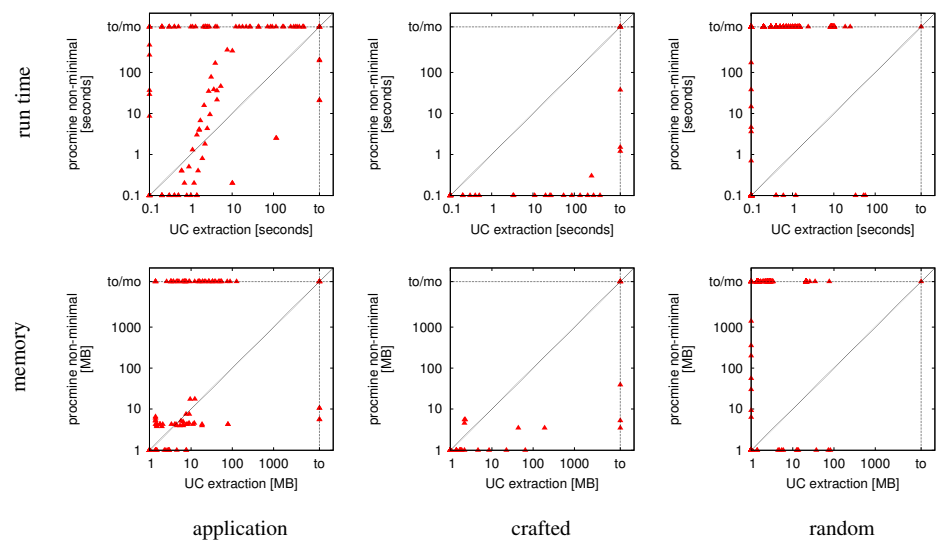

\hspace*{-0.5em}\begin{tabular}{cccc}
\tablineheader{run time} &
\tabfield{gnuplot/application.proofnone_vs_procminenomusnone.time} &
\tabfield{gnuplot/crafted.proofnone_vs_procminenomusnone.time} &
\tabfield{gnuplot/random.proofnone_vs_procminenomusnone.time} \\
\\
\tablineheader{memory} &
\tabfield{gnuplot/application.proofnone_vs_procminenomusnone.mem} &
\tabfield{gnuplot/crafted.proofnone_vs_procminenomusnone.mem} &
\tabfield{gnuplot/random.proofnone_vs_procminenomusnone.mem} \\
\\
& \hspace*{3.75em}application & \hspace*{3.75em}crafted & \hspace*{3.75em}random \\
\\
\end{tabular}
\caption{\label{fig:proofvsprocminenomus} Comparison of (non-minimal) \uc extraction with \trp (x-axis) and with \procmine (y-axis) in terms of run time (in seconds) and memory (in MB) separated by categories \benchmark{application}, \benchmark{crafted}, and \benchmark{random}. \trp produced only time outs but no memory outs. \procmine produced both. \trp possibly produces more fine-grained \ucs than \procmine in that it computes a \uc in LTL as a syntax tree according to Def.~\ref{def:snfltlmincore}, whereas \procmine computes a set of unsatisfiable top-level conjuncts.}
\end{figure}
}

\fi
\end{document}